\definecolor{DarkBlue}{rgb}{0,0.18,0.55}
\theoremstyle{plain}
\newtheorem{theorem}{Theorem}
\newtheorem{lemma}{Lemma}
\newtheorem{corollary}{Corollary}
\newtheorem{definition}{Definition}
\newtheorem{assumption}{Assumption}
\theoremstyle{definition}
\newtheorem{remark}{Remark}
\newtheorem{example}{Example}
\newcommand*{\QEDA}{\hfill\ensuremath{\square}}
\newcommand{\Ep}{{\mathrm{E}}}
\newcommand{\argmin}{\text{argmin}}
\renewcommand{\Pr}{{\mathrm{P}}}
\renewcommand{\hat}{\widehat}
\renewcommand{\tilde }{\widetilde}
\date{}
\begin{document}
\title{Nonparametric Instrumental Variable Estimation Under Monotonicity\footnote{First version: January 2014. This version: \today. We thank Alex Belloni, Richard Blundell, St\'{e}phane Bonhomme, Moshe Buchinsky, Matias Cattaneo, Xiaohong Chen, Victor Chernozhukov, Andrew Chesher, Joachim Freyberger, Jinyong Hahn, Dennis Kristensen, Simon Lee, Zhipeng Liao, Rosa Matzkin, Eric Mbakop, Ulrich M\"{u}ller, Markus Rei{\ss}, Susanne Schennach, Azeem Shaikh, and Vladimir Spokoiny for useful comments and discussions.}}


\author{Denis Chetverikov\thanks{Department of Economics, University of California at Los Angeles, 315 Portola Plaza, Bunche Hall, Los Angeles, CA 90024, USA; E-Mail address: \texttt{chetverikov@econ.ucla.edu}.} \and Daniel Wilhelm\thanks{Department of Economics, University College London, Gower Street, London WC1E 6BT, United Kingdom; E-Mail address: \texttt{d.wilhelm@ucl.ac.uk}. The author gratefully acknowledges financial support from the ESRC Centre for Microdata Methods and Practice at IFS (RES-589-28-0001).}}
\maketitle
\begin{abstract}
	The ill-posedness of the inverse problem of recovering a regression function in a nonparametric instrumental variable model leads to estimators that may suffer from a very slow, logarithmic rate of convergence. In this paper, we show that restricting the problem to models with monotone regression functions and monotone instruments significantly weakens the ill-posedness of the problem. In stark contrast to the existing literature, the presence of a monotone instrument implies boundedness of our measure of ill-posedness when restricted to the space of monotone functions. Based on this result we derive a novel non-asymptotic error bound for the constrained estimator that imposes monotonicity of the regression function. For a given sample size, the bound is independent of the degree of ill-posedness as long as the regression function is not too steep. As an implication, the bound allows us to show that the constrained estimator converges at a fast, polynomial rate, independently of the degree of ill-posedness, in a large, but slowly shrinking neighborhood of constant functions. Our simulation study demonstrates significant finite-sample performance gains from imposing monotonicity even when the regression function is rather far from being a constant. We apply the constrained estimator to the problem of estimating gasoline demand functions from U.S. data.
\end{abstract}




\newpage
\section{Introduction}\label{sec: introduction}
Despite the pervasive use of linear instrumental variable methods in empirical research, their nonparametric counterparts are far from enjoying similar popularity. Perhaps two of the main reasons for this originate from the observation that point-identification of the regression function in the nonparametric instrumental variable (NPIV) model requires completeness assumptions, which have been argued to be strong (\cite{Santos2012}) and non-testable (\cite{canay2013re}), and from the fact that the NPIV model is ill-posed, which may cause regression function estimators in this model to suffer from a very slow, logarithmic rate of convergence (e.g. \cite{blundell2007}).

In this paper, we explore the possibility of imposing shape restrictions to improve statistical properties of the NPIV estimators and to achieve (partial) identification of the NPIV model in the absence of completeness assumptions.
We study the NPIV model
\begin{equation}\label{eq:model}
	Y = g(X) + \varepsilon,\qquad \Ep[\varepsilon | W] =0,
\end{equation}
where $Y$ is a dependent variable, $X$ an endogenous regressor, and $W$ an instrumental variable (IV). We are interested in identification and estimation of the nonparametric regression function $g$ based on a random sample of size $n$ from the distribution of $(Y,X,W)$. We impose two monotonicity conditions: (i) monotonicity of the regression function $g$ (we assume that $g$ is increasing\footnote{All results in the paper hold also when $g$ is decreasing. In fact, as we show in Section~\ref{sec: identification} the sign of the slope of $g$ is identified under our monotonicity conditions.}) and (ii) monotonicity of the reduced form relationship between the endogenous regressor $X$ and the instrument $W$ in the sense that the conditional distribution of $X$ given $W$ corresponding to higher values of $W$ first-order stochastically dominates the same conditional distribution corresponding to lower values of $W$ (the monotone IV assumption). 

We show that these two monotonicity conditions together significantly change the structure of the NPIV model, and weaken its ill-posedness. In particular, we demonstrate that under the second condition, a slightly modified version of the sieve measure of ill-posedness defined in \cite{blundell2007} is bounded uniformly over the dimension of the sieve space, when restricted to the set of monotone functions; see Section \ref{sec: MIVE} for details. As a result, under our two monotonicity conditions, the constrained NPIV estimator that imposes monotonicity of the regression function $g$ possesses a fast rate of convergence in a large but slowly shrinking neighborhood of constant functions. 

More specifically, we derive a new non-asymptotic error bound for the constrained estimator. The bound exhibits two regimes. The first regime applies when the function $g$ is not too steep, and the bound in this regime is independent of the sieve measure of ill-posedness, which slows down the convergence rate of the unconstrained estimator. In fact, under some further conditions, the bound in the first regime takes the following form: with high probability,
$$
\|\hat{g}^c - g\|_{2,t} \leq C\Big(\Big(\frac{K\log n}{n}\Big)^{1/2} + K^{-s}\Big)
$$
where $\hat{g}^c$ is the constrained estimator, $\|\cdot\|_{2,t}$ an appropriate $L^2$-norm, $K$ the number of series terms in the estimator $\hat{g}^c$, $s$ the number of derivatives of the function $g$, and $C$ some constant; see Section \ref{sec: estimation} for details. Thus, the constrained estimator $\hat{g}^c$ has fast rate of convergence in the first regime, and the bound in this regime is of the same order, up to a log-factor, as that for series estimators of conditional mean functions. The second regime applies when the function $g$ is sufficiently steep. In this regime, the bound is similar to that for the unconstrained NPIV estimators. The steepness level separating the two regimes depends on the sample size $n$ and decreases as the sample size $n$ grows large. Therefore, for a given increasing function $g$, if the sample size $n$ is not too large, the bound is in its first regime, where the constrained estimator $\hat{g}^c$ does not suffer from ill-posedness of the model. As the sample size $n$ grows large, however, the bound eventually switches to the second regime, where ill-posedness of the model undermines the statistical properties of the constrained estimator $\hat{g}^c$ similarly to the case of the unconstrained estimator.

Intuitively, existence of the second regime of the bound is well expected. Indeed, if the function $g$ is strictly increasing, it lies in the interior of the constraint that $g$ is increasing. Hence, the constraint does not bind asymptotically so that, in sufficiently large samples, the constrained estimator coincides with the unconstrained one and the two estimators share the same convergence rate. In {\em finite samples}, however, the constraint binds with non-negligible probability even if $g$ is strictly increasing. The first regime of our {\em non-asymptotic} bound captures this finite-sample phenomenon, and improvements from imposing the monotonicity constraint on $g$ in this regime can be understood as a boundary effect. Importantly, and perhaps unexpectedly, we show that under the monotone IV assumption, this boundary effect is so strong that ill-posedness of the problem completely disappears in the first regime.\footnote{Even though we have established the result that ill-posedness disappears in the first regime under the monotone IV assumption, currently we do not know whether this assumption is necessary for the result.} In addition, we demonstrate via our analytical results as well as simulations that this boundary effect can be strong even far away from the boundary and/or in large samples.


Our simulation experiments confirm these theoretical findings and demonstrate dramatic finite-sample performance improvements of the constrained relative to the unconstrained NPIV estimator when the monotone IV assumption is satisfied. Imposing the monotonicity constraint on $g$ removes the estimator's non-monotone oscillations due to sampling noise, which in ill-posed inverse problems can be particularly pronounced. Therefore, imposing the monotonicity constraint significantly reduces variance while only slightly increasing bias.


In addition, we show that in the absence of completeness assumptions, that is, when the NPIV model is not point-identified, our monotonicity conditions have non-trivial identification power, and can provide partial identification of the model.

We regard both monotonicity conditions as natural in many economic applications. In fact, both of these conditions often directly follow from economic theory. Consider the following generic example. Suppose an agent chooses input $X$ (e.g. schooling) to produce an outcome $Y$ (e.g. life-time earnings) such that $Y=g(X) + \varepsilon$, where $\varepsilon$ summarizes determinants of outcome other than $X$. The cost of choosing a level $X=x$ is $C(x,W,\eta)$, where $W$ is a cost-shifter (e.g. distance to college) and $\eta$ represents (possibly vector-valued) unobserved heterogeneity in costs (e.g. family background, a family's taste for education, variation in local infrastructure). The agent's optimization problem can then be written as
$$ X = \arg\max_x \left\{ g(x)+\varepsilon - c(x,W,\eta) \right\} $$
so that, from the first-order condition of this optimization problem,
\begin{equation}\label{eq: dXdW}
	\frac{\partial X}{\partial W} = \frac{\frac{\partial^2 c}{\partial X \partial W} }{\frac{\partial^2 g}{\partial X^2} - \frac{\partial^2 c}{\partial X^2}} \geq 0
\end{equation}
if marginal cost are decreasing in $W$ (i.e. $\partial^2 c/\partial X \partial W \leq 0$), marginal cost are increasing in $X$ (i.e. $\partial^2 c/\partial X^2 > 0$), and the production function is concave (i.e. $\partial^2 g/\partial X^2 \leq 0$). As long as $W$ is independent of the pair $(\varepsilon,\eta)$, condition \eqref{eq: dXdW} implies our monotone IV assumption and $g$ increasing corresponds to the assumption of a monotone regression function. Dependence between $\eta$ and $\varepsilon$ generates endogeneity of $X$, and independence of $W$ from $(\varepsilon,\eta)$ implies that $W$ can be used as an instrument for $X$.

Another example is the estimation of Engel curves. In this case, the outcome variable $Y$ is the budget share of a good, the endogenous variable $X$ is total expenditure, and the instrument $W$ is gross income. Our monotonicity conditions are plausible in this example because for normal goods such as food-in, the budget share is decreasing in total expenditure, and total expenditure increases with gross income. Finally, consider the estimation of (Marshallian) demand curves. The outcome variable $Y$ is quantity of a consumed good, the endogenous variable $X$ is the price of the good, and $W$ could be some variable that shifts production cost of the good. For a normal good, the Slutsky inequality predicts $Y$ to be decreasing in price $X$ as long as income effects are not too large. Furthermore, price is increasing in production cost and, thus, increasing in the instrument $W$, and so our monotonicity conditions are plausible in this example as well.

Both of our monotonicity assumptions are testable. For example, a test of the monotone IV condition can be found in \cite{lee2009fd}. In this paper, we extend their results by deriving an {\em adaptive} test of the monotone IV condition, with the value of the involved smoothness parameter chosen in a data-driven fashion. This adaptation procedure allows us to construct a test with desirable power properties when the degree of smoothness of the conditional distribution of $X$ given $W$ is unknown. Regarding our first monotonicity condition, to the best of our knowledge, there are no procedures in the literature that consistently test monotonicity of the function $g$ in the NPIV model (\ref{eq:model}). We consider such procedures in a separate project and, in this paper, propose a simple test of monotonicity of $g$ given that the monotone IV condition holds.

\cite{Matzkin:1994kx} advocates the use of shape restrictions in econometrics and argues that economic theory often provides restrictions on functions of interest, such as monotonicity, concavity, and/or Slutsky symmetry. In the context of the NPIV model \eqref{eq:model}, \cite{Freyberger:2013fk} show that, in the absence of point-identification, shape restrictions may yield informative bounds on functionals of $g$ and develop inference procedures when the regressor $X$ and the instrument $W$ are discrete. \cite{Blundell:2013fk} demonstrate via simulations that imposing Slutsky inequalities in a quantile NPIV model for gasoline demand improves finite-sample properties of the NPIV estimator. \cite{grasmair2013tr} study the problem of demand estimation imposing various constraints implied by economic theory, such as Slutsky inequalities, and derive the convergence rate of a constrained NPIV estimator under an abstract projected source condition. Our results are different from theirs because we focus on non-asymptotic error bounds, with special emphasis on properties of our estimator in {\em the neighborhood} of the boundary, we derive our results under easily interpretable, low level conditions, and we find that our estimator does not suffer from ill-posedness of the problem in a large but slowly shrinking neighborhood of constant functions.

\paragraph{Other related literature.}

The NPIV model has received substantial attention in the recent econometrics literature. \cite{Newey:2003p2167}, \cite{Hall:2005p2135}, \cite{blundell2007}, and \cite{darolles2011} study identification of the NPIV model \eqref{eq:model} and propose estimators of the regression function $g$. See \cite{horowitz2011, H14} for recent surveys and further references.
In the mildly ill-posed case, \cite{Hall:2005p2135} derive the minimax risk lower bound in $L^2$-norm and show that their estimator achieves this lower bound. Under different conditions, \cite{chen2011er} derive a similar bound for the mildly and the severely ill-posed case and show that the estimator by \cite{blundell2007} achieves this bound. \cite{Chen:2013fk} establish minimax risk bounds in the sup-norm, again both for the mildly and the severely ill-posed case. The optimal convergence rates in the severely ill-posed case were shown to be logarithmic, which means that the slow convergence rate of existing estimators is not a deficiency of those estimators but rather an intrinsic feature of the statistical inverse problem.

There is also large statistics literature on nonparametric estimation of monotone functions when the regressor is exogenous, i.e. $W=X$, so that $g$ is a conditional mean function. This literature can be traced back at least to \cite{brunk1955gf}. Surveys of this literature and further references can be found in \cite{yatchew1998gf}, \cite{delecroix2000ee}, and \cite{gijbels2004gf}. For the case in which the regression function is both smooth and monotone, many different ways of imposing monotonicity on the estimator have been studied; see, for example, \cite{mukerjee1988}, \cite{Cheng:1981fd}, \cite{wright1981}, \cite{friedman1984gf}, \cite{ramsay1988fd}, \cite{mammen1991fd}, \cite{ramsay1998fg}, \cite{mammen1999rr}, \cite{hall2001fd}, \cite{mammen2001ff}, and \cite{dette2006gg}. Importantly, under the mild assumption that the estimators consistently estimate the derivative of the regression function, the standard unconstrained nonparametric regression estimators are known to be monotone with probability approaching one when the regression function is strictly increasing. Therefore, such estimators have the same rate of convergence as the corresponding constrained estimators that impose monotonicity (\cite{mammen1991fd}). As a consequence, gains from imposing a monotonicity constraint can only be expected when the regression function is close to the boundary of the constraint and/or in finite samples. \cite{Zhang:2002dq} and \cite{Chatterjee:2013eu} formalize this intuition by deriving risk bounds of the isotonic (monotone) regression estimators and showing that these bounds imply fast convergence rates when the regression function has flat parts. Our results are different from theirs because we focus on the endogenous case with $W\neq X$ and study the impact of monotonicity constraints on the ill-posedness property of the NPIV model which is absent in the standard regression problem.

\paragraph{Notation.}

For a differentiable function $f:\mathbb{R}\to\mathbb{R}$, we use $Df(x)$ to denote its derivative.  When a function $f$ has several arguments, we use $D$ with an index to denote the derivative of $f$ with respect to corresponding argument; for example, $D_w f(w,u)$ denotes the partial derivative of $f$ with respect to $w$.  
For random variables $A$ and $B$, we denote by $f_{A,B}(a,b)$, $f_{A|B}(a,b)$, and $f_{A}(a)$ the joint, conditional and marginal densities of $(A,B)$, $A$ given $B$, and $A$, respectively. Similarly, we let $F_{A,B}(a,b)$, $F_{A|B}(a,b)$, and $F_{A}(a)$ refer to the corresponding cumulative distribution functions. For an operator $T:L^2[0,1]\to L^2[0,1]$, we let $\|T\|_2$ denote the operator norm defined as 
$$
\|T\|_2 = \sup_{h\in L^2[0,1]:\,\|h\|_2=1} \|Th\|_2.
$$ Finally, by increasing and decreasing we mean that a function is non-decreasing and non-increasing, respectively.

\paragraph{Outline.}

The remainder of the paper is organized as follows. In the next section, we analyze ill-posedness of the model (\ref{eq:model}) under our monotonicity conditions and derive a useful bound on a restricted measure of ill-posedness for the model (\ref{eq:model}). Section~\ref{sec: estimation} discusses the implications of our monotonicity assumptions for estimation of the regression function $g$. In particular, we show that the rate of convergence of our estimator is always not worse than that of unconstrained estimators but may be much faster in a large, but slowly shrinking, neighborhood of constant functions. Section~\ref{sec: identification} shows that our monotonicity conditions have non-trivial identification power. Section~\ref{sec: testing} provides new tests of our two monotonicity assumptions. In Section~\ref{sec: simulations}, we present results of a Monte Carlo simulation study that demonstrates large gains in performance of the constrained estimator relative to the unconstrained one. Finally, Section~\ref{sec: gasoline demand} applies the constrained estimator to the problem of estimating gasoline demand functions. All proofs are collected in the appendix.

\section{Boundedness of the Measure of Ill-posedness under Monotonicity}\label{sec: MIVE}

In this section, we discuss the sense in which the ill-posedness of the NPIV model \eqref{eq:model} is weakened by imposing our monotonicity conditions. In particular, we introduce a restricted measure of ill-posedness for this model (see equation \eqref{eq: tau definition}) and show that, in stark contrast to the existing literature, our measure is bounded (Corollary \ref{co:bound_tau}) when the monotone IV condition holds. 

The NPIV model requires solving the equation $\Ep[Y|W] = \Ep[g(X)|W]$ for the function $g$. Letting $T:L^2[0,1]\to L^2[0,1]$ be the linear operator defined by $(Th)(w):=\Ep[h(X) | W=w] f_W(w)$ and denoting $m(w) := \Ep[Y|W=w]f_W(w)$, we can express this equation as
\begin{equation}\label{eq: functional equation}
T g = m.
\end{equation}
In finite-dimensional regressions, the operator $T$ corresponds to a finite-dimensional matrix whose singular values are typically assumed to be nonzero (rank condition). Therefore, the solution $g$ is continuous in $m$, and consistent estimation of $m$ at a fast convergence rate leads to consistent estimation of $g$ at the same fast convergence rate. In infinite-dimensional models, however, $T$ is an operator that, under weak conditions, possesses infinitely many singular values that tend to zero. Therefore, small perturbations in $m$ may lead to large perturbations in $g$. This discontinuity renders equation (\ref{eq: functional equation}) ill-posed and introduces challenges in estimation of the NPIV model (\ref{eq:model}) that are not present in parametric regressions nor in nonparametric regressions with exogenous regressors; see \cite{horowitz2011, H14} for a more detailed discussion.

In this section, we show that, under our monotonicity conditions, there exists a finite constant $\bar{C}$ such that for any monotone function $g'$ and any constant function $g''$, with $m'=T g'$ and $m''=T g''$, we have
$$ \|g'-g''\|_{2,t} \leq \bar{C} \|m'-m''\|_2,$$
where $\|\cdot\|_{2,t}$ is a truncated $L^2$-norm defined below. This result plays a central role in our derivation of the upper bound on the restricted measure of ill-posedness, of identification bounds, and of fast convergence rates of a constrained NPIV estimator that imposes monotonicity of $g$ in a large but slowly shrinking neighborhood of constant functions.

We now introduce our assumptions. Let $0\leq x_1<\tilde{x}_1<\tilde{x}_2<x_2\leq 1$ and $0\leq w_1<w_2\leq 1$ be some constants. We implicitly assume that $x_1$, $\tilde{x}_1$, and $w_1$ are close to $0$ whereas $x_2$, $\tilde{x}_2$, and $w_2$ are close to $1$. Our first assumption is the monotone IV condition that requires a monotone relationship between the endogenous regressor $X$ and the instrument $W$.

\begin{assumption}[Monotone IV]\label{as: monotone iv}
	For all $x,w',w''\in(0,1)$,
	\begin{equation}
		w'\leq w''\qquad \Rightarrow \qquad F_{X|W}(x|w') \geq F_{X|W}(x|w'') \label{eq: fosd}.
	\end{equation}
	Furthermore, there exists a constant $C_F>1$ such that
	\begin{equation}\label{eq: lower cdf bound}
		F_{X|W}(x|w_1) \geq C_F F_{X|W}(x|w_2),\qquad \forall x\in(0,x_2)
	\end{equation}
	and
	\begin{equation}\label{eq: upper cdf bound}
		C_F(1-F_{X|W}(x|w_1)) \leq 1- F_{X|W}(x|w_2),\qquad \forall x\in(x_1,1)
	\end{equation}
\end{assumption}

Assumption~\ref{as: monotone iv} is crucial for our analysis. The first part, condition \eqref{eq: fosd}, requires first-order stochastic dominance of the conditional distribution of the endogenous regressor $X$ given the instrument $W$ as we increase the value of the instrument $W$. This condition \eqref{eq: fosd} is testable; see, for example, \cite{lee2009fd}. In Section \ref{sec: testing} below, we extend the results of \cite{lee2009fd} by providing an {\em adaptive} test of the first-order stochastic dominance condition (\ref{eq: fosd}).

The second and third parts of Assumption~\ref{as: monotone iv}, conditions \eqref{eq: lower cdf bound} and \eqref{eq: upper cdf bound}, strengthen the stochastic dominance condition \eqref{eq: fosd} in the sense that the conditional distribution is required to ``shift to the right'' by a {\em strictly} positive amount at least between two values of the instrument, $w_1$ and $w_2$, so that the instrument is not redundant. Conditions \eqref{eq: lower cdf bound} and \eqref{eq: upper cdf bound} are rather weak as they require such a shift only in some intervals $(0,x_2)$ and $(x_1,1)$, respectively.

Condition \eqref{eq: fosd} can be equivalently stated in terms of monotonicity with respect to the instrument $W$ of the reduced form first stage function. Indeed, by the Skorohod representation, it is always possible to construct a random variable $U$ distributed uniformly on $[0,1]$ such that $U$ is independent of $W$, and equation $X=r(W,U)$ holds for the reduced form first stage function $r(w,u):=F^{-1}_{X|W}(u|w):=\inf\{x: F_{X|W}(x|w)\geq u\}$. Therefore, condition (\ref{eq: fosd}) is equivalent to the assumption that the function $w\mapsto r(w,u)$ is increasing for all $u\in[0,1]$.
Notice, however, that our condition \eqref{eq: fosd} allows for general unobserved heterogeneity of dimension larger than one, for instance as in Example~\ref{ex: uniforms} below. 

Condition \eqref{eq: fosd} is related to a corresponding condition in \cite{Kasy2014tr} who assumes that the (structural) first stage has the form $X=\tilde{r}(W,\tilde{U})$ where $\tilde{U}$, representing (potentially multidimensional) unobserved heterogeneity, is independent of $W$, and the function $w\mapsto \tilde{r}(w,\tilde{u})$ is increasing for all values $\tilde{u}$. Kasy employs his condition for identification of (nonseparable) triangular systems with multidimensional unobserved heterogeneity whereas we use our condition \eqref{eq: fosd} to derive a useful bound on the restricted measure of ill-posedness and to obtain a fast rate of convergence of a monotone NPIV estimator of $g$ in the (separable) model \eqref{eq:model}. 
Condition \eqref{eq: fosd} is not related to the monotone IV assumption in the influential work by \cite{manski2000} which requires the function $w\mapsto \Ep[\varepsilon|W=w]$ to be increasing. Instead, we maintain the mean independence condition $\Ep[\varepsilon|W]=0$.

\begin{assumption}[Density]\label{as: density}
	(i) The joint distribution of the pair $(X,W)$ is absolutely continuous with respect to the Lebesgue measure on $[0,1]^2$ with the density $f_{X,W}(x,w)$ satisfying $\int_0^1\int_0^1 f_{X,W}(x,w)^2d x d w\leq C_T$ for some finite constant $C_T$. (ii) There exists a constant $c_f>0$ such that $f_{X|W}(x|w) \geq c_f$ for all $x\in [x_1,x_2]$ and $w\in\{w_1,w_2\}$. (iii) There exists constants $0<c_W\leq C_W<\infty$ such that $c_W\leq f_{W}(w) \leq C_W$ for all $w\in[0, 1]$.
\end{assumption}

This is a mild regularity assumption. The first part of the assumption implies that the operator $T$ is compact. The second and the third parts of the assumption require the conditional distribution of $X$ given $W=w_1$ or $w_2$ and the marginal distribution of $W$ to be bounded away from zero over some intervals. Recall that we have $0\leq x_1<x_2\leq 1$ and $0\leq w_1<w_2\leq 1$. We could simply set $[x_1,x_2]=[w_1,w_2]=[0,1]$ in the second part of the assumption but having $0<x_1<x_2<1$ and $0<w_1<w_2<1$ is required to allow for densities such as the normal, which, even after a transformation to the interval $[0,1]$, may not yield a conditional density $f_{X|W}(x|w)$ bounded away from zero; see Example~\ref{ex: normal density} below. Therefore, we allow for the general case $0\leq x_1<x_2\leq 1$ and $0\leq w_1<w_2\leq 1$. The restriction $f_W(w)\leq C_W$ for all $w\in[0,1]$ imposed in Assumption \ref{as: density} is not actually required for the results in this section, but rather those of Section~\ref{sec: estimation}.

We now provide two examples of distributions of $(X,W)$ that satisfy Assumptions~\ref{as: monotone iv} and \ref{as: density}, and show two possible ways in which the instrument $W$ can shift the conditional distribution of $X$ given $W$. Figure~\ref{fig:cdfXW} displays the corresponding conditional distributions.

\begin{example}[Normal density]\label{ex: normal density}
 Let $(\tilde{X},\tilde{W})$ be jointly normal with mean zero, variance one, and correlation $0<\rho<1$. Let $\Phi(u)$ denote the distribution function of a $N(0,1)$ random variable. Define $X=\Phi(\tilde{X})$ and $W=\Phi(\tilde{W})$. Since $\tilde{X}=\rho\tilde{W}+(1-\rho^2)^{1/2}U$ for some standard normal random variable $U$ that is independent of $\tilde{W}$, we have $$X=\Phi(\rho\Phi^{-1}(W)+(1-\rho^2)^{1/2}U)$$ where $U$ is independent of $W$. Therefore, the pair $(X,W)$ satisfies condition \eqref{eq: fosd} of our monotone IV Assumption \ref{as: monotone iv}. Lemma~\ref{lem: normal density} in the appendix verifies that the remaining conditions of Assumption~\ref{as: monotone iv} as well as Assumption \ref{as: density} are also satisfied.\QEDA
\end{example} 

\begin{example}[Two-dimensional unobserved heterogeneity]\label{ex: uniforms}
Let $X=U_1 + U_2 W$, where $U_1, U_2, W$ are mutually independent, $U_1,U_2\sim U[0,1/2]$ and $W\sim U[0,1]$. Since $U_2$ is positive, it is straightforward to see that the stochastic dominance condition \eqref{eq: fosd} is satisfied. Lemma~\ref{lem: uniforms} in the appendix shows that the remaining conditions of Assumption~\ref{as: monotone iv} as well as Assumption \ref{as: density} are also satisfied.\QEDA
\end{example}

Figure~\ref{fig:cdfXW} shows that, in Example~\ref{ex: normal density}, the conditional distribution at two different values of the instrument is shifted to the right at every value of $X$, whereas, in Example~\ref{ex: uniforms}, the conditional support of $X$ given $W=w$ changes with $w$, but the positive shift in the cdf of $X | W=w$ occurs only for values of $X$ in a subinterval of $[0,1]$.

Before stating our results in this section, we introduce some additional notation. Define the truncated $L^2$-norm $\|\cdot\|_{2,t}$ by
$$
\|h\|_{2,t} := \left( \int_{\tilde{x}_1}^{\tilde{x}_2} h(x)^2 d x \right)^{1/2},\quad h\in L^2[0,1].
$$
Also, let $\mathcal{M}$ denote the set of all monotone functions in $L^2[0,1]$. Finally, define $\zeta:=(c_f, c_W, C_F, C_T,  w_1,  w_2, x_1, x_2, \tilde{x}_1, \tilde{x}_2)$. Below is our first main result in this section.

\begin{theorem}[Lower Bound on $T$]\label{thm: main result 1}
	Let Assumptions \ref{as: monotone iv} and \ref{as: density} be satisfied. Then there exists a finite constant $\bar{C}$ depending only on $\zeta$ such that
	\begin{equation}\label{eq: bounded inverse}
	\|h\|_{2,t}\leq \bar{C}\|T h\|_2
	\end{equation}
 for any function $h\in \mathcal{M}$.
\end{theorem}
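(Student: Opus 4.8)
The plan is to collapse both sides of \eqref{eq: bounded inverse} onto the pair of numbers $\phi(w_1),\phi(w_2)$, where $\phi(w):=\Ep[h(X)|W=w]=(Th)(w)/f_W(w)$, and then link the two reductions. We may assume $h$ is increasing, since $T(-h)=-Th$ and both norms are invariant under $h\mapsto-h$; take $h$ right-continuous and let $dh\ge 0$ be its Lebesgue--Stieltjes measure (finite mass on any $[\tilde x_1,\tilde x_2]$ bounded away from $\{0,1\}$, and all integrals below finite by Cauchy--Schwarz since $h,f_{X|W}(\cdot|w)\in L^2$). The engine is the Fubini identity, valid for every $c\in(0,1)$,
\[
\phi(w)=h(c)+\int_{(c,1]}\bigl(1-F_{X|W}(y|w)\bigr)\,dh(y)-\int_{(0,c]}F_{X|W}(y|w)\,dh(y)=:h(c)+B_c(w)-A_c(w),
\]
with $A_c(w),B_c(w)\ge 0$; here $F_{X|W}(\cdot|w)$ is continuous by Assumption~\ref{as: density}(i), so $F_{X|W}(y^-|w)=F_{X|W}(y|w)$.

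First I would lower-bound $\|Th\|_2$. Taking $c=0$ and using the stochastic dominance condition \eqref{eq: fosd} of Assumption~\ref{as: monotone iv} (so that $1-F_{X|W}(y|\cdot)$ is increasing), the function $\phi$ is increasing in $w$. Hence, with $f_W\ge c_W$ (Assumption~\ref{as: density}(iii)),
\[
\|Th\|_2^2=\int_0^1\phi(w)^2f_W(w)^2\,dw\ \ge\ c_W^2\Bigl(\int_0^{w_1}\phi^2+\int_{w_2}^1\phi^2\Bigr)\ \ge\ c_W^2\,c_\phi\bigl(\phi(w_1)^2+\phi(w_2)^2\bigr),
\]
where $c_\phi=\tfrac12\min(w_1,1-w_2)$ and the last step is a short sign case analysis using monotonicity of $\phi$ (e.g. if $\phi(w_1),\phi(w_2)\ge 0$ then $\phi\ge\phi(w_2)$ on $[w_2,1]$ and $\phi(w_2)^2\ge\tfrac12(\phi(w_1)^2+\phi(w_2)^2)$; if $\phi(w_1)<0<\phi(w_2)$ one uses both intervals).

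The heart of the argument is to bound $h$ on $[\tilde x_1,\tilde x_2]$ by $\phi(w_1),\phi(w_2)$. Apply the identity with $c=\tilde x_1$. Since $(\tilde x_1,1]\subset(x_1,1]$, condition \eqref{eq: upper cdf bound} (the endpoint $y=1$ being trivial) gives $1-F_{X|W}(y|w_1)\le C_F^{-1}(1-F_{X|W}(y|w_2))$ on that range, so $B_{\tilde x_1}(w_1)\le C_F^{-1}B_{\tilde x_1}(w_2)$; since $(0,\tilde x_1]\subset(0,x_2)$, condition \eqref{eq: lower cdf bound} gives $F_{X|W}(y|w_2)\le C_F^{-1}F_{X|W}(y|w_1)$, so $A_{\tilde x_1}(w_2)\le C_F^{-1}A_{\tilde x_1}(w_1)$. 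Subtracting the identity at $w_1$ from that at $w_2$,
\[
\phi(w_2)-\phi(w_1)=\bigl(B_{\tilde x_1}(w_2)-B_{\tilde x_1}(w_1)\bigr)+\bigl(A_{\tilde x_1}(w_1)-A_{\tilde x_1}(w_2)\bigr)\ \ge\ (1-C_F^{-1})\bigl(A_{\tilde x_1}(w_1)+B_{\tilde x_1}(w_2)\bigr),
\]
so, using $C_F>1$, $A_{\tilde x_1}(w_1)+B_{\tilde x_1}(w_1)\le A_{\tilde x_1}(w_1)+B_{\tilde x_1}(w_2)\le(\phi(w_2)-\phi(w_1))/(1-C_F^{-1})$; since $h(\tilde x_1)=\phi(w_1)+A_{\tilde x_1}(w_1)-B_{\tilde x_1}(w_1)$ we get $|h(\tilde x_1)|\le|\phi(w_1)|+(\phi(w_2)-\phi(w_1))/(1-C_F^{-1})$. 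The same computation with $c=\tilde x_2$ bounds $|h(\tilde x_2)|$, and since $0\le\phi(w_2)-\phi(w_1)\le|\phi(w_1)|+|\phi(w_2)|$ we obtain $\max\{|h(\tilde x_1)|,|h(\tilde x_2)|\}\le C_1(|\phi(w_1)|+|\phi(w_2)|)$ with $C_1:=1+(1-C_F^{-1})^{-1}$.

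To finish, monotonicity of $h$ gives $\|h\|_{2,t}^2=\int_{\tilde x_1}^{\tilde x_2}h^2\le(\tilde x_2-\tilde x_1)\max\{h(\tilde x_1)^2,h(\tilde x_2)^2\}\le 2(\tilde x_2-\tilde x_1)C_1^2(\phi(w_1)^2+\phi(w_2)^2)$, which combined with the lower bound on $\|Th\|_2^2$ yields \eqref{eq: bounded inverse} with $\bar C^2=2(\tilde x_2-\tilde x_1)C_1^2/(c_W^2 c_\phi)$, depending only on $\zeta$. I do not expect a single deep obstacle; the care goes into (i) the measure-theoretic bookkeeping in the Stieltjes identity (right-continuity, atoms of $dh$, finiteness of $A_c,B_c$, continuity of $F_{X|W}(\cdot|w)$), (ii) the sign case analysis behind the $\|Th\|_2$ lower bound, which implicitly wants $0<w_1<w_2<1$ so that $[0,w_1]$ and $[w_2,1]$ are nondegenerate, and (iii) checking that the inclusions $(\tilde x_1,1]\subset(x_1,1]$ and $(0,\tilde x_2]\subset(0,x_2)$ needed to invoke \eqref{eq: upper cdf bound}–\eqref{eq: lower cdf bound} are precisely what $x_1<\tilde x_1<\tilde x_2<x_2$ provides.
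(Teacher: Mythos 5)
Your proof is correct, but it follows a route that is recognizably different from the one in the paper, and the difference is worth noting. The paper proves the inequality through the chain $\|h\|_{2,t}\le C_1\|h\|_{1,t}\le C_1C_2\|Th\|_1\le C_1C_2\|Th\|_2$, the middle step being Lemma~\ref{lem: crucial lemma}: after normalizing $\|h\|_{1,t}=1$, it performs the same integration-by-parts decomposition you use, but then splits into three cases according to where the sign-change point $x^*$ of $h$ lies relative to $[x_1,x_2]$, and in each case invokes the density lower bound $c_f$ of Assumption~\ref{as: density}(ii) to translate $\|h\|_{1,t}=1$ into a lower bound $A_1+B_2\ge c_f$; the whole argument is first carried out for continuously differentiable $h$ and then extended to $\mathcal{M}$ by an $L^2$-approximation (Lemma~\ref{lem: monotone approximation}). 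You instead collapse $\|h\|_{2,t}$ onto the two boundary values $h(\tilde x_1),h(\tilde x_2)$, collapse $\|Th\|_2$ onto the two numbers $\phi(w_1),\phi(w_2)$, and link the two directly via a single estimate $A_c(w_1)+B_c(w_1)\le A_c(w_1)+B_c(w_2)\le (\phi(w_2)-\phi(w_1))/(1-C_F^{-1})$ that exploits the multiplicative gap in \eqref{eq: lower cdf bound}--\eqref{eq: upper cdf bound} but needs neither a normalization, nor the case analysis on $x^*$, nor the density lower bound $c_f$. What your version buys: a shorter argument working directly with Stieltjes measures for all $h\in\mathcal{M}$, and a constant $\bar C$ that does not depend on $c_f$ at all, which shows that Assumption~\ref{as: density}(ii) is not actually needed for Theorem~\ref{thm: main result 1}. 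What the paper's version buys: the smooth-then-approximate structure sidesteps the question of whether $\phi(w_1),\phi(w_2)$ are finite for those specific $w$-values (Assumption~\ref{as: density}(i) only guarantees $f_{X|W}(\cdot|w)\in L^2$ for a.e.\ $w$, so your appeal to Cauchy--Schwarz to justify finiteness of $A_c,B_c$ needs one more line: if either $\phi(w_1)$ or $\phi(w_2)$ were $\pm\infty$, monotonicity of $\phi$ and $0<w_1<w_2<1$ would force $\|Th\|_2=\infty$, making the inequality vacuous); and the intermediate $L^1\to L^1$ bound $\|h\|_1\le C_2\|Th\|_1$ without truncation is highlighted separately in the paper as being of independent interest. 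The remaining items you flag (right-continuity, atoms, the interval inclusions) are indeed routine and your accounting of them is accurate.
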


To prove this theorem, we take a function $h\in\mathcal{M}$ with $\|h\|_{2,t}=1$ and show that $\|Th\|_2$ is bounded away from zero. A key observation that allows us to establish this bound is that, under monotone IV Assumption~\ref{as: monotone iv}, the function $w\mapsto \Ep[h(X)|W=w]$ is monotone whenever $h$ is. Together with non-redundancy  of the instrument $W$ implied by conditions (\ref{eq: lower cdf bound}) and (\ref{eq: upper cdf bound}) of Assumption \ref{as: monotone iv}, this allows us to show that $\Ep[h(X)|W=w_1]$ and $\Ep[h(X)|W=w_2]$ cannot both be close to zero so that $\|\Ep[h(X)|W=\cdot]\|_2$ is bounded from below by a strictly positive constant from the values of $\Ep[h(X)|W=w]$ in the neighborhood of either $w_1$ or $w_2$. By Assumption~\ref{as: density}, $\|T h\|_2$ must then also be bounded away from zero.


Theorem~\ref{thm: main result 1} has an important consequence. Indeed, consider the linear equation \eqref{eq: functional equation}. By Assumption~\ref{as: density}(i), the operator $T$ is compact, and so
\begin{equation}\label{eq: ill posed inverse problem}
	\frac{\|h_k\|_2}{\|T h_k\|_2} \to\infty\text{ as }k\to\infty\text{ for some sequence }\{h_k,k\geq 1\}\subset L^2[0,1].
\end{equation}
Property~\eqref{eq: ill posed inverse problem} means that $\|T h\|_2$ being small does not necessarily imply that $\|h\|_{2}$ is small and, therefore, the inverse of the operator $T: L^2[0,1]\to L^2[0,1]$, when it exists, cannot be continuous. Therefore, \eqref{eq: functional equation} is ill-posed in Hadamard's sense\footnote{Well- and ill-posedness in Hadamard's sense are defined as follows. Let $A:D\to R$ be a continuous mapping between metric spaces $(D,\rho_D)$ and $(R,\rho_R)$. Then, for $d\in D$ and $r\in R$, the equation $Ad=r$ is called ``well-posed'' on $D$ in Hadamard's sense (see \cite{Hadamard:1923ty}) if (i) $A$ is bijective and (ii) $A^{-1}:R\to D$ is continuous, so that for each $r\in R$ there exists a unique $d=A^{-1}r\in D$ satisfying $Ad=r$, and, moreover, the solution $d=A^{-1}r$ is continous in ``the data'' $r$. Otherwise, the equation is called ``ill-posed'' in Hadamard's sense.}, if no other conditions are imposed. This is the main reason why standard NPIV estimators have (potentially very) slow rate of convergence.
Theorem~\ref{thm: main result 1}, on the other hand, implies that, under Assumptions~\ref{as: monotone iv} and \ref{as: density}, \eqref{eq: ill posed inverse problem} is not possible if $h_k$ belongs to the set $\mathcal{M}$ of monotone functions in $L^2[0,1]$ for all $k\geq 1$ and we replace the $L^2$-norm $\|\cdot\|_2$ in the numerator of the left-hand side of \eqref{eq: ill posed inverse problem} by the truncated $L^2$-norm $\|\cdot\|_{2,t}$, indicating that shape restrictions may be helpful to improve statistical properties of the NPIV estimators. Also, in Remark~\ref{rem: ill-posedness with truncated norm}, we show that replacing the norm in the numerator is not a significant modification in the sense that for most ill-posed problems, and in particular for all severely ill-posed problems, (\ref{eq: ill posed inverse problem}) holds even if we replace $L^2$-norm $\|\cdot\|_2$ in the numerator of the left-hand side of (\ref{eq: ill posed inverse problem}) by the truncated $L^2$-norm $\|\cdot\|_{2,t}$.

Next, we derive an implication of Theorem~\ref{thm: main result 1} for the (quantitative) measure of ill-posedness of the model (\ref{eq:model}). We first define the restricted measure of ill-posedness. For $a\in \mathbb{R}$, let
$$
\mathcal{H}(a) := \left\{ h\in L^2[0,1] :\, \inf_{0\leq x'<x''\leq 1}\frac{h(x'')-h(x')}{x''-x'} \geq -a \right\}
$$
be the space containing all functions in $L^2[0,1]$ with lower derivative bounded from below by $-a$ uniformly over the interval $[0,1]$.
Note that $\mathcal{H}(a')\subset \mathcal{H}(a'')$ whenever $a'\leq a''$ and that $\mathcal{H}(0)$ is the set of increasing functions in $L^2[0,1]$. For continuously differentiable functions, $h\in L^2[0,1]$ belongs to $\mathcal{H}(a)$ if and only if $\inf_{x\in[0,1]}Dh(x)\geq -a$.
%
%
Further, define the {\em restricted measure of ill-posedness}:
\begin{equation}\label{eq: tau definition}
\tau(a) := \sup_{h\in \mathcal{H}(a)\atop \|h\|_{2,t}=1} \frac{\|h\|_{2,t}}{\|Th\|_2}.
\end{equation}
As we discussed above, under our Assumptions \ref{as: monotone iv} and \ref{as: density}, $\tau(\infty)=\infty$ if we use the $L^2$-norm instead of the truncated $L^2$-norm in the numerator in (\ref{eq: tau definition}). We show in Remark~\ref{rem: ill-posedness with truncated norm} below, that $\tau(\infty)=\infty$ for many ill-posed and, in particular, for all severely ill-posed problems even with the truncated $L^2$-norm as defined in \eqref{eq: tau definition}. However, Theorem~\ref{thm: main result 1} implies that $\tau(0)$ is bounded from above by $\bar{C}$ and, by definition, $\tau(a)$ is increasing in $a$, i.e. $\tau(a')\leq \tau(a'')$ for $a'\leq a''$. It turns out that $\tau(a)$ is bounded from above even for some positive values of $a$:
\begin{corollary}[Bound for the Restricted Measure of Ill-Posedness]\label{co:bound_tau}
	Let Assumptions~\ref{as: monotone iv} and \ref{as: density} be satisfied. Then there exist constants $c_\tau>0$ and $0<C_\tau<\infty$ depending only on $\zeta$ such that
			\begin{equation}\label{eq: main corollary tau}
				\tau(a) \leq C_\tau
			\end{equation}
 for all $a\leq c_\tau$.
\end{corollary}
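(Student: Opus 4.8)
The plan is to deduce the corollary from Theorem~\ref{thm: main result 1} by a simple linear‑perturbation argument. The key observation is that a function in $\mathcal{H}(a)$ differs from an increasing function by a small linear term: writing $\iota(x):=x$, if $h\in\mathcal{H}(a)$ then $g:=h+a\iota$ is increasing on $[0,1]$, since for any $x'<x''$ one has $(g(x'')-g(x'))/(x''-x')=(h(x'')-h(x'))/(x''-x')+a\ge 0$ by the definition of $\mathcal{H}(a)$. Hence $g\in\mathcal{M}$, and Theorem~\ref{thm: main result 1} applies to $g$.

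So I would fix an arbitrary $h\in\mathcal{H}(a)$ with $\|h\|_{2,t}=1$, set $g:=h+a\iota\in\mathcal{M}$, and apply \eqref{eq: bounded inverse}: $\|g\|_{2,t}\le\bar C\|Tg\|_2$. On the left, the triangle inequality and $\|\iota\|_{2,t}^2=\int_{\tilde x_1}^{\tilde x_2}x^2\,dx\le 1$ give $\|g\|_{2,t}\ge\|h\|_{2,t}-a\|\iota\|_{2,t}\ge 1-a$. On the right, $\|Tg\|_2\le\|Th\|_2+a\|T\iota\|_2$; and since $(T\varphi)(w)=\int_0^1\varphi(x)f_{X,W}(x,w)\,dx$, Cauchy--Schwarz together with Assumption~\ref{as: density}(i) yields $\|T\|_2\le C_T^{1/2}$, so $\|T\iota\|_2\le C_T^{1/2}\|\iota\|_2\le C_T^{1/2}$. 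Combining the two bounds,
\[
1-a\;\le\;\bar C\big(\|Th\|_2+aC_T^{1/2}\big),\qquad\text{i.e.}\qquad \|Th\|_2\;\ge\;\frac{1-a}{\bar C}-aC_T^{1/2}.
\]

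Finally I would choose $c_\tau>0$ small enough, depending only on $\bar C$ and $C_T$ (hence only on $\zeta$, by Theorem~\ref{thm: main result 1} and Assumption~\ref{as: density}), that the right‑hand side is bounded below by a strictly positive constant—say $1/(2\bar C)$—for all $a\le c_\tau$. Then $\|Th\|_2\ge 1/(2\bar C)$ for every $h\in\mathcal{H}(a)$ with $\|h\|_{2,t}=1$, so $\tau(a)=\sup\{\|Th\|_2^{-1}:h\in\mathcal{H}(a),\,\|h\|_{2,t}=1\}\le 2\bar C=:C_\tau$ for all $a\le c_\tau$. This is essentially the whole argument; there is no substantial obstacle, the only points requiring care being that the perturbation costs $a\|\iota\|_{2,t}$ on the domain side and $a\|T\iota\|_2$ on the image side are both controlled by constants depending only on $\zeta$, which is exactly what the trivial bounds $\|\iota\|_{2,t},\|\iota\|_2\le 1$ and the boundedness of $T$ from Assumption~\ref{as: density}(i) provide.
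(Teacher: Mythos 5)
Your argument is correct and is essentially the paper's own proof: the paper also perturbs $h\in\mathcal{H}(a)$ by the linear function $h'(x)=ax$ to land in $\mathcal{M}$, applies Theorem~\ref{thm: main result 1} to the perturbed function, and controls the cost of the perturbation via the triangle inequality and the bound on $\|T\|_2$ from Assumption~\ref{as: density}(i); the only organizational difference is that the paper factors this step into Lemma~\ref{lem: general bound on tau} (so it can be reused in Theorem~\ref{lem: id set bounds}) and disposes of $a\leq 0$ by the monotonicity $\tau(a)\leq\tau(0)$.
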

This is our second main result in this section. It is exactly this corollary of Theorem \ref{thm: main result 1} that allows us to obtain a fast convergence rate of our constrained NPIV estimator $\hat{g}^c$ not only when the regression function $g$ is constant but, more generally, when $g$ belongs to a large but slowly shrinking neighborhood of constant functions.


\begin{remark}[Ill-posedness is preserved by norm truncation]\label{rem: ill-posedness with truncated norm}
	Under Assumptions~\ref{as: monotone iv} and \ref{as: density}, the integral operator $T$ satisfies \eqref{eq: ill posed inverse problem}. Here we demonstrate that, in many cases, and in particular in all severely ill-posed cases, (\ref{eq: ill posed inverse problem}) continues to hold if we replace the $L^2$-norm $\|\cdot\|_2$ by the truncated $L^2$-norm $\|\cdot\|_{2,t}$ in the numerator of the left-hand side of (\ref{eq: ill posed inverse problem}), that is, there exists a sequence $\{l_{k},k\geq 1\}$ in $L^2[0,1]$ such that
	\begin{equation}\label{eq: ill posed inverse problem restricted}
	\frac{\|l_{k}\|_{2,t}}{\|Tl_{k}\|_2}\to\infty\text{ as }k\to\infty.
	\end{equation}
	Indeed, under Assumptions \ref{as: monotone iv} and \ref{as: density}, $T$ is compact, and so the spectral theorem implies that there exists a spectral decomposition of operator $T$, $\{(h_j,\varphi_j),j\geq 1\}$, where $\{h_j,j\geq 1\}$ is an orthonormal basis of $L^2[0,1]$ and $\{\varphi_j,j\geq 1\}$ is a decreasing sequence of positive numbers such that $\varphi_j\to 0$ as $j\to\infty$, and $\|Th_j\|_2=\varphi_j\|h_j\|_2=\varphi_j$. Also, Lemma \ref{lem: degenerate basis} in the appendix shows that if $\{h_j,j\geq 1\}$ is an orthonormal basis in $L^2[0,1]$, then for any $\alpha>0$, $\|h_j\|_{2,t}>j^{-1/2-\alpha}$ for infinitely many $j$, and so there exists a subsequence $\{h_{j_k},k\geq 1\}$ such that $\|h_{j_k}\|_{2,t}> {j_k}^{-1/2-\alpha}$. Therefore, under a weak condition that $j^{1/2+\alpha}\varphi_j\to 0$ as $j\to\infty$, using $\|h_{j_k}\|_2=1$ for all $k\geq 1$, we conclude that for the subsequence $l_k=h_{j_k}$,
	$$
	\frac{\|l_k\|_{2,t}}{\|Tl_k\|_2}\geq \frac{\|h_{j_k}\|_2}{{j_k}^{1/2+\alpha}\|Th_{j_k}\|_2}=\frac{1}{{j_k}^{1/2+\alpha}\varphi_{j_k}}\to\infty\text{ as }k\to\infty
	$$
	leading to (\ref{eq: ill posed inverse problem restricted}). Note also that the condition that $j^{1/2+\alpha}\varphi_j\to 0$ as $j\to\infty$ necessarily holds if there exists a constant $c>0$ such that $\varphi_j\leq e^{-c j}$ for all large $j$, that is, if the problem is severely ill-posed.
	 Thus, under our Assumptions \ref{as: monotone iv} and \ref{as: density}, the restriction in Theorem \ref{thm: main result 1} that $h$ belongs to the space $\mathcal{M}$ of \textit{monotone} functions in $L^2[0,1]$  plays a crucial role for the result (\ref{eq: bounded inverse}) to hold. On the other hand, whether the result (\ref{eq: bounded inverse}) can be obtained for all $h\in\mathcal{M}$ without imposing our monotone IV Assumption~\ref{as: monotone iv} appears to be an open (and interesting) question. \QEDA
\end{remark}

\begin{remark}[Severe ill-posedness is preserved by norm truncation]\label{rem: severe ill-posedness}
One might wonder whether our monotone IV Assumption \ref{as: monotone iv} excludes all severely ill-posed problems, and whether the norm truncation significantly changes these problems. Here we show that there do exist severely ill-posed problems that satisfy our monotone IV Assumption \ref{as: monotone iv}, and also that severely ill-posed problems remain severely ill-posed even if we replace the $L^2$-norm $\|\cdot\|_{2}$ by the truncated $L^2$-norm $\|\cdot\|_{2,t}$. Indeed, consider Example~\ref{ex: normal density} above. Because, in this example, the pair $(X,W)$ is a transformation of the normal distribution, it is well known that the integral operator $T$ in this example has singular values decreasing exponentially fast. More specifically, the spectral decomposition $\{(h_k,\varphi_k),k\geq 1\}$ of the operator $T$ satisfies $\varphi_k=\rho^k$ for all $k$ and some $\rho<1$. Hence,
	$$
	\frac{\|h_k\|_2}{\|Th_k\|_2}=\left(\frac{1}{\rho}\right)^k.
	$$
Since $(1/\rho)^k\to \infty$ as $k\to\infty$ exponentially fast, this example leads to a severely ill-posed problem. Moreover, by Lemma \ref{lem: degenerate basis},  for any $\alpha>0$ and $\rho'\in (\rho,1)$,
	$$
	\frac{\|h_k\|_{2,t}}{\|Th_k\|_2}>\frac{1}{k^{1/2+\alpha}} \left(\frac{1}{\rho}\right)^k\geq \left(\frac{1}{\rho'}\right)^k
	$$
for infinitely many $k$. Thus, replacing the $L^2$ norm $\|\cdot\|_2$ by the truncated $L^2$ norm $\|\cdot\|_{2,t}$ preserves the severe ill-posedness of the problem. However, it follows from Theorem \ref{thm: main result 1} that uniformly over all $h\in\mathcal{M}$,
	$
	\|h\|_{2,t}/\|Th\|_2\leq \bar{C}.
	$
	Therefore, in this example, as well as in all other severely ill-posed problems satisfying Assumptions \ref{as: monotone iv} and \ref{as: density}, imposing monotonicity on the function $h\in L^2[0,1]$ significantly changes the properties of the ratio $\|h\|_{2,t}/\|Th\|_2$.\QEDA
\end{remark}

\begin{remark}[Monotone IV Assumption does not imply control function approach]
Our monotone IV Assumption \ref{as: monotone iv} does not imply the applicability of a control function approach to estimation of the function $g$. Consider Example~\ref{ex: uniforms} above. In this example, the relationship between $X$ and $W$ has a two-dimensional vector $(U_1,U_2)$ of unobserved heterogeneity. Therefore, by Proposition~4 of \cite{kasy2011fd}, there does not exist any control function $C:[0,1]^2\to \mathbb{R}$ such that (i) $C$ is invertible in its second argument, and (ii) $X$ is independent of $\varepsilon$ conditional on $V=C(X,W)$. As a consequence, our monotone IV Assumption \ref{as: monotone iv} does not imply any of the existing control function conditions such as those in \cite{newey1999re} and \cite{Imbens:2002p4861}, for example.\footnote{It is easy to show that the existence of a control function does not imply our monotone IV condition either, so our and the control function approach rely on conditions that are non-nested.} Since multidimensional unobserved heterogeneity is common in economic applications (see \cite{imbens2007re} and \cite{Kasy2014tr}), we view our approach to avoiding ill-posedness as complementary to the control function approach.\QEDA
\end{remark}

\begin{remark}[On the role of norm truncation]
	Let us also briefly comment on the role of the truncated norm $\|\cdot\|_{2,t}$ in (\ref{eq: bounded inverse}). 	There are two reasons why we need the truncated $L^2$-norm $\|\cdot\|_{2,t}$ rather than the usual $L^2$-norm $\|\cdot\|_2$. First, Lemma~\ref{lem: crucial lemma} in the appendix shows that, under Assumptions~\ref{as: monotone iv} and \ref{as: density}, there exists a constant $0<C_2 <\infty$ such that
	$$ \|h\|_1 \leq C_2 \|Th\|_1 $$
	for any increasing and continuously differentiable function $h\in L^1[0,1]$. This result does not require any truncation of the norms and implies boundedness of a measure of illposedness defined in terms of $L^1[0,1]$-norms: $\sup_{h\in L^1[0,1], h\, \text{increasing}} \|h\|_1 / \|Th\|_1$. To extend this result to $L^2[0,1]$-norms we need to introduce a positive, but arbitrarily small, amount of truncation at the boundaries, so that we have a control $\|h\|_{2,t}\leq C\|h\|_1$ for some constant $C$ and all monotone functions $h\in\mathcal{M}$. Second, we want to allow for the normal density as in Example~\ref{ex: normal density}, which violates condition (ii) of Assumption~\ref{as: density} if we set $[x_1,x_2]=[0,1]$. \QEDA
\end{remark}

\begin{remark}[Bounds on the measure of ill-posedness via compactness]
Another approach to obtain a result like (\ref{eq: bounded inverse}) would be to employ compactness arguments. For example, let $b>0$ be some (potentially large) constant and consider the class of functions $\mathcal{M}(b)$ consisting of all functions $h$ in $\mathcal{M}$ such that $\|h\|_\infty = \sup_{x\in[0,1]} |h(x)|\leq b$. It is well known that the set $\mathcal{M}(b)$ is compact under the $L^2$-norm $\|\cdot\|_2$, and so, as long as $T$ is invertible, there exists some $C>0$ such that $\|h\|_2 \leq C\|T h\|_2$ for all $h\in\mathcal{M}(b)$ since (i) $T$ is continuous and (ii) any continuous function achieves its minimum on a compact set. This bound does not require the monotone IV assumption and also does not require replacing the $L^2$-norm by the truncated $L^2$-norm. Further, defining $\widetilde{\tau}(a,b): = \sup_{h\in\mathcal{H}(a): \|h\|_\infty\leq b, \|h\|_2 =1}\|h\|_2/\|T h\|_2$ for all $a>0$ and using the same arguments as those in the proof of Corollary \ref{co:bound_tau}, one can show that there exist some finite constants $c,C>0$ such that $\widetilde{\tau}(a,b)\leq C$ for all $a\leq c$. This (seemingly interesting) result, however, is not useful for bounding the estimation error of an estimator of $g$ because, as the proof of Theorem \ref{thm: risk bounds} in the next section reveals, obtaining meaningful bounds would require a result of the form $\widetilde{\tau}(a,b_n)\leq C$ for all $a\leq c$ for some sequence $\{b_n, n\geq 1\}$ such that $b_n\to\infty$, even if we know that $\sup_{x\in[0,1]}|g(x)|\leq b$ and we impose this constraint on the estimator of $g$. In contrast, our arguments in Theorem \ref{thm: main result 1}, being fundamentally different, do lead to meaningful bounds on the estimation error of the constrained estimator $\hat{g}^c$ of $g$. 
\QEDA
\end{remark}

\section{Non-asymptotic Risk Bounds Under Monotonicity}
\label{sec: estimation}


The rate at which unconstrained NPIV estimators converge to $g$ depends crucially on the so-called sieve measure of ill-posedness, which, unlike $\tau(a)$, does not measure ill-posedness over the space $\mathcal{H}(a)$, but rather over the space $\mathcal{H}_n(\infty)$, a finite-dimensional (sieve) approximation to $\mathcal{H}(\infty)$. In particular, the convergence rate is slower the faster the sieve measure of ill-posedness grows with the dimensionality of the sieve space $\mathcal{H}_n(\infty)$. The convergence rates can be as slow as logarithmic in the severely ill-posed case. Since by Corollary~\ref{co:bound_tau}, our monotonicity assumptions imply boundedness of $\tau(a)$ for some range of finite values $a$, we expect these assumptions to translate into favorable performance of a constrained estimator that imposes monotonicity of $g$. This intuition is confirmed by the novel non-asymptotic error bounds we derive in this section (Theorem \ref{thm: risk bounds}).

Let $(Y_i,X_i,W_i)$, $i=1,\dots,n$, be an i.i.d. sample from the distribution of $(Y,X,W)$.
To define our estimator, we first introduce some notation. Let $\{p_k(x),k\geq 1\}$ and $\{q_k(w),k\geq 1\}$ be two orthonormal bases in $L^2[0,1]$. For $K=K_n\geq 1$ and $J=J_n\geq K_n$, denote
\begin{align*}
	&p(x):=(p_1(x),\dots,p_K(x))'\text{ and }q(w):=(q_1(w),\dots,q_J(w))'.
\end{align*}
Let $\mathbf{P} := (p(X_1),\ldots, p(X_n))'$ and $\mathbf{Q} := (q(W_1),\ldots, q(W_n))'$. Similarly, stack all observations on $Y$ in $\mathbf{Y}:= (Y_1,\ldots,Y_n)'$. 
%
Let $\mathcal{H}_n(a)$ be a sequence of finite-dimensional spaces defined by
$$\mathcal{H}_n(a) := \left\{ h\in \mathcal{H}(a) :\, \exists b_1,\ldots, b_{K_n}\in\mathbb{R} \text{ with } h=\sum_{j=1}^{K_n} b_j p_j \right\}$$
which become dense in $\mathcal{H}(a)$ as $n\to\infty$.
Throughout the paper, we assume that $\|g\|_2\leq C_b$ where $C_b$ is a large but finite constant known by the researcher.
We define two estimators of $g$: the unconstrained estimator $\hat{g}^u(x) := p(x)'\hat{\beta}^u$ with
\begin{equation}\label{eq:ghat}
	\hat{\beta}^u:=\argmin_{b\in\mathbb{R}^{K}:\|b\|\leq C_b} (\mathbf{Y}-\mathbf{P} b)'\mathbf{Q}(\mathbf{Q}'\mathbf{Q})^{-1}\mathbf{Q}'(\mathbf{Y}-\mathbf{P} b)
\end{equation}
which is similar to the estimator defined in \cite{horowitz2012} and a special case of the estimator considered in \cite{blundell2007}, and the constrained estimator $\hat{g}^c(x):=p(x)'\hat{\beta}^c$ with
\begin{equation}\label{eq:ghatm}
	\hat{\beta}^c:=\argmin_{b\in\mathbb{R}^{K}:\, p(\cdot)'b \in \mathcal{H}_{n}(0),\|b\|\leq C_b} (\mathbf{Y}-\mathbf{P} b)'\mathbf{Q}(\mathbf{Q}'\mathbf{Q})^{-1}\mathbf{Q}'(\mathbf{Y}-\mathbf{P} b),
\end{equation}
which imposes the monotonicity of $g$ through the constraint $p(\cdot)'b \in \mathcal{H}_{n}(0)$. 

To study properties of the two estimators we introduce a finite-dimensional, or sieve, counterpart of the restricted measure of ill-posedness $\tau(a)$ defined in \eqref{eq: tau definition} and also recall the definition of the (unrestricted) sieve measure of ill-posedness. Specifically, define the {\em restricted} and {\em unrestricted} sieve measures of ill-posedness $\tau_{n,t}(a)$ and $\tau_n$ as
$$\tau_{n,t}(a) := \sup_{h\in \mathcal{H}_n(a)\atop \|h\|_{2,t}=1} \frac{\|h\|_{2,t}}{\|Th\|_2}\quad\text{ and }\quad \tau_n:=\sup_{h\in\mathcal{H}_n(\infty)}\frac{\|h\|_{2}}{\|T h\|_2}.$$
The sieve measure of ill-posedness defined in \cite{blundell2007} and also used, for example, in \cite{horowitz2012} is $\tau_n$. \cite{blundell2007} show that $\tau_{n}$ is related to the singular values of $T$.\footnote{In fact, \cite{blundell2007} talk about the eigenvalues of $T^*T$, where $T^*$ is the adjoint of $T$ but there is a one-to-one relationship between eigenvalues of $T^*T$ and singular values of $T$.} If the singular values converge to zero at the rate $K^{-r}$ as $K\to\infty$, then, under certain conditions, $\tau_n$ diverges at a polynomial rate, that is $\tau_{n} = O(K_n^r)$. This case is typically referred to as ``mildly ill-posed''. On the other hand, when the singular values decrease at a fast exponential rate, then $\tau_{n} = O(e^{cK_n})$, for some constant $c>0$. This case is typically referred to as ``severely ill-posed''.

Our restricted sieve measure of ill-posedness $\tau_{n,t}(a)$ is smaller than the unrestricted sieve measure of ill-posedness $\tau_n$ because we replace the $L^2$-norm in the numerator by the truncated $L^2$-norm and the space $\mathcal{H}_n(\infty)$ by $\mathcal{H}_n(a)$. As explained in Remark~\ref{rem: ill-posedness with truncated norm}, replacing the $L^2$-norm by the truncated $L^2$-norm does not make a crucial difference but, as follows from Corollary \ref{co:bound_tau}, replacing $\mathcal{H}_n(\infty)$ by $\mathcal{H}_n(a)$ does. In particular, since $\tau(a)\leq C_\tau$ for all $a\leq c_\tau$ by Corollary \ref{co:bound_tau}, we also have $\tau_{n,t}(a)\leq C_\tau$ for all $a\leq c_\tau$ because $\tau_{n,t}(a)\leq \tau(a)$. Thus, for all values of $a$ that are not too large, $\tau_{n,t}(a)$ remains bounded uniformly over all $n$, no matter how fast the singular values of $T$ converge to zero.



We now specify conditions that we need to derive non-asymptotic error bounds for the constrained estimator $\hat{g}^c(x)$.
\begin{assumption}[Monotone regression function]\label{as: monotone regression}
The function $g$ is monotone increasing. 
\end{assumption}
\begin{assumption}[Moments]\label{as: moments}
For some constant $C_B<\infty$, (i) $\Ep[\varepsilon^2|W]\leq C_B$ and (ii) $\Ep[g(X)^2|W]\leq C_B$.
\end{assumption}
\begin{assumption}[Relation between $J$ and $K$]\label{as: J-K}
For some constant $C_J<\infty$, $J\leq C_J K$.
\end{assumption}
Assumption \ref{as: monotone regression}, along with Assumption \ref{as: monotone iv}, is our main monotonicity condition. Assumption \ref{as: moments} is a mild moment condition. Assumption \ref{as: J-K} requires that the dimension of the vector $q(w)$ is not much larger than the dimension of the vector $p(x)$. Let $s>0$ be some constant.
\begin{assumption}[Approximation of $g$]\label{as: g approximation}
	There exist $\beta_n\in\mathbb{R}^K$ and a constant $C_g<\infty$ such that the function $g_n(x):=p(x)'\beta_n$, defined for all $x\in[0,1]$, satisfies (i) $g_n\in \mathcal{H}_n(0)$, (ii) $\|g-g_n\|_2\leq C_g K^{-s}$, and (iii) $\|T(g-g_n)\|_2\leq C_g\tau_n^{-1}K^{-s}$.
\end{assumption}
The first part of this condition requires the approximating function $g_n$ to be increasing. The second part requires a particular bound on the approximation error in the $L^2$-norm. \cite{devore1977gg,devore1977ok} show that the assumption $\|g-g_n\|_2\leq C_g K^{-s}$ holds when the approximating basis $p_1,\dots,p_K$ consists of polynomial or spline functions and $g$ belongs to a H\"{o}lder class with smoothness level $s$. Therefore, approximation by monotone functions is similar to approximation by all functions. The third part of this condition is similar to Assumption~6 in \cite{blundell2007}.
\begin{assumption}[Approximation of $m$]\label{as: m approximation}
	There exist $\gamma_n\in\mathbb{R}^J$ and a constant $C_m<\infty$ such that the function $m_n(w):=q(w)'\gamma_n$, defined for all $w\in[0,1]$, satisfies $\|m-m_n\|_2\leq C_m \tau_n^{-1}J^{-s}$ .
\end{assumption}
This condition is similar to Assumption 3(iii) in \cite{horowitz2012}. Also, define the operator $T_n: L^2[0,1]\to L^2[0,1]$ by
$$
(T_n h)(w):=q(w)'\Ep[q(W)p(X)']\Ep[p(U) h(U)],\qquad w\in[0,1]
$$
where $U\sim U[0,1]$.
\begin{assumption}[Operator $T$]\label{as: T approximation}
(i) The operator $T$ is injective and (ii) for some constant $C_a<\infty$, $\|(T-T_n)h\|_2\leq C_a\tau_n^{-1}K^{-s}\|h\|_2$ for all $h\in\mathcal{H}_n(\infty)$.
\end{assumption}
This condition is similar to Assumption~5 in \cite{horowitz2012}.
Finally, let
\begin{align*}
	&\xi_{K,p}:=\sup_{x\in[0,1]}\|p(x)\|,\quad \xi_{J,q}:=\sup_{w\in[0,1]}\|q(w)\|,\quad \xi_n:=\max(\xi_{K,p},\xi_{J,q}).
\end{align*}

We start our analysis in this section with a simple observation that, if the function $g$ is strictly increasing and the sample size $n$ is sufficiently large, then the constrained estimator $\hat{g}^c$ coincides with the unconstrained estimator $\hat{g}^u$, and the two estimators share the same rate of convergence.
\begin{lemma}[Asymptotic equivalence of constrained and unconstrained estimators]\label{lem: asymptotic problem}
Let Assumptions \ref{as: monotone iv}-\ref{as: T approximation} be satisfied. In addition, assume that $g$ is continuously differentiable and $D g(x)\geq c_g$ for all $x\in[0,1]$ and some constant $c_g>0$. If $\tau_n^2\xi_n^2\log n/n\to 0$, $\sup_{x\in[0,1]}\|D p(x)\|(\tau_n(K/n)^{1/2}+K^{-s})\to 0$, and $\sup_{x\in[0,1]}|D g(x) - D g_n(x)|\to 0$ as $n\to\infty$, then
\begin{equation}\label{eq: asymptotic coincidence}
\Pr\Big(\widehat{g}^c(x)=\widehat{g}^u(x)\text{ for all }x\in[0,1]\Big)\to 1\text{ as }n\to\infty.
\end{equation}
\end{lemma}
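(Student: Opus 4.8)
The plan is to show that, with probability approaching one, the unconstrained estimator $\hat{g}^u$ already satisfies the monotonicity constraint built into $\hat{g}^c$. Since the feasible set of \eqref{eq:ghatm} is contained in that of \eqref{eq:ghat} and the two problems minimize the same objective, $\hat{g}^u$ is then a solution of the constrained problem as well, and, by (essential) uniqueness of the minimizer, $\hat{g}^c$ and $\hat{g}^u$ coincide pointwise.

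\emph{Easy direction and uniqueness.} By construction $\|\hat\beta^u\|\le C_b$, so on the event $E_n:=\{p(\cdot)'\hat\beta^u\in\mathcal{H}_n(0)\}$ the vector $\hat\beta^u$ is feasible for \eqref{eq:ghatm}; it attains the unconstrained minimum of the objective, which is a lower bound for the constrained minimum (the constrained feasible set being smaller), so $\hat\beta^u$ also solves \eqref{eq:ghatm}. To promote this to the identity $\hat g^c(x)=\hat g^u(x)$ for all $x$, note that $b\mapsto \|\mathbf{Q}(\mathbf{Q}'\mathbf{Q})^{-1}\mathbf{Q}'(\mathbf{Y}-\mathbf{P}b)\|^2$ is convex, and by strict convexity of $\|\cdot\|^2$ any two of its minimizers $b_1,b_2$ must satisfy $\mathbf{Q}'\mathbf{P}b_1=\mathbf{Q}'\mathbf{P}b_2$. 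Under Assumption~\ref{as: T approximation} and the rate condition $\tau_n^2\xi_n^2\log n/n\to 0$, standard matrix concentration arguments show that $\mathbf{Q}'\mathbf{P}$ has full column rank with probability approaching one, so $b_1=b_2$; hence on $E_n$ we get $\hat\beta^c=\hat\beta^u$ and therefore $\hat g^c\equiv\hat g^u$.

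\emph{Feasibility, i.e. $\Pr(E_n)\to1$.} Since the basis functions $p_k$ are differentiable, $p(\cdot)'\hat\beta^u\in\mathcal{H}_n(0)$ is equivalent to $D\hat g^u(x)=Dp(x)'\hat\beta^u\ge 0$ for all $x\in[0,1]$. Because $Dg(x)\ge c_g>0$, it therefore suffices to show $\sup_{x\in[0,1]}|D\hat g^u(x)-Dg(x)|=o_p(1)$, which then yields $D\hat g^u(x)\ge c_g/2>0$ on $[0,1]$ with probability approaching one. I would split
$$
D\hat g^u(x)-Dg(x)=Dp(x)'(\hat\beta^u-\beta_n)+\bigl(Dg_n(x)-Dg(x)\bigr),
$$
where the second term is $o(1)$ uniformly in $x$ by hypothesis, and bound the first by $\sup_{x}\|Dp(x)\|\cdot\|\hat\beta^u-\beta_n\|$. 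Since $\{p_k\}$ is orthonormal, $\|\hat\beta^u-\beta_n\|=\|\hat g^u-g_n\|_2$, and this is $O_p\bigl(\tau_n((K/n)^{1/2}+K^{-s})\bigr)$ — the standard $L^2$ convergence rate for the linear sieve NPIV estimator, obtained exactly as in the proof of Theorem~\ref{thm: risk bounds} specialized to $\hat g^u$: decompose the error into a stochastic part, controlled through the sieve measure of ill-posedness and bounds on $(\mathbf{Q}'\mathbf{Q}/n)^{-1}$ and $\Ep[q(W)p(X)']$ valid on a probability-$\to1$ event guaranteed by $\tau_n^2\xi_n^2\log n/n\to0$ and Assumption~\ref{as: moments}, plus the bias terms controlled by Assumptions~\ref{as: g approximation}--\ref{as: T approximation}. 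Multiplying by $\sup_x\|Dp(x)\|$ and invoking the rate condition $\sup_{x\in[0,1]}\|Dp(x)\|(\tau_n(K/n)^{1/2}+K^{-s})\to0$ gives $\sup_x|Dp(x)'(\hat\beta^u-\beta_n)|=o_p(1)$, completing the argument.

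\emph{Main obstacle.} The technical heart is the $\ell^2$ rate bound on $\|\hat\beta^u-\beta_n\|$: establishing it rigorously requires uniform control of $(\mathbf{Q}'\mathbf{Q}/n)^{-1}$, of the cross term $\mathbf{Q}'\mathbf{P}/n$, and of the passage between the population operator $T$, its sieve analogue $T_n$, and their empirical counterparts, which is precisely where Assumptions~\ref{as: moments}--\ref{as: T approximation} and the condition $\tau_n^2\xi_n^2\log n/n\to0$ enter. Once this bound is available, the conversion to a sup-norm statement about $D\hat g^u$ and the feasibility/uniqueness steps are routine.
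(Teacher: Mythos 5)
Your proposal follows essentially the same route as the paper's proof: reduce \eqref{eq: asymptotic coincidence} to showing $D\hat g^u(x)\ge 0$ on $[0,1]$ with probability tending to one, deduce this from $\sup_{x\in[0,1]}|D\hat g^u(x)-Dg(x)|=o_p(1)$ via the same decomposition through $Dg_n$, and control $\|\hat\beta^u-\beta_n\|=\|\hat g^u-g_n\|_2$ by the sieve-NPIV $L^2$ rate obtained from exactly the matrix-concentration and operator-approximation arguments ($T$ vs.\ $T_n$ vs.\ $\hat T$) that the paper carries out in detail. One small correction: the rate should read $O_p(\tau_n(K/n)^{1/2}+K^{-s})$ rather than $O_p(\tau_n((K/n)^{1/2}+K^{-s}))$ — with your parenthesization the hypothesis $\sup_{x\in[0,1]}\|Dp(x)\|(\tau_n(K/n)^{1/2}+K^{-s})\to 0$ would not suffice to kill the term $\sup_x\|Dp(x)\|\,\tau_nK^{-s}$, but the sharper bound (bias not inflated by $\tau_n$, thanks to Assumption \ref{as: g approximation}(iii)) is what actually holds and matches the stated rate condition exactly.
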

The result in Lemma \ref{lem: asymptotic problem} is similar to that in Theorem 1 of \cite{mammen1991fd}, which shows equivalence (in the sense of \eqref{eq: asymptotic coincidence}) of the constrained and unconstrained estimators of conditional mean functions. Lemma~\ref{lem: asymptotic problem} implies that imposing monotonicity of $g$ cannot lead to improvements in the rate of convergence of the estimator if $g$ is strictly increasing. However, the result in Lemma~\ref{lem: asymptotic problem} is asymptotic and only applies to the interior of the monotonicity constraint. It does not rule out faster convergence rates on or near the boundary of the monotonicity constraint nor does it rule out significant performance gains in finite samples. In fact, our Monte Carlo simulation study in Section~\ref{sec: simulations} shows significant finite-sample performance improvements from imposing monotonicity even if $g$ is strictly increasing and relatively far from the boundary of the constraint. Therefore, we next derive a {\em non-asymptotic} estimation error bound for the constrained estimator $\hat{g}^c$ and study the impact of the monotonicity constraint on this bound.

\begin{theorem}[Non-asymptotic error bound for the constrained estimator]\label{thm: risk bounds}
Let Assumptions \ref{as: monotone iv}-\ref{as: T approximation} be satisfied, and let $\delta\geq 0$ be some constant. Assume that $\xi_n^2\log n/n\leq c$ for sufficiently small $c>0$. Then with probability at least $1-\alpha - n^{-1}$, we have
\begin{equation}\label{eq: main rate 2}
\|\hat{g}^c-g\|_{2,t}\leq C \Big\{\delta + \tau_{n,t}\Big(\frac{\|D g_n\|_{\infty}}{\delta}\Big)\Big(\frac{K}{\alpha n}+\frac{\xi_n^2\log n}{n}\Big)^{1/2} + K^{-s}\Big\}
\end{equation}
 and
\begin{equation}\label{eq: main rate 3}
\|\hat{g}^c-g\|_{2,t}\leq C\min \Big\{\|D g\|_{\infty} + \Big(\frac{K}{\alpha n}+\frac{\xi_n^2\log n}{n}\Big)^{1/2}, \tau_n \Big(\frac{K}{\alpha n} + \frac{\xi_n^2\log n}{n}\Big)^{1/2}\Big\} +C K^{-s}.
\end{equation}
Here the constants $c,C<\infty$ can be chosen to depend only on the constants appearing in Assumptions \ref{as: monotone iv}-\ref{as: T approximation}.
\end{theorem}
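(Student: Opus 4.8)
The plan is to bound $\|\hat g^c-g\|_{2,t}\le\|\hat g^c-g_n\|_{2,t}+\|g_n-g\|_{2,t}$, where the second term is $\le\|g_n-g\|_2\le C_gK^{-s}$ by Assumption~\ref{as: g approximation}(ii), and to control $\|\hat g^c-g_n\|_{2,t}$ by pushing $\Delta:=\hat g^c-g_n=p(\cdot)'d$, $d:=\hat\beta^c-\beta_n$, through the chain ``truncated norm $\leadsto$ operator $T$ $\leadsto$ finite-dimensional operator $T_n$ $\leadsto$ empirical moments $\leadsto$ noise and approximation error''. The key geometric step is the first link. Since $\hat g^c\in\mathcal{H}_n(0)$ (the constraint in \eqref{eq:ghatm}) and $Dg_n(x)\le\|Dg_n\|_\infty$ on $[0,1]$, we have $D\Delta\ge-\|Dg_n\|_\infty$, i.e.\ $\Delta\in\mathcal{H}_n(\|Dg_n\|_\infty)$, and by positive homogeneity of $\mathcal{H}_n$, $\Delta/\|\Delta\|_{2,t}\in\mathcal{H}_n(\|Dg_n\|_\infty/\|\Delta\|_{2,t})$ whenever $\|\Delta\|_{2,t}>0$. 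Hence for any $\delta\ge0$: either $\|\Delta\|_{2,t}\le\delta$, or $\|Dg_n\|_\infty/\|\Delta\|_{2,t}\le\|Dg_n\|_\infty/\delta$, in which case $\Delta/\|\Delta\|_{2,t}\in\mathcal{H}_n(\|Dg_n\|_\infty/\delta)$ and the definition of $\tau_{n,t}$ gives $\|\Delta\|_{2,t}\le\tau_{n,t}(\|Dg_n\|_\infty/\delta)\|T\Delta\|_2$. In all cases,
\[ \|\hat g^c-g_n\|_{2,t}\le\delta+\tau_{n,t}\Big(\frac{\|Dg_n\|_\infty}{\delta}\Big)\|T\Delta\|_2 , \]
so it remains to bound $\|T\Delta\|_2$.

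For that, note first that $\|\hat\beta^c\|\le C_b$ and $\|\beta_n\|=\|g_n\|_2\le C_b+C_gK^{-s}$, so $\|d\|$ is bounded and $\Delta\in\mathcal{H}_n(\infty)$; Assumption~\ref{as: T approximation}(ii) then gives $\|T\Delta\|_2\le\|T_n\Delta\|_2+C\tau_n^{-1}K^{-s}$. Orthonormality of $p$ and $q$ yields $\|T_n\Delta\|_2=\|Sd\|$ with $S:=\Ep[q(W)p(X)']$, and $n^{-1}$ times the objective in \eqref{eq:ghatm} equals $(\hat m-\hat Sb)'\hat\Sigma_Q^{-1}(\hat m-\hat Sb)$ with $\hat S:=\mathbf{Q}'\mathbf{P}/n$, $\hat\Sigma_Q:=\mathbf{Q}'\mathbf{Q}/n$, $\hat m:=\mathbf{Q}'\mathbf{Y}/n$, while $n^{-1}\|\mathbf{Q}(\mathbf{Q}'\mathbf{Q})^{-1}\mathbf{Q}'\mathbf{P}d\|^2=d'\hat S'\hat\Sigma_Q^{-1}\hat Sd$. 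By matrix Bernstein's inequality (using $\Ep\|q(W)\|^2\le C_WJ$, $\|S\|_2\le\sqrt{C_T}$ and $\sup_{x,w}\|q(w)p(x)'\|_2\le\xi_n^2$), on an event of probability at least $1-n^{-1}$ one has $\|\hat S-S\|_2\vee\|\hat\Sigma_Q-I\|_2\le C(\xi_n^2\log n/n)^{1/2}\le 1/2$ (the last step by $\xi_n^2\log n/n\le c$); on this event $\hat\Sigma_Q$ is invertible with $\|\hat\Sigma_Q^{-1}\|_2\le2$, so
\[ \|T_n\Delta\|_2=\|Sd\|\le\|\hat Sd\|+\|\hat S-S\|_2\|d\|\le\big(1+\|\hat\Sigma_Q-I\|_2\big)^{1/2}\,\frac{\|\mathbf{Q}(\mathbf{Q}'\mathbf{Q})^{-1}\mathbf{Q}'\mathbf{P}d\|}{\sqrt n}+C\Big(\frac{\xi_n^2\log n}{n}\Big)^{1/2} . \]
To bound $n^{-1/2}\|\mathbf{Q}(\mathbf{Q}'\mathbf{Q})^{-1}\mathbf{Q}'\mathbf{P}d\|$ the minimization enters: write $\mathbf{Y}=\mathbf{P}\beta_n+\mathbf{r}+\boldsymbol{\varepsilon}$ with $\mathbf{r}$ the vector of errors $g(X_i)-g_n(X_i)$, $\boldsymbol{\varepsilon}:=(\varepsilon_1,\dots,\varepsilon_n)'$, and $\Pi:=\mathbf{Q}(\mathbf{Q}'\mathbf{Q})^{-1}\mathbf{Q}'$. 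Since $g_n\in\mathcal{H}_n(0)$ and $\|\beta_n\|\le C_b$ up to the negligible $C_gK^{-s}$ slack, $\beta_n$ is feasible in \eqref{eq:ghatm}, so optimality of $\hat\beta^c$ gives $(\mathbf{r}+\boldsymbol{\varepsilon}-\mathbf{P}d)'\Pi(\mathbf{r}+\boldsymbol{\varepsilon}-\mathbf{P}d)\le(\mathbf{r}+\boldsymbol{\varepsilon})'\Pi(\mathbf{r}+\boldsymbol{\varepsilon})$; expanding and using $\Pi^2=\Pi$ yields $\|\Pi\mathbf{P}d\|^2\le2(\|\Pi\mathbf{r}\|+\|\Pi\boldsymbol{\varepsilon}\|)\|\Pi\mathbf{P}d\|$, i.e.\ $n^{-1/2}\|\Pi\mathbf{P}d\|\le2n^{-1/2}\|\Pi\mathbf{r}\|+2n^{-1/2}\|\Pi\boldsymbol{\varepsilon}\|$. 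As $\Pi$ depends only on $(W_1,\dots,W_n)$ and $\Ep[\boldsymbol{\varepsilon}\boldsymbol{\varepsilon}'\mid W_1,\dots,W_n]\preceq C_BI$ by Assumption~\ref{as: moments}(i) and $\Ep[\varepsilon|W]=0$, we get $\Ep[\|\Pi\boldsymbol{\varepsilon}\|^2\mid W_1,\dots,W_n]\le C_B\,\mathrm{rank}(\mathbf{Q})\le C_BJ\le C_BC_JK$, so Markov gives $n^{-1/2}\|\Pi\boldsymbol{\varepsilon}\|\le C(K/(\alpha n))^{1/2}$ with probability $\ge1-\alpha$; and since $\Ep[n^{-1}\mathbf{Q}'\mathbf{r}]$ has $j$th entry $\langle q_j,T(g-g_n)\rangle$, so $\|\Ep[n^{-1}\mathbf{Q}'\mathbf{r}]\|\le\|T(g-g_n)\|_2\le C_g\tau_n^{-1}K^{-s}$ by Assumption~\ref{as: g approximation}(iii), a routine variance bound yields $n^{-1/2}\|\Pi\mathbf{r}\|\le C\tau_n^{-1}K^{-s}+C(\xi_n^2\log n/n)^{1/2}$.

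Collecting the displays, on an event of probability at least $1-\alpha-n^{-1}$,
\[ \|T\Delta\|_2\le C\Big(\frac{K}{\alpha n}+\frac{\xi_n^2\log n}{n}\Big)^{1/2}+C\tau_n^{-1}K^{-s} . \]
Substituting into the first display and using $\tau_{n,t}(a)\le\tau_n$ for every $a$ (a consequence of $\|h\|_{2,t}\le\|h\|_2$), so that $\tau_{n,t}(\|Dg_n\|_\infty/\delta)\cdot\tau_n^{-1}K^{-s}\le K^{-s}$, together with $\|g-g_n\|_{2,t}\le C_gK^{-s}$, gives exactly \eqref{eq: main rate 2}. Finally \eqref{eq: main rate 3} follows from \eqref{eq: main rate 2} by two choices of $\delta$: with $\delta=0$ (so $\|Dg_n\|_\infty/\delta=\infty$) and $\tau_{n,t}(\infty)\le\tau_n$ one gets the $\tau_n$-branch, while with $\delta=\|Dg_n\|_\infty/c_\tau$ the argument of $\tau_{n,t}$ equals $c_\tau$, so Corollary~\ref{co:bound_tau} gives $\tau_{n,t}(c_\tau)\le\tau(c_\tau)\le C_\tau$ and the leading term is $\delta=\|Dg_n\|_\infty/c_\tau$; bounding $\|Dg_n\|_\infty$ by a constant multiple of $\|Dg\|_\infty$ (up to lower-order terms) yields the $\|Dg\|_\infty$-branch, and taking the minimum completes the proof.

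\textbf{Main obstacle.} The conceptual heart is the geometric step of the first paragraph: $\hat g^c-g_n$ is not monotone, only an element of $\mathcal{H}_n(\|Dg_n\|_\infty)$, and $\tau_{n,t}(\|Dg_n\|_\infty)$ can be as large as $\tau_n$; rescaling by $\delta$ trades an additive error $\delta$ for a relaxation of the argument of $\tau_{n,t}$, and it is precisely this device that, via Corollary~\ref{co:bound_tau}, produces the two-regime structure. The main technical burden is the bookkeeping in the second paragraph: keeping every empirical-to-population substitution at the additive scale $(\xi_n^2\log n/n)^{1/2}$, and ensuring each $K^{-s}$ contribution carries a compensating factor $\tau_n^{-1}$ so that after multiplication by $\tau_{n,t}\le\tau_n$ it is not inflated by the ill-posedness. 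Subsidiary points requiring care are the feasibility of the anchor $\beta_n$ (the $C_gK^{-s}$ slack in $\|\beta_n\|\le C_b$) and the control of $\|Dg_n\|_\infty$ by $\|Dg\|_\infty$.
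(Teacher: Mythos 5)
Your architecture for \eqref{eq: main rate 2} is the same as the paper's: the triangle inequality through $g_n$, the observation that $\hat g^c-g_n\in\mathcal{H}_n(\|Dg_n\|_\infty)$ together with the $\delta$-rescaling that converts this into $\|\hat g^c-g_n\|_{2,t}\le\delta+\tau_{n,t}(\|Dg_n\|_\infty/\delta)\|T(\hat g^c-g_n)\|_2$, and a bound $\|T(\hat g^c-g_n)\|_2\le C\{(K/(\alpha n))^{1/2}+(\xi_n^2\log n/n)^{1/2}+\tau_n^{-1}K^{-s}\}$ obtained from optimality of $\hat\beta^c$ against the feasible point $\beta_n$ plus matrix concentration. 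Your bookkeeping is organized around the projection matrix $\Pi$ rather than the paper's operators $\hat T$ and $\hat m$, but the substance is identical, and the $\delta=0$ choice giving the $\tau_n$-branch of \eqref{eq: main rate 3} also matches the paper.

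There is, however, a genuine gap in your derivation of the $\|Dg\|_\infty$-branch of \eqref{eq: main rate 3}. You take $\delta=\|Dg_n\|_\infty/c_\tau$ in \eqref{eq: main rate 2} and then assert that $\|Dg_n\|_\infty$ is bounded by a constant multiple of $\|Dg\|_\infty$ ``up to lower-order terms.'' Nothing in Assumption \ref{as: g approximation} delivers this: it controls only $\|g-g_n\|_2$ and $\|T(g-g_n)\|_2$, and an $L^2$-accurate increasing approximant can have a much steeper derivative than $g$ (indeed, when $g$ is near-constant — exactly the regime the theorem is designed for — there is no a priori reason the sieve coefficient vector $\beta_n$ produces a flat $g_n$ unless one builds that into the approximation assumption). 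The paper sidesteps this entirely: after bounding $\|T(\hat g^c-g_n)\|_2$, it adds $\|T(g_n-g)\|_2\le C_g\tau_n^{-1}K^{-s}$ to obtain a bound on $\|T(\hat g^c-g)\|_2$, observes that $\hat g^c-g\in\mathcal{H}(\|Dg\|_\infty)$ (since $\hat g^c$ is increasing and $Dg\le\|Dg\|_\infty$), and applies the same $\delta$-rescaling with the \emph{population} restricted measure $\tau(\cdot)$ from \eqref{eq: tau definition}; the choice $\delta=\|Dg\|_\infty/c_\tau$ and Corollary \ref{co:bound_tau} then give the first branch with $\|Dg\|_\infty$ appearing directly. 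This repair fits inside your own framework with no new estimates — you already have every ingredient except the switch from $g_n$ to $g$ before invoking the measure of ill-posedness — but as written the step from $\|Dg_n\|_\infty$ to $\|Dg\|_\infty$ is unjustified.
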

This is the main result of this section. An important feature of this result is that since the constant $C$ depends only on the constants appearing in Assumptions \ref{as: monotone iv}-\ref{as: T approximation}, the bounds \eqref{eq: main rate 2} and \eqref{eq: main rate 3} hold uniformly over all data-generating processes that satisfy those assumptions with the same constants. In particular, for any two data-generating processes in this set, the same finite-sample bounds \eqref{eq: main rate 2} and \eqref{eq: main rate 3} hold with the same constant $C$, even though the unrestricted sieve measure of ill-posedness $\tau_n$ may be of different order of magnitude for these two data-generating processes.

Another important feature of the bound (\ref{eq: main rate 2}) is that it depends on the restricted sieve measure of ill-posedness that we know to be smaller than the unrestricted sieve measure of ill-posedness, appearing in the analysis of the unconstrained estimator. In particular, we know from Section \ref{sec: MIVE} that $\tau_{n,t}(a)\leq \tau(a)$ and that, by Corollary \ref{co:bound_tau}, $\tau(a)$ is uniformly bounded if $a$ is not too large. Employing this result, we obtain the bound (\ref{eq: main rate 3}) of Theorem \ref{thm: risk bounds}.\footnote{Ideally, it would be of great interest to have a tight bound on the restricted sieve measure of ill-posedness $\tau_{n,t}(a)$ for all $a\geq 0$, so that it would be possible to optimize (\ref{eq: main rate 2}) over $\delta$. Results of this form, however, are not yet available in the literature, and so the optimization is not possible.}

The bound \eqref{eq: main rate 3} has two regimes depending on whether the following inequality
\begin{equation}\label{eq: two regimes}
\|D g\|_{\infty} \leq (\tau_n - 1) \Big(\frac{K}{\alpha n} + \frac{\xi_n^2\log n}{n}\Big)^{1/2}
\end{equation}
holds. The most interesting feature of this bound is that in the first regime, when the inequality \eqref{eq: two regimes} is satisfied, the bound is independent of the (unrestricted) sieve measure of ill-posedness $\tau_n$, and can be small if the function $g$ is not too steep, regardless of whether the original NPIV model (\ref{eq:model}) is mildly or severely ill-posed. This is the regime in which the bound relies upon the monotonicity constraint imposed on the estimator $\hat{g}^c$. For a given sample size $n$, this regime is active if the function $g$ is not too steep. 

As the sample size $n$ grows large, the right-hand side of inequality \eqref{eq: two regimes} decreases (if $K=K_n$ grows slowly enough) and eventually becomes smaller than the left-hand side, and the bound \eqref{eq: main rate 3} switches to its second regime, in which it depends on the (unrestricted) sieve measure of ill-posedness $\tau_n$. This is the regime in which the bound does not employ the monotonicity constraint imposed on $\hat{g}^c$. However, since $\tau_n\to \infty$, potentially at a very fast rate, even for relatively large sample sizes $n$ and/or relatively steep functions $g$, the bound may be in its first regime, where the monotonicity constraint is important. The presence of the first regime and the observation that it is active in a (potentially very) large set of data generated processes provides a theoretical justification for the importance of imposing the monotonicity constraint on the estimators of the function $g$ in the NPIV model (\ref{eq:model}) when the monotone IV Assumption \ref{as: monotone iv} is satisfied.

A corollary of the existence of the first regime in the bound \eqref{eq: main rate 3} is that the constrained estimator $\hat{g}^c$ possesses a very fast rate of convergence in a large but slowly shrinking neighborhood of constant functions, independent of the (unrestricted) sieve measure of ill-posedness $\tau_n$:

\begin{corollary}[Fast convergence rate of the constrained estimator under local-to-constant asymptotics]\label{cor: fast rate under local-to-constant}
Consider the triangular array asymptotics where the data generating process, including the function $g$, is allowed to vary with $n$. Let Assumptions \ref{as: monotone iv}-\ref{as: T approximation} be satisfied with the same constants for all $n$. In addition, assume that $\xi_n^2\leq C_\xi K$ for some $0<C_\xi<\infty$ and $K\log n/n\to 0$. If $\sup_{x\in[0,1]}Dg(x)=O((K\log n/n)^{1/2})$, then
\begin{equation}\label{eq: fast rate main}
\|\hat{g}^c-g\|_{2,t} = O_p((K\log n/n)^{1/2} + K^{-s}).
\end{equation}
In particular, if $\sup_{x\in[0,1]}D g(x)=O(n^{-s/(1+2s)}\sqrt{\log n})$ and $K=K_n=C_Kn^{1/(1+2s)}$ for some $0<C_K<\infty$, then
$$
\|\hat{g}^c-g \|_{2,t} = O_p(n^{-s/(1+2s)}\sqrt{\log n}).
$$
\end{corollary}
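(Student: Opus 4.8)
The plan is to read the corollary off the non-asymptotic bound \eqref{eq: main rate 3} of Theorem~\ref{thm: risk bounds} (which itself combines \eqref{eq: main rate 2} with the uniform bound $\tau_{n,t}(a)\le\tau(a)\le C_\tau$ supplied by Corollary~\ref{co:bound_tau}). The structural point that makes this legitimate in the triangular-array setting is that Assumptions~\ref{as: monotone iv}--\ref{as: T approximation} are imposed with \emph{the same} constants for all $n$, so the constants $c,C$ in Theorem~\ref{thm: risk bounds} may be taken independent of $n$; hence, for each $n$ with $\xi_n^2\log n/n\le c$ (which holds for all large $n$, since $\xi_n^2\log n/n\le C_\xi K\log n/n\to 0$) and each $\alpha\in(0,1)$, the inequality \eqref{eq: main rate 3} holds with this fixed $C$ on an event of probability at least $1-\alpha-n^{-1}$.

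Bounding $\|\hat{g}^c-g\|_{2,t}$ by the first term in the minimum of \eqref{eq: main rate 3}, and using $\xi_n^2\le C_\xi K$ together with $\log n\ge 1$, I obtain for all $n\ge 3$, on that event,
$$
\|\hat{g}^c-g\|_{2,t}\le C\,\|Dg\|_\infty+C\Big(\tfrac1\alpha+C_\xi\Big)^{1/2}\Big(\frac{K\log n}{n}\Big)^{1/2}+CK^{-s}.
$$
The hypothesis $\sup_{x\in[0,1]}Dg(x)=O((K\log n/n)^{1/2})$ — which is exactly the statement that $g$ lies in the slowly shrinking ``not-too-steep'' region where the first regime of \eqref{eq: main rate 3} operates — supplies constants $C_D,n_0$ with $\|Dg\|_\infty\le C_D(K\log n/n)^{1/2}$ for $n\ge n_0$, so the right-hand side above is at most $C\big(C_D+(\alpha^{-1}+C_\xi)^{1/2}\big)(K\log n/n)^{1/2}+CK^{-s}$.

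To convert this into \eqref{eq: fast rate main} I fix $\varepsilon>0$, take $\alpha=\varepsilon/2$, and note that for all large $n$ the event has probability $\ge 1-\varepsilon$ while the bound is $\le M\{(K\log n/n)^{1/2}+K^{-s}\}$ for a constant $M=M(\varepsilon)$ free of $n$; this is precisely $\|\hat{g}^c-g\|_{2,t}=O_p((K\log n/n)^{1/2}+K^{-s})$. For the ``in particular'' claim, substitute $K=K_n=C_Kn^{1/(1+2s)}$: then $K\log n/n=C_Kn^{-2s/(1+2s)}\log n\to 0$ (so the maintained condition $K\log n/n\to 0$ is met), $(K\log n/n)^{1/2}=C_K^{1/2}n^{-s/(1+2s)}\sqrt{\log n}$, $K^{-s}=C_K^{-s}n^{-s/(1+2s)}$, and the steepness assumption $\sup_x Dg(x)=O(n^{-s/(1+2s)}\sqrt{\log n})$ is exactly $O((K_n\log n/n)^{1/2})$; hence \eqref{eq: fast rate main} applies and its right-hand side is $O_p(n^{-s/(1+2s)}\sqrt{\log n})$.

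Beyond invoking Theorem~\ref{thm: risk bounds}, there is essentially no hard step — the proof is bookkeeping. The two points that need attention are keeping $C$ uniform along the triangular array (which is why the corollary demands common constants in Assumptions~\ref{as: monotone iv}--\ref{as: T approximation}) and correctly handling the confidence level $\alpha$, which appears both in the probability $1-\alpha-n^{-1}$ and inside the bound \eqref{eq: main rate 3}, when passing to the $O_p$ conclusion.
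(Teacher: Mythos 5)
Your proposal is correct and is exactly the argument the paper intends: the paper's own proof of this corollary is the single line ``follows immediately from Theorem~\ref{thm: risk bounds},'' and you have simply written out the routine bookkeeping (taking the first term of the minimum in \eqref{eq: main rate 3}, absorbing $\xi_n^2\le C_\xi K$ and $K/(\alpha n)$ into $(K\log n/n)^{1/2}$, and handling $\alpha$ correctly in the passage to $O_p$). No differences to report.
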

\begin{remark}[On the condition $\xi_n^2\leq C_\xi K$]
The condition $\xi_n^2\leq C_\xi K$, for $0<C_\xi<\infty$, is satisfied if the sequences $\{p_k(x),k\geq 1\}$ and $\{q_k(w),k\geq 1\}$ consist of commonly used bases such as Fourier, spline, wavelet, or local polynomial partition series; see \cite{Belloni:2014hl} for details.
\QEDA
\end{remark}
The local-to-constant asymptotics considered in this corollary captures the finite sample situation in which the regression function is not too steep relative to the sample size.
The convergence rate in this corollary is the standard polynomial rate of nonparametric conditional mean regression estimators up to a $(\log n)^{1/2}$ factor, regardless of whether the original NPIV problem without our monotonicity assumptions is mildly or severely ill-posed. One way to interpret this result is that the constrained estimator $\hat{g}^c$ is able to recover regression functions in the shrinking neighborhood of constant functions at a fast polynomial rate. Notice that the neighborhood of functions $g$ that satisfy $\sup_{x\in[0,1]}Dg(x)=O((K\log n/n)^{1/2})$ is shrinking at a slow rate because $K\to\infty$, in particular the rate is much slower than $n^{-1/2}$. Therefore, in finite samples, we expect the estimator to perform well for a wide range of (non-constant) regression functions $g$ as long as the maximum slope of $g$ is not too large relative to the sample size.

\begin{remark}[The convergence rate of $\hat{g}^c$ is not slower than that of $\hat{g}^u$]\label{rem: standard rate}
If we replace the condition $\xi_n^2\log n/n\leq c$ in Theorem \ref{thm: risk bounds} by a more restrictive condition $\tau_n^2\xi_n^2\log n/n\leq c$, then in addition to the bounds (\ref{eq: main rate 2}) and (\ref{eq: main rate 3}), it is possible to show that with probability at least $1-\alpha - n^{-1}$, we have
$$
\|\hat{g}^c-g\|_2\leq C(\tau_n (K/(\alpha n))^{1/2} + K^{-s}).
$$
This implies that the constrained estimator $\hat{g}^c$ satisfies
$
\|\hat{g}^c - g\|_2= O_p(\tau_n(K/n)^{1/2} + K^{-s}),
$
which is the standard minimax optimal rate of convergence established for the unconstrained estimator $\hat{g}^u$ in \cite{blundell2007}.\QEDA
\end{remark}

In conclusion, in general, the convergence rate of the constrained estimator is the same as the standard minimax optimal rate, which depends on the degree of ill-posedness and may, in the worst-case, be logarithmic. This case occurs in the interior of the monotonicity constraint when $g$ is strictly monotone. On the other hand, under the monotone IV assumption, the constrained estimator converges at a very fast rate, independently of the degree of ill-posedness, in a large but slowly shrinking neighborhood of constant functions, a part of the boundary of the monotonicity constraint. In finite samples, we expect to experience cases between the two extremes, and the bounds \eqref{eq: main rate 2} and \eqref{eq: main rate 3} provide information on what the performace of the constrained estimator depends in that general case. Since the first regime of  bound (\ref{eq: main rate 3}) is active in a large set of data generating processes and sample size combinations, and since the fast convergence rate in Corollary~\ref{cor: fast rate under local-to-constant} is obtained in a large but slowly shrinking neighborhood of constant functions, we expect the boundary effect due to the monotonicity constraint to be strong even far away from the boundary and for relatively large sample sizes.

\begin{remark}[Average Partial Effects]
	We expect similar results to Theorem~\ref{thm: risk bounds} and Corollary~\ref{cor: fast rate under local-to-constant} to hold in the estimation of linear functionals of $g$, such as average marginal effects. In the unconstrained problem, estimators of linear functionals do not necessarily converge at polynomial rates and may exhibit similarly slow, logarithmic rates as for estimation of the function $g$ itself (e.g. \cite{ECT:9621058}). Therefore, imposing monotonicity as we do in this paper may also improve statistical properties of estimators of such functionals. While we view this as a very important extension of our work, we develop this direction in a separate paper.\QEDA
\end{remark}

\begin{remark}[On the role of the monotonicity constraint]
	Imposing the monotonicity constraint in the NPIV estimation procedure reduces variance by removing non-monotone oscillations in the estimator that are due to sampling noise. Such oscillations are a common feature of unconstrained estimators in ill-posed inverse problems and lead to large variance of such estimators. The reason for this phemonon can be seen in the convergence rate of unconstrained estimators,\footnote{see, for example, \cite{blundell2007}} $\tau_n (K/n)^{1/2} + K^{-s}$, in which the variance term $(K/n)^{1/2}$ is blown up by the multiplication by the measure of ill-posedness $\tau_n$. Because of this relatively large variance of NPIV estimators we expect the unconstrained estimator to possess non-monotonicities even in large samples and even if $g$ is far away from constant functions. Therefore, imposing monotonicity of $g$ can have significant impact on the estimator's performance even in those cases.\QEDA
\end{remark}

\begin{remark}[On robustness of the constrained estimator, I]
	Implementation of the estimators $\hat{g}^c$ and $\hat{g}^u$ requires selecting the number of series terms $K=K_n$ and $J=J_n$. This is a difficult problem because the measure of ill-posedness $\tau_n = \tau(K_n)$, appearing in the convergence rate of both estimators, depends on $K=K_n$ and can blow up quickly as we increase $K$. Therefore, setting $K$ higher than the optimal value may result in a severe deterioration of the statistical properties of $\hat{g}^u$. The problem is alleviated, however, in the case of the constrained estimator $\hat{g}^c$ because $\hat{g}^c$ satisfies the bound (\ref{eq: main rate 3}) of Theorem \ref{thm: risk bounds}, which is independent of $\tau_n$ for sufficiently large $K$. In this sense, the constrained estimator $\hat{g}^c$ possesses some robustness against setting $K$ too high. \QEDA
\end{remark}

\begin{remark}[On robustness of the constrained estimator, II]
	Notice that the fast convergence rates in the local-to-constant asymptotics derived in this section are obtained under two monotonicity conditions, Assumptions~\ref{as: monotone iv} and \ref{as: monotone regression}, but the estimator imposes only the monotonicity of the regression function, not that of the instrument. Therefore, our proposed constrained estimator consistently estimates the regression function $g$ even when the monotone IV assumption is violated.\QEDA
\end{remark}

\begin{remark}[On alternative estimation procedures]
	In the local-to-constant asymptotic framework where $\sup_{x\in[0,1]}Dg(x)=O((K\log n/n)^{1/2})$, the rate of convergence in \eqref{eq: fast rate main} can also be obtained by simply fitting a constant. However, such an estimator, unlike our constrained estimator, is not consistent when the regression function $g$ does not drift towards a constant. Alternatively, one can consider a sequential approach to estimating $g$, namely one can first test whether the function $g$ is constant, and then either fit the constant or apply the unconstrained estimator $\hat{g}^u$ depending on the result of the test. However, it seems difficult to tune such a test to match the performance of the constrained estimator $\hat{g}^c$ studied in this paper. \QEDA
\end{remark}

\begin{remark}[Estimating partially flat functions]\label{rem: partially flat functions}
Since the inversion of the operator $T$ is a global inversion in the sense that the resulting estimators $\hat{g}^c(x)$ and $\hat{g}^u(x)$ depend not only on the shape of $g(x)$ locally at $x$, but on the shape of $g$ over the whole domain, we do not expect convergence rate improvements from imposing monotonicity when the function $g$ is partially flat. However, we leave the question about potential improvements from imposing monotonicity in this case for future research.
\QEDA
\end{remark}

\begin{remark}[Computational aspects]
	The implementation of the constrained estimator in \eqref{eq:ghatm} is particularly simple when the basis vector $p(x)$ consists of polynomials or B-splines of order $2$. In that case, $Dp(x)$ is linear in $x$ and, therefore, the constraint $Dp(x)'b \geq 0$ for all $x\in[0,1]$ needs to be imposed only at the knots or endpoints of $[0,1]$, respectively. The estimator $\hat{\beta}^c$ thus minimizes a quadratic objective function subject to a (finite-dimensional) linear inequality constraint. When the order of the polynomials or B-splines in $p(x)$ is larger than $2$, imposing the monotonicity constraint is slightly more complicated, but it can still be transformed into a finite-dimensional constraint using a representation of non-negative polynomials as a sum of squared polynomials:\footnote{We thank A. Belloni for pointing out this possibility.} one can represent any non-negative polynomial $f:\mathbb{R}\to\mathbb{R}$ as a sum of squares of polynomials (see the survey by \cite{Reznick:2000ve}, for example), i.e. $f(x) = \tilde{p}(x)' M \tilde{p}(x)$ where $\tilde{p}(x)$ is the vector of monomials up to some order and $M$ a matrix of coefficients. Letting $f(x) = Dp(x)'b$, our monotonicity constraint $f(x)\geq 0$ can then be written as $\tilde{p}(x)' M \tilde{p}(x)\geq 0$ for some matrix $M$ that depends on $b$. This condition is equivalent to requiring the matrix $M$ to be positive semi-definite. $\hat{\beta}^c$ thus minimizes a quadratic objective function subject to a (finite-dimensional) semi-definiteness constraint.

	For polynomials defined not over whole $\mathbb{R}$ but only over a compact sub-interval of $\mathbb{R}$, one can use the same reasoning as above together with a result attributed to M. Fekete (see \cite{Powers:2000kl}, for example): for any polynomial $f(x)$ with $f(x)\geq 0$ for $x\in[-1,1]$, there are polynomials $f_1(x)$ and $f_2(x)$, non-negative over whole $\mathbb{R}$, such that $f(x) = f_1(x) + (1-x^2)f_2(x)$. Letting again $f(x)=Dp(x)'b$, one can therefore impose our monotonicity constraint by imposing the positive semi-definiteness of the coefficients in the sums-of-squares representation of $f_1(x)$ and $f_2(x)$.\QEDA
\end{remark}

\begin{remark}[Penalization and shape constraints]\label{rem: penalization}
	Recall that the estimators $\hat{g}^u$ and $\hat{g}^c$ require setting the constraint $\|b\|\leq C_b$ in the optimization problems (\ref{eq:ghat}) and (\ref{eq:ghatm}). In practice, this constraint, or similar constraints in terms of Sobolev norms, which also impose bounds on derivatives of $g$, are typically not enforced in the implementation of an NPIV estimator. \cite{horowitz2012} and \cite{Horowitz2012as}, for example, observe that imposing the constraint does not seem to have an effect in their simulations. On the other hand, especially when one includes many series terms in the computation of the estimator, \cite{blundell2007} and \cite{gagliardini2012fd}, for example, argue that penalizing the norm of $g$ and of its derivatives may stabilize the estimator by reducing its variance. In this sense, penalizing the norm of $g$ and of its derivatives may have a similar effect as imposing monotonicity. However, there are at least two important differences between penalization and imposing monotonicity. First, penalization increases bias of the estimators. In fact, especially in severely ill-posed problems, even small amount of penalization may lead to large bias. In contrast, the monotonicity constraint on the estimator does not increase bias much when the function $g$ itself satisfies the monotonicity constraint. Second, penalization requires the choice of a tuning parameter that governs the strength of penalization, which is a difficult statistical problem. In contrast, imposing monotonicity does not require such choices and can often be motivated directly from economic theory.\QEDA
\end{remark}

\section{Identification Bounds under Monotonicity}\label{sec: identification}

In the previous section, we derived non-asymptotic error bounds on the constrained estimator in the NPIV model \eqref{eq:model} assuming that $g$ is point-identified, or equivalently, that the linear operator $T$ is invertible. 
 \cite{Newey:2003p2167} linked point-identification of $g$ to completeness of the conditional distribution of $X$ given $W$, but this completeness condition has been argued to be strong (\cite{Santos2012}) and non-testable (\cite{canay2013re}). In this section, we therefore discard the completeness condition and explore the identification power of our monotonicity conditions, which appear natural in many economic applications. Specifically, we derive informative bounds on the identified set of functions $g$ satisfying \eqref{eq:model}. This means that, under our two monotonicity assumptions, the identified set is a proper subset of all monotone functions $g\in \mathcal{M}$.

%
%

By a slight abuse of notation, we define the sign of the slope of a differentiable, monotone function $f\in\mathcal{M}$ by
$$sign(Df) := \left\{ \begin{array}{cc} 1,& Df(x)\geq 0 \, \forall x\in [0,1]\text{ and } Df(x)>0\text{ for some } x\in[0,1]\\ 0,& Df(x)=0 \, \forall x\in [0,1]\\ -1,&  Df(x)\leq 0 \, \forall x\in [0,1]\text{ and } Df(x)<0\text{ for some } x\in[0,1]\end{array} \right. $$
and the sign of a scalar $b$ by $sign(b) := 1\{b>0 \} - 1\{b<0 \}$.
We first show that if the function $g$ is monotone, the sign of its slope is identified under our monotone IV assumption (and some other technical conditions):

\begin{theorem}[Identification of the sign of the slope]\label{lem: id of sign of Dg}
	Suppose Assumptions~\ref{as: monotone iv} and \ref{as: density} hold and $f_{X,W}(x,w)>0$ for all $(x,w)\in(0,1)^2$. If $g$ is monotone and continuously differentiable, then $sign(Dg)$ is identified.
\end{theorem}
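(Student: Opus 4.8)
The statement is about \emph{identification}: we observe the joint distribution of $(Y,X,W)$ and want to argue that $\sign(Dg)$ is pinned down by it. The plan is to show that $\sign(Dg)$ is a deterministic function of the conditional mean $\mu(w):=\Ep[Y\,|\,W=w]$, which is directly identified. Since $g$ is monotone and continuously differentiable, $\sign(Dg)$ takes exactly one of three values, corresponding to $g$ being (i) increasing and non-constant, (ii) constant, or (iii) decreasing and non-constant; the goal is to match each of these three mutually exclusive cases to an observable feature of $\mu$.

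First I would record the easy implication of the monotone IV condition \eqref{eq: fosd}. Because $\Ep[\varepsilon\,|\,W]=0$, we have $\mu(w)=\Ep[g(X)\,|\,W=w]$, and first-order stochastic dominance implies that $w\mapsto\Ep[g(X)\,|\,W=w]$ inherits the monotonicity direction of $g$: if $g$ is increasing then $\mu$ is increasing, if $g$ is decreasing then $\mu$ is decreasing, and if $g$ is constant then $\mu$ is constant. In particular $\mu$ is always monotone, so it falls into exactly one of the categories ``constant'', ``increasing and non-constant'', or ``decreasing and non-constant''.

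The heart of the argument is the converse: if $g$ is monotone and non-constant, then $\mu$ is non-constant. Suppose for contradiction that $g$ is increasing (the decreasing case is symmetric, replacing $g$ by $-g$) and that $\mu\equiv\mu_0$ is constant. Since $\operatorname{supp}(W)=[0,1]$ by condition (iii) of Assumption~\ref{as: density}, the function $h:=g-\mu_0$, which is increasing and continuously differentiable, satisfies $\Ep[h(X)\,|\,W=w]=0$ for all $w\in[0,1]$, in particular at $w=w_1$ and $w=w_2$. Integrating by parts (the boundary terms vanish because $X\,|\,W=w_j$ has no atoms at $0$ or $1$),
\[
0=\Ep[h(X)\,|\,W=w_1]-\Ep[h(X)\,|\,W=w_2]=-\int_0^1 Dg(x)\,\big(F_{X|W}(x|w_1)-F_{X|W}(x|w_2)\big)\,dx .
\]
By \eqref{eq: fosd} the kernel $F_{X|W}(x|w_1)-F_{X|W}(x|w_2)$ is non-negative, and by \eqref{eq: lower cdf bound} and \eqref{eq: upper cdf bound} together with $f_{X,W}>0$ on $(0,1)^2$ (which forces $F_{X|W}(x|w_2)>0$ and $1-F_{X|W}(x|w_1)>0$ for $x\in(0,1)$, and hence, using $C_F>1$ and $x_1<x_2$, strict positivity of the kernel on all of $(0,1)$), it is in fact strictly positive for every $x\in(0,1)$. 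Since $Dg\geq0$, the integrand is non-negative, so the integral being zero forces $Dg(x)=0$ for a.e.\ $x$, hence $Dg\equiv0$ by continuity --- contradicting that $g$ is non-constant. This step, and in particular the passage from the non-redundancy bounds \eqref{eq: lower cdf bound}--\eqref{eq: upper cdf bound} and positivity of $f_{X,W}$ to strict positivity of the kernel on the whole interval, is where the main work lies; it is a relative of the lower bound in Theorem~\ref{thm: main result 1} (and of Lemma~\ref{lem: crucial lemma}), either of which could alternatively be invoked to conclude $h\equiv0$ directly.

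Combining the two directions: $\mu$ is constant if and only if $\sign(Dg)=0$; $\mu$ is non-constant and increasing if and only if $\sign(Dg)=1$ (it cannot be $-1$, since then $\mu$ would be both increasing and decreasing, hence constant); and symmetrically $\mu$ is non-constant and decreasing if and only if $\sign(Dg)=-1$. These three cases for $\mu$ are exhaustive (as $\mu$ is monotone) and mutually exclusive, so $\sign(Dg)$ is determined by the identified object $\mu$, i.e.\ it is identified.
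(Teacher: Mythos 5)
Your proposal is correct and follows essentially the same route as the paper's proof: both reduce identification of $\sign(Dg)$ to the direction and (non-)constancy of $w\mapsto\Ep[Y|W=w]$, use the integration-by-parts identity $M(w_2)-M(w_1)=\int_0^1 Dg(x)\bigl(F_{X|W}(x|w_1)-F_{X|W}(x|w_2)\bigr)\,dx$, and invoke conditions \eqref{eq: lower cdf bound}--\eqref{eq: upper cdf bound} together with positivity of $f_{X,W}$ to show this difference is strictly positive when $g$ is increasing and non-constant. Your contradiction phrasing (kernel strictly positive on all of $(0,1)$, hence $Dg\equiv0$) versus the paper's direct lower bound over $(0,x_2)$ or $(x_1,1)$ is only a cosmetic difference.
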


This theorem shows that, under certain regularity conditions, the monotone IV assumption and monotonicity of the regression function $g$ imply identification of the sign of the regression function's slope, even though the regression function itself is, in general, not point-identified. This result is useful because in many empirical applications it is natural to assume a monotone relationship between outcome variable $Y$ and the endogenous regressor $X$, given by the function $g$, but the main question of interest concerns not the exact shape of $g$ itself, but whether the effect of $X$ on $Y$, given by the slope of $g$, is positive, zero, or negative; see, for example, the discussion in \cite{abrevaya2010re}).

\begin{remark}[A test for the sign of the slope of $g$]\label{rem: monotone reduced form}
	In fact, Theorem \ref{lem: id of sign of Dg} yields a surprisingly simple way to test the sign of the slope of the function $g$. Indeed, the proof of Theorem \ref{lem: id of sign of Dg} reveals that $g$ is increasing, constant, or decreasing if the function $w\mapsto \Ep[Y|W=w]$ is increasing, constant, or decreasing, respectively. By Chebyshev's association inequality (Lemma \ref{lem: Chebyshev's association inequality} in the appendix), the latter assertions are equivalent to the coefficient $\beta$ in the linear regression model
	\begin{equation}\label{eq: linear regression model}
	Y=\alpha+\beta W+U,\,\,\, \Ep[U W]=0
	\end{equation}
	being positive, zero, or negative since $sign(\beta)=sign(cov(W,Y))$ and
	\begin{align*}
		cov(W,Y)&=\Ep[W Y]-\Ep[W]\Ep[Y]\\
		&=\Ep[W\Ep[Y|W]]-\Ep[W] \Ep[\Ep[Y|W]]=cov(W,\Ep[Y|W])
	\end{align*}
	by the law of iterated expectations. Therefore, under our conditions, hypotheses about the sign of the slope of the function $g$ can be tested by testing the corresponding hypotheses about the sign of the slope coefficient $\beta$ in the linear regression model \eqref{eq: linear regression model}. In particular, under our two monotonicity assumptions, one can test the hypothesis of ``no effect'' of $X$ on $Y$, i.e. that $g$ is a constant, by testing whether $\beta=0$ or not using the usual t-statistic. The asymptotic theory for this statistic is exactly the same as in the standard regression case with exogenous regressors, yielding the standard normal limiting distribution and, therefore, completely avoiding the ill-posed inverse problem of recovering $g$. \QEDA
\end{remark}

It turns out that our two monotonicity assumptions possess identifying power even beyond the slope of the regression function.

\begin{definition}[Identified set]
We say that two functions $g',g''\in L^2[0,1]$ are observationally equivalent if $\Ep[g'(X)-g''(X) | W]=0$. The identified set $\Theta$ is defined as the set of all functions $g'\in\mathcal{M}$ that are observationally equivalent to the true function $g$ satisfying \eqref{eq:model}.
\end{definition}

The following theorem provides necessary conditions for observational equivalence.
\begin{theorem}[Identification bounds]\label{lem: id set bounds}
	Let Assumptions~\ref{as: monotone iv} and \ref{as: density} be satisfied, and let $g',g''\in L^2[0,1]$. Further, let $\bar{C} := C_1/c_p$ where $C_1:=(\tilde{x}_2-{\tilde{x}_1})^{1/2}\; / \min\{\tilde{x}_1-x_1, x_2-\tilde{x}_2\}$ and $c_p:= \min\{1-w_2,w_1\} \min\{ C_F-1,2 \}c_wc_f/4$. If there exists a function $h\in L^2[0,1]$ such that $g'-g''+h\in\mathcal{M}$ and $\|h\|_{2,t}+\bar{C}\|T\|_2\|h\|_2 < \|g'-g''\|_{2,t}$, then $g'$ and $g''$ are not observationally equivalent.
\end{theorem}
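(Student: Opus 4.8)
The plan is to prove the contrapositive: assuming $g'$ and $g''$ \emph{are} observationally equivalent, I will show that no function $h\in L^2[0,1]$ can satisfy both $g'-g''+h\in\mathcal{M}$ and the strict inequality $\|h\|_{2,t}+\bar{C}\|T\|_2\|h\|_2<\|g'-g''\|_{2,t}$. The first step is to translate observational equivalence into a statement about the operator $T$. Since $(T\phi)(w)=\Ep[\phi(X)\mid W=w]f_W(w)$ and, under Assumption~\ref{as: density}(iii), $f_W(w)$ is bounded away from $0$, the condition $\Ep[(g'-g'')(X)\mid W]=0$ is equivalent to $T(g'-g'')=0$ in $L^2[0,1]$. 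Moreover, Assumption~\ref{as: density}(i) makes $T$ a bounded operator, so $\|T\|_2<\infty$ and the quantity $\bar{C}\|T\|_2\|h\|_2$ is well defined (no injectivity of $T$ is needed here, which is consistent with working outside the completeness/point-identification regime).

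The core of the argument is then a short application of Theorem~\ref{thm: main result 1}. Fix any $h\in L^2[0,1]$ with $\psi:=g'-g''+h\in\mathcal{M}$. Because $\psi$ is monotone, Theorem~\ref{thm: main result 1} yields $\|\psi\|_{2,t}\le\bar{C}\|T\psi\|_2$, where $\bar{C}=C_1/c_p$ is the explicit constant delivered by the proof of that theorem (the appendix argument tracks the dependence on $\zeta$ and produces precisely this value). Since $T(g'-g'')=0$ by observational equivalence, we have $T\psi=T(g'-g'')+Th=Th$, and hence $\|\psi\|_{2,t}\le\bar{C}\|Th\|_2\le\bar{C}\|T\|_2\|h\|_2$. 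A triangle inequality then gives
$$\|g'-g''\|_{2,t}=\|\psi-h\|_{2,t}\le\|\psi\|_{2,t}+\|h\|_{2,t}\le\|h\|_{2,t}+\bar{C}\|T\|_2\|h\|_2,$$
which contradicts the assumed strict inequality. Therefore no such $h$ exists, i.e. $g'$ and $g''$ cannot be observationally equivalent, which is the claim.

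The only genuinely delicate point is the bookkeeping of the constant: Theorem~\ref{thm: main result 1} as stated merely asserts the existence of \emph{some} $\bar{C}$ depending on $\zeta$, whereas here we need the sharper fact that it can be taken equal to $C_1/c_p$ with $C_1$ and $c_p$ as defined in the statement. So the ``hard part'' is really contained in the proof of Theorem~\ref{thm: main result 1}, carried out with explicit constants: in particular, tracking how the strictly positive lower bound on $\|\Ep[h(X)\mid W=\cdot]\|_2$ — obtained from non-redundancy of the instrument via \eqref{eq: lower cdf bound}--\eqref{eq: upper cdf bound} together with the density lower bounds $c_f$ (and $f_W$ bounds) of Assumption~\ref{as: density} — propagates through the truncation step that converts the $L^1$-type estimate into the $\|\cdot\|_{2,t}$ estimate, producing exactly the factor $C_1/c_p$. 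Everything else here — the reduction to $T(g'-g'')=0$, the substitution $T\psi=Th$, and the triangle inequality — is routine.
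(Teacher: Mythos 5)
Your proof is correct and follows essentially the same route as the paper's: the paper deduces the result from Lemma~\ref{lem: general bound on tau} (itself a normalized application of Theorem~\ref{thm: main result 1} plus the triangle inequality), whereas you apply Theorem~\ref{thm: main result 1} directly to $\psi=g'-g''+h$ and conclude with the same triangle inequality, which is a slightly more streamlined packaging of the identical idea. Your remark about the constant is also on point: the paper's proof of Theorem~\ref{thm: main result 1} does yield exactly $\bar{C}=C_1C_2=C_1/c_p$ with $c_p$ as stated, via Lemma~\ref{lem: crucial lemma}.
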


Under Assumption \ref{as: monotone regression} that $g$ is increasing, Theorem~\ref{lem: id set bounds} suggests the construction of a set $\Theta'$ that includes the identified set $\Theta$ by
$ \Theta' := \mathcal{M}_+\backslash \Delta,$
where $\mathcal{M}_+ := \mathcal{H}(0)$ denotes all increasing functions in $\mathcal{M}$ and 
\begin{multline}
	\Delta := \Big\{ g'\in \mathcal{M}_+:\, \text{there exists } h\in L^2[0,1] \text{ such that}\\
 g'-g+h\in\mathcal{M}\; \text{ and }\; \|h\|_{2,t}+\bar{C}\|T\|_2\|h\|_2 < \|g'-g\|_{2,t} \Big\}.\label{eq: condition id set}
\end{multline}
We emphasize that $\Delta$ is not empty, which means that our Assumptions~\ref{as: monotone iv}--\ref{as: monotone regression} possess identifying power leading to nontrivial bounds on $g$. Notice that the constant $\bar{C}$ depends only on the observable quantities $c_w$, $c_f$, and $C_F$ from Assumptions~\ref{as: monotone iv}--\ref{as: density}, and on the known constants $\tilde{x}_1$, $\tilde{x}_2$, $x_1$, $x_2$, $w_1$, and $w_2$. Therefore, the set $\Theta'$ could, in principle, be estimated, but we leave estimation and inference on this set to future research.
\begin{remark}[Further insight on identification bounds]
It is possible to provide more insight into which functions are in $\Delta$ and thus not in $\Theta'$. First, under the additional minor condition that $f_{X,W}(x,w)>0$ for all $(x,w)\in(0,1)^2$,  all functions in $\Theta'$ have to intersect $g$; otherwise they are not observationally equivalent to $g$. Second, for a given $g'\in\mathcal{M}_+$ and $h\in L^2[0,1]$ such that $g'-g+h$ is monotone, the inequality in condition \eqref{eq: condition id set} is satisfied if $\|h\|_2$ is not too large relative to $\|g'-g\|_{2,t}$. In the extreme case, setting $h=0$ shows that $\Theta'$ does not contain elements $g'$ that disagree with $g$ on $[\widetilde{x}_1,\widetilde{x}_2]$ and such that $g'-g$ is monotone. More generally, $\Theta'$ does not contain elements $g'$ whose difference with $g$ is too close to a monotone function. Therefore, for example, functions $g'$ that are much steeper than $g$ are excluded from $\Theta'$. \QEDA
\end{remark}

\section{Testing the Monotonicity Assumptions}\label{sec: testing}


In this section, we propose tests of our two monotonicity assumptions based on an i.i.d. sample $(X_i,W_i)$, $i=1,\dots,n$, from the distribution of $(X,W)$. First, we discuss an adaptive procedure for testing the stochastic dominance condition \eqref{eq: fosd} in our monotone IV Assumption~\ref{as: monotone iv}. The null and alternative hypotheses are
\begin{align*}
&H_0:\, F_{X|W}(x|w')\geq F_{X|W}(x|w'')\text{ for all }x,w',w''\in(0,1)\text{ with }w'\leq w''\\
&H_a:\, F_{X|W}(x|w')< F_{X|W}(x|w'')\text{ for some }x,w',w''\in(0,1)\text{ with }w'\leq w'',
\end{align*}
respectively. The null hypothesis, $H_0$, is equivalent to stochastic monotonicity of the conditional distribution function $F_{X|W}(x|w)$. Although there exist several good tests of $H_0$ in the literature (see \cite{lee2009fd}, \cite{Delgado:2012fk} and \cite{Lee:2014zl}, for example), to the best of our knowledge there does not exist any procedure that adapts to the unknown smoothness level of $F_{X|W}(x|w)$. We provide a test that is adaptive in this sense, a feature that is not only theoretically attractive, but also important in practice: it delivers a data-driven choice of the smoothing parameter $h_n$ (bandwidth value) of the test whereas nonadaptive tests are usually based on the assumption that $h_n\to 0$ with some rate in a range of prespecified rates, leaving the problem of the selection of an appropriate value of $h_n$ in a given data set to the researcher (see, for example, \cite{lee2009fd} and \cite{Lee:2014zl}). We develop the critical value for the test that takes into account the data dependence induced by the data-driven choice of the smoothing parameter. Our construction leads to a test that controls size, and is asymptotically non-conservative.

Our test is based on the ideas in \cite{Chetverikov:2012fk} who in turn builds on the methods for adaptive specification testing in \cite{horowitz2001ww} and on the theoretical results on high dimensional distributional approximations in \cite{CCK1} (CCK). Note that $F_{X|W}(x|w)=\Ep[1\{X\leq x\}|W=w]$, so that for a fixed $x\in(0,1)$, the hypothesis that $F_{X|W}(x|w')\geq F_{X|W}(x,w'')$ for all $0\leq w'\leq w''\leq 1$ is equivalent to the hypothesis that the regression function $w\mapsto \Ep[1\{X\leq x\}|W=w]$ is decreasing. An adaptive test of this hypothesis was developed in \cite{Chetverikov:2012fk}. In our case, $H_0$ requires the regression function $w\mapsto \Ep[1\{X\leq x\}|W=w]$ to be decreasing not only for a particular value $x\in(0,1)$ but for all $x\in(0,1)$, and so we need to extend the results obtained in \cite{Chetverikov:2012fk}.

Let $K:\mathbb{R}\to \mathbb{R}$ be a kernel function satisfying the following conditions:
\begin{assumption}[Kernel]\label{as: kernel}
	The kernel function $K:\mathbb{R}\to \mathbb{R}$ is such that (i) $K(w)> 0$ for all $w\in(-1,1)$, (ii) $K(w)=0$ for all $w\notin(-1,1)$, (iii) $K$ is continuous, and (iv) $\int_{-\infty}^\infty K(w)dw=1$.
\end{assumption}
We assume that the kernel function $K(w)$ has bounded support, is continuous, and is strictly positive on the support. The last condition excludes higher-order kernels. For a bandwidth value $h>0$, define 
$$
K_h(w):=h^{-1}K(w/h),\quad w\in\mathbb{R}.
$$
Suppose $H_0$ is satisfied. Then, by the law of iterated expectations,
\begin{equation}\label{eq: basis for stat}
\Ep\left[(1\{X_i\leq x\}-1\{X_j\leq x\})\text{sign}(W_i-W_j)K_h(W_i-w)K_h(W_j-w)\right]\leq 0
\end{equation}
for all $x,w\in(0,1)$ and $i,j=1,\dots,n$. Denoting 
$$
K_{i j,h}(w):=\text{sign}(W_i-W_j)K_h(W_i-w)K_h(W_j-w),
$$
taking the sum of the left-hand side in (\ref{eq: basis for stat}) over $i,j=1,\dots,n$, and rearranging give
$$
\Ep\left[\sum_{i=1}^n 1\{X_i\leq x\}\sum_{j=1}^n(K_{i j,h}(w)-K_{j i,h}(w))\right]\leq 0,
$$
or, equivalently,
\begin{equation}\label{eq: basis for stat2}
\Ep\left[\sum_{i=1}^n k_{i,h}(w)1\{X_i\leq x\}\right]\leq 0,
\end{equation}
where 
$$
k_{i,h}(w):=\sum_{j=1}^n(K_{i j,h}(w)-K_{j i,h}(w)).
$$

To define the test statistic $T$, let $\mathcal{B}_n$ be a collection of bandwidth values satisfying the following conditions:
\begin{assumption}[Bandwidth values]\label{as: bandwidth values}
	The collection of bandwidth values is $\mathcal{B}_n:=\{h\in\mathbb{R}:\, h=u^l/2,l=0,1,2,\dots,h\geq h_{\min}\}$ for some $u\in(0,1)$ where $h_{\min}:=h_{\min,n}$ is such that $1/(n h_{\min})\leq C_hn^{-c_h}$ for some constants $c_h,C_h>0$.
\end{assumption}
The collection of bandwidth values $\mathcal{B}_n$ is a geometric progression with the coefficient $u\in(0,1)$, the largest value $1/2$, and the smallest value converging to zero not too fast. As the sample size $n$ increases, the collection of bandwidth values $\mathcal{B}_n$ expands.

Let $\mathcal{W}_n:=\{W_1,\dots,W_n\}$, and $\mathcal{X}_n:=\{\epsilon+l(1-2\epsilon)/n:l=0,1,\dots,n\}$ for some small $\epsilon>0$.
We define our test statistic by
\begin{equation}\label{eq: stat}
T:=\max_{(x,w,h)\in \mathcal{X}_n\times \mathcal{W}_n\times \mathcal{B}_n}\frac{\sum_{i=1}^nk_{i,h}(w)1\{X_i\leq x\}}{\left(\sum_{i=1}^nk_{i,h}(w)^2\right)^{1/2}}.
\end{equation}
The statistic $T$ is most closely related to that in \cite{lee2009fd}. The main difference is that we take the maximum with respect to the set of bandwidth values $h\in\mathcal{B}_n$ to achieve adaptiveness of the test. 

We now discuss the construction of a critical value for the test. Suppose that we would like to have a test of level (approximately) $\alpha$. As succinctly demonstrated by \cite{lee2009fd}, the derivation of the asymptotic distribution of $T$ is complicated even when $\mathcal{B}_n$ is a singleton. Moreover, when $\mathcal{B}_n$ is not a singleton, it is generally unknown whether $T$ converges to some nondegenerate asymptotic distribution after an appropriate normalization. We avoid these complications by employing the non-asymptotic approach developed in CCK and using a multiplier bootstrap critical value for the test. Let $e_1,\dots,e_n$ be an i.i.d. sequence of $N(0,1)$ random variables that are independent of the data. Also, let $\hat{F}_{X|W}(x|w)$ be an estimator of $F_{X|W}(x|w)$ satisfying the following conditions:
\begin{assumption}[Estimator of $F_{X|W}(x|w)$]\label{as: cond cdf estimator}
The estimator $\hat{F}_{X|W}(x|w)$ of $F_{X|W}(x|w)$ is such that (i)
$$
\Pr\left(\Pr\left(\max_{(x,w)\in\mathcal{X}_n\times\mathcal{W}_n}|\hat{F}_{X|W}(x|w)-F_{X|W}(x|w)|>C_Fn^{-c_F}|\{\mathcal{W}_n\}\right)>C_Fn^{-c_F}\right)\leq C_Fn^{-c_F}
$$
for some constants $c_F,C_F>0$, and (ii) $|\hat{F}_{X|W}(x|w)|\leq C_F$ for all $(x,w)\in\mathcal{X}_n\times\mathcal{W}_n$.
\end{assumption}
This is a mild assumption implying uniform consistency of an estimator $\hat{F}_{X|W}(x|w)$ of $F_{X|W}(x|w)$ over $(x,w)\in\mathcal{X}_n\times\mathcal{W}_n$. 
Define a bootstrap test statistic by
$$
T^b:=\max_{(x,w,h)\in\mathcal{X}_n\times\mathcal{W}_n\times \mathcal{B}_n}\frac{\sum_{i=1}^ne_i\left(k_{i,h}(w)(1\{X_i\leq x\}-\hat{F}_{X|W}(x|W_i))\right)}{\left(\sum_{i=1}^nk_{i,h}(w)^2\right)^{1/2}}.
$$
Then we define the critical value\footnote{In the terminology of the moment inequalities literature, $c(\alpha)$ can be considered a ``one-step'' or ``plug-in'' critical value. Following \cite{Chetverikov:2012fk}, we could also consider two-step or even multi-step (stepdown) critical values. For brevity of the paper, however, we do not consider these options here.} $c(\alpha)$ for the test as
$$
c(\alpha):=(1-\alpha)\text{ conditional quantile of }T^b\text{ given the data}.
$$

We reject $H_0$ if and only if $T>c(\alpha)$. To prove validity of this test, we assume that the conditional distribution function $F_{X|W}(x|w)$ satisfies the following condition:
\begin{assumption}[Conditional Distribution Function $F_{X|W}(x|w)$]\label{as: conditional density}
	The conditional distribution function $F_{X|W}(x|w)$ is such that $c_\epsilon\leq F_{X|W}(\epsilon|w)\leq F_{X|W}(1-\epsilon|w)\leq C_\epsilon$ for all $w\in(0,1)$ and some constants $0<c_\epsilon<C_\epsilon<1$.
\end{assumption}

The first theorem in this section shows that our test controls size asymptotically and is not conservative:
\begin{theorem}[Polynomial Size Control]\label{thm: size control}
	Let Assumptions \ref{as: density}, \ref{as: kernel}, \ref{as: bandwidth values}, \ref{as: cond cdf estimator}, and \ref{as: conditional density} be satisfied. If $H_0$ holds, then
	\begin{equation}\label{eq: size control stoch dom}
	\Pr\left(T>c(\alpha)\right)\leq\alpha+Cn^{-c}.
	\end{equation}
	If the functions $w\mapsto F_{X|W}(x|w)$ are constant for all $x\in(0,1)$, then
	\begin{equation}\label{eq: nonconservative stoch dom}
	\left|\Pr\left(T>c(\alpha)\right)-\alpha\right|\leq Cn^{-c}.
	\end{equation}
	In both (\ref{eq: size control stoch dom}) and (\ref{eq: nonconservative stoch dom}), the constants $c$ and $C$ depend only on $c_W,C_W,c_h,C_h,c_F,C_F,c_\epsilon,C_\epsilon$, and the kernel $K$.
\end{theorem}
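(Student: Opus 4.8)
The plan is to condition on $\mathcal{W}_n=\{W_1,\dots,W_n\}$ and reduce the claim to a high-dimensional Gaussian approximation plus multiplier-bootstrap argument in the style of CCK, exploiting the fact that the index set $\mathcal{X}_n\times\mathcal{W}_n\times\mathcal{B}_n$ has only polynomially many points, so $\log p=O(\log n)$. First I would center and use one-sided domination: conditional on $\mathcal{W}_n$ the weights $k_{i,h}(w)$ are nonrandom and $X_1,\dots,X_n$ are independent with $X_i\sim F_{X|W}(\cdot|W_i)$; setting
$$
Z_{x,w,h}:=\frac{\sum_{i=1}^n k_{i,h}(w)\big(1\{X_i\le x\}-F_{X|W}(x|W_i)\big)}{\big(\sum_{i=1}^n k_{i,h}(w)^2\big)^{1/2}},\qquad Z:=\max_{(x,w,h)}Z_{x,w,h},
$$
the computation preceding \eqref{eq: basis for stat2} shows that under $H_0$ the conditional mean $\sum_i k_{i,h}(w)F_{X|W}(x|W_i)\le0$ for every $(x,w,h)$, hence $T\le Z$; and when $w\mapsto F_{X|W}(x|w)$ is constant for all $x$ this mean vanishes, so $T=Z$ identically, which is what will deliver the two-sided bound \eqref{eq: nonconservative stoch dom}.

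Second I would establish a Gaussian approximation for $Z$ by verifying the hypotheses of the high-dimensional CLT of CCK for the (conditionally) independent, mean-zero, bounded summands entering $Z$. Boundedness is immediate since indicators and $F_{X|W}$ lie in $[0,1]$; $\log p=O(\log n)$ follows from Assumption~\ref{as: bandwidth values} together with $|\mathcal{X}_n|=n+1$ and $|\mathcal{W}_n|=n$; and non-degeneracy (a uniform lower bound on $\Var(Z_{x,w,h}|\mathcal{W}_n)$) follows from Assumption~\ref{as: conditional density}, which keeps $F_{X|W}(x|W_i)\in[c_\epsilon,C_\epsilon]$ on $\mathcal{X}_n$ so that $\Var(1\{X_i\le x\}|W_i)$ is bounded below, combined with $c_W\le f_W\le C_W$ and the localization of $K_h$, which keep $\sum_i k_{i,h}(w)^2$ of the right order on an event for $\mathcal{W}_n$ of probability $1-Cn^{-c}$. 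This produces a Gaussian vector $\widetilde Z=(\widetilde Z_{x,w,h})$ with the same conditional covariance as $(Z_{x,w,h})$ and $\sup_t\big|\Pr(Z\le t|\mathcal{W}_n)-\Pr(\max\widetilde Z\le t|\mathcal{W}_n)\big|\le Cn^{-c}$ on that event.

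Third I would prove validity of the bootstrap critical value. Conditional on the data, $T^b$ is the maximum of a Gaussian vector with covariance $\hat\Sigma$ built from $k_{i,h}(w)(1\{X_i\le x\}-\hat F_{X|W}(x|W_i))$; I would show $\hat\Sigma$ is uniformly close to the conditional covariance $\Sigma$ of $(Z_{x,w,h})$, the discrepancy being governed by $\max_{(x,w)\in\mathcal{X}_n\times\mathcal{W}_n}|\hat F_{X|W}(x|w)-F_{X|W}(x|w)|$ (controlled by Assumption~\ref{as: cond cdf estimator}) plus an empirical-covariance concentration term handled once more by the polynomial cardinality of the grid and boundedness. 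The Gaussian comparison inequality of CCK then gives that the conditional quantiles of $T^b$ and of $\max\widetilde Z$ differ by at most $Cn^{-c}$ with probability $1-Cn^{-c}$, so $c(\alpha)$ equals the $(1-\alpha)$-quantile of $\max\widetilde Z$ up to $Cn^{-c}$. Combining with Gaussian anti-concentration (Nazarov's inequality) turns these $O(n^{-c})$ distance bounds into $\Pr(T>c(\alpha))\le\Pr(Z>c(\alpha))\le\alpha+Cn^{-c}$, proving \eqref{eq: size control stoch dom}; in the constant case $T=Z$ makes both steps two-sided and yields \eqref{eq: nonconservative stoch dom}. All constants depend only on the quantities listed, since every bound invoked (boundedness, grid size, variance lower bound, estimator rate) does.

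The \emph{main obstacle} I anticipate is the uniform non-degeneracy and covariance-estimation control in the second and third steps: the weights $k_{i,h}(w)$ are themselves functions of the entire sample $\{W_j\}$ and must be controlled for all bandwidths $h\in\mathcal{B}_n$ simultaneously, while $\hat F_{X|W}$ enters the bootstrap covariance nonlinearly. The argument therefore requires carefully isolating the conditioning on $\mathcal{W}_n$ from the estimation error, working throughout on a single high-probability ``good'' event on which $\sum_i k_{i,h}(w)^2$ is of the correct order and $\hat F_{X|W}$ is uniformly accurate, and only then invoking the CCK machinery coordinate-wise over the polynomial-size grid.
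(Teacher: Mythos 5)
Your proposal is correct and matches the paper's approach: both proofs condition on $\mathcal{W}_n$, exploit the one-sided domination $T\leq T_0$ under $H_0$ (with equality in the constant case), count the grid to get $\log p=O(\log n)$, verify boundedness and non-degeneracy of the standardized summands, account for the $\hat F_{X|W}$ estimation error in the bootstrap via Assumption~\ref{as: cond cdf estimator}, and obtain the polynomial error rate from the non-asymptotic CCK bounds. The main presentational difference is one of abstraction level: the paper invokes Corollary~3.1, Case~(E.3), of CCK as a packaged result (which internally bundles the Gaussian approximation, Gaussian comparison, and anti-concentration steps that you lay out explicitly), and it delegates your flagged ``main obstacle''---uniform control of $|z_{ij}|=\sqrt{n}\,|k_{i,h}(w)|/(\sum_j k_{j,h}(w)^2)^{1/2}\leq B_n$ simultaneously over all $h\in\mathcal{B}_n$---to Proposition~B.2 of \cite{Chetverikov:2012fk}, after which the condition $B_n^2(\log n)^7/n\leq Cn^{-c}$ follows from Assumption~\ref{as: bandwidth values}. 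Two small points worth tightening in your sketch: (i) the conditional variance of $Z_{x,w,h}$ is a weighted average of $\Var(1\{X_i\le x\}\mid W_i)$, so the lower bound $c_\epsilon(1-C_\epsilon)$ from Assumption~\ref{as: conditional density} suffices directly---you do not need $\sum_i k_{i,h}(w)^2$ to be of the right order for non-degeneracy, only for the $|z_{ij}|\le B_n$ bound; (ii) the paper handles bootstrap validity by bounding $|T^b-T^b_0|$ at the level of the statistics rather than comparing covariance matrices, which gives a slightly more direct route to the $\zeta_{1,n},\zeta_{2,n}$ conditions of CCK's corollary, though your covariance-comparison route would also work.
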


\begin{remark}[Weak Condition on the Bandwidth Values]
	Our theorem requires
	\begin{equation}\label{eq: weak bandwidth condition}
	\frac{1}{nh}\leq C_hn^{-c_h}
	\end{equation}
	for all $h\in\mathcal{B}_n$, which is considerably weaker than the analogous condition in \cite{lee2009fd} who require $1/(nh^3)\to 0$, up-to logs. This is achieved by using a conditional test and by applying the results of CCK. As follows from the proof of the theorem, the multiplier bootstrap distribution approximates the conditional distribution of the test statistic given $\mathcal{W}_n=\{W_1,\dots,W_n\}$. Conditional on $\mathcal{W}_n$, the denominator in the definition of $T$ is fixed, and does not require any approximation. Instead, we could try to approximate the denominator of $T$ by its probability limit. 
	This is done in \cite{GSV2000} using the theory of Hoeffding projections but they require the condition $1/nh^2\to 0$. Our weak condition (\ref{eq: weak bandwidth condition}) also crucially relies on the fact that we use the results of CCK. Indeed, it has already been demonstrated (see \cite{CCK2,Chernozhukov:2013kx}, and \cite{Belloni:2014hl}) that, in typical nonparametric problems, the techniques of CCK often lead to weak conditions on the bandwidth value or the number of series terms. Our theorem is another instance of this fact.\QEDA
\end{remark}
\begin{remark}[Polynomial Size Control]
	Note that, by (\ref{eq: size control stoch dom}) and (\ref{eq: nonconservative stoch dom}), the probability of rejecting $H_0$ when $H_0$ is satisfied can exceed the nominal level $\alpha$ only by a term that is polynomially small in $n$. We refer to this phenomenon as a \textit{polynomial size control}. As explained in \cite{lee2009fd}, when $\mathcal{B}_n$ is a singleton, convergence of $T$ to the limit distribution is logarithmically slow. Therefore, \cite{lee2009fd} used higher-order corrections derived in \cite{Piterbarg1996} to obtain polynomial size control. Here we show that the multiplier bootstrap also gives higher-order corrections and leads to polynomial size control. This feature of our theorem is also inherited from the results of CCK.\QEDA
\end{remark}
\begin{remark}[Uniformity]
	The constants $c$ and $C$ in (\ref{eq: size control stoch dom}) and (\ref{eq: nonconservative stoch dom}) depend on the data generating process only via constants (and the kernel) appearing in Assumptions \ref{as: density}, \ref{as: kernel}, \ref{as: bandwidth values}, \ref{as: cond cdf estimator}, and \ref{as: conditional density}. Therefore, inequalities (\ref{eq: size control stoch dom}) and (\ref{eq: nonconservative stoch dom}) hold uniformly over all data generating processes satisfying these assumptions with the same constants. We obtain uniformity directly from employing the distributional approximation theorems of CCK because they are non-asymptotic and do not rely on convergence arguments.\QEDA
\end{remark}

Our second result in this section concerns the ability of our test to detect models in the alternative $H_a$. Let $\epsilon>0$ be the constant appearing in the definition of $T$ via the set $\mathcal{X}_n$.
\begin{theorem}[Consistency]\label{thm: consistency}
	Let Assumptions \ref{as: density}, \ref{as: kernel}, \ref{as: bandwidth values}, \ref{as: cond cdf estimator}, and \ref{as: conditional density} be satisfied and assume that $F_{X|W}(x|w)$ is continuously differentiable. If $H_a$ holds with $D_wF_{X|W}(x|w)>0$ for some $x\in(\epsilon,1-\epsilon)$ and $w\in(0,1)$, then
	\begin{equation}\label{eq: consistency}
	\Pr\left(T>c(\alpha)\right)\to 1\text{ as }n\to\infty.
	\end{equation}
\end{theorem}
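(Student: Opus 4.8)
The plan is to show that, under $H_a$, the statistic $T$ diverges at a polynomial rate in $n$, while the bootstrap critical value $c(\alpha)$ grows only like $\sqrt{\log n}$, so that $\Pr(T>c(\alpha))\to1$. First I would localize the violation: since $F_{X|W}(x|w)$ is continuously differentiable and $D_wF_{X|W}(x_0|w_0)>0$ for some $x_0\in(\epsilon,1-\epsilon)$, $w_0\in(0,1)$, there are constants $\delta,c_2>0$ with $D_wF_{X|W}(x|v)\ge c_2$ for all $(x,v)$ in a box $B_\delta:=[x_0-\delta,x_0+\delta]\times[w_0-\delta,w_0+\delta]\subset(\epsilon,1-\epsilon)\times(0,1)$. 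I then fix $h_*\in\mathcal{B}_n$ with $h_*\le\delta/3$ (possible for all large $n$ since $\mathcal{B}_n$ is a geometric progression whose smallest element tends to $0$; $h_*$ is then constant in $n$), a deterministic grid point $x_n\in\mathcal{X}_n$ with $|x_n-x_0|\le 1/n$, and the interval $I:=[w_0-\delta/3,w_0+\delta/3]$. Conditioning on $(W_1,W_2)$ and changing variables $W_1=w+h_*s$, $W_2=w+h_*t$, the bound $\text{sign}(s-t)\{F_{X|W}(x|w+h_*s)-F_{X|W}(x|w+h_*t)\}\ge c_2 h_*|s-t|$ on $B_\delta$, together with positivity of $\iint|s-t|K(s)K(t)\,ds\,dt$ (which holds since $K>0$ on $(-1,1)$), gives, for all large $n$ and all $(x,w)\in[x_0-\delta/2,x_0+\delta/2]\times I$,
\[
\mu(x,w,h_*):=\Ep\big[\text{sign}(W_1-W_2)K_{h_*}(W_1-w)K_{h_*}(W_2-w)(1\{X_1\le x\}-1\{X_2\le x\})\big]\ \ge\ c_3 h_*
\]
for a constant $c_3>0$ depending only on $c_2$, the lower bound $c_W$ for $f_W$, and $K$.

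Next I transfer this population bound to the statistic. Using $k_{i,h}(w)=2K_h(W_i-w)\sum_j\text{sign}(W_i-W_j)K_h(W_j-w)$ and symmetrizing, the numerator of the $(x_n,w,h_*)$-entry of $T$ equals the $U$-statistic
\[
N(x_n,w,h_*)=\sum_{i\ne j}\text{sign}(W_i-W_j)K_{h_*}(W_i-w)K_{h_*}(W_j-w)(1\{X_i\le x_n\}-1\{X_j\le x_n\}),
\]
whose kernel is symmetric and, because $h_*$ is constant, uniformly bounded, and whose mean is $n(n-1)\mu(x_n,w,h_*)$. Since the class $\{K_{h_*}(\cdot-w):w\in I\}$ is uniformly bounded and VC-type, a maximal inequality for $U$-processes (Hoeffding decomposition together with, e.g., the distributional bounds used in \cite{CCK1}) yields $\sup_{w\in I}|N(x_n,w,h_*)/(n(n-1))-\mu(x_n,w,h_*)|\to 0$ in probability, hence $\inf_{w\in I}N(x_n,w,h_*)\ge\tfrac12 c_3 h_* n(n-1)$ with probability tending to one. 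For the denominator, $|k_{i,h_*}(w)|\le 2K_{h_*}(W_i-w)\sum_j K_{h_*}(W_j-w)$ and uniform consistency of the kernel density estimator over $w\in I$ (same VC class, with $f_W\le C_W$) give $\sup_{w\in I}\sum_j K_{h_*}(W_j-w)\le Cn$ and $\sup_{w\in I}\sum_i K_{h_*}(W_i-w)^2\le Cn$, so $\sup_{w\in I}\sum_i k_{i,h_*}(w)^2\le C_7 n^3$, with probability tending to one. Since $f_W\ge c_W>0$, the set $\mathcal{W}_n$ contains a point $W_{i^*}\in I$ with probability tending to one, and on the intersection of these events
\[
T\ \ge\ \frac{N(x_n,W_{i^*},h_*)}{\big(\sum_i k_{i,h_*}(W_{i^*})^2\big)^{1/2}}\ \ge\ \frac{\tfrac12 c_3 h_* n(n-1)}{(C_7 n^3)^{1/2}}\ \ge\ c'\,n^{1/2}
\]
for all large $n$; hence $T\ge c' n^{1/2}$ with probability tending to one.

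It remains to bound the critical value. Conditional on the data, $T^b$ is the maximum of at most $|\mathcal{X}_n||\mathcal{W}_n||\mathcal{B}_n|\le n^3$ mean-zero Gaussian variables with variances at most $(1+C_F)^2$, because $|1\{X_i\le x\}-\hat F_{X|W}(x|W_i)|\le 1+C_F$ by Assumption~\ref{as: cond cdf estimator}(ii). The Gaussian maximal inequality and Gaussian concentration then bound the conditional $(1-\alpha)$-quantile by $c(\alpha)\le(1+C_F)\big(\sqrt{2\log n^3}+\sqrt{2\log(1/\alpha)}\big)\le C_8\sqrt{\log n}$ for all large $n$, deterministically. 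Combining this with the previous step, $\Pr(T>c(\alpha))\ge\Pr\!\big(c'n^{1/2}>C_8\sqrt{\log n}\big)-o(1)\to1$, which is \eqref{eq: consistency}.

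The hard part will be the second step: obtaining the lower bound on $N(x_n,w,h_*)$ and the upper bound on $\sum_i k_{i,h_*}(w)^2$ \emph{uniformly} in $w$ over the fixed interval $I$, which is necessary because the index $w$ in the definition of $T$ ranges over the random set $\mathcal{W}_n$. This requires maximal inequalities for the $U$-process $w\mapsto N(x_n,w,h_*)$ and for the (degree-three) $V$-process $w\mapsto\sum_i k_{i,h_*}(w)^2$; the boundedness (for fixed $h_*$) and the VC structure of the underlying translation classes make these estimates standard, but they constitute the technical core of the argument. The population computation in the first step and the Gaussian bound on $c(\alpha)$ are routine.
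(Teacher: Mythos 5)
Your proposal is correct and follows essentially the same strategy as the paper: bound the critical value $c(\alpha)$ by $C\sqrt{\log n}$ via a Gaussian maximal inequality (using $|1\{X_i\le x\}-\hat F_{X|W}(x|W_i)|\le 1+C_F$), and show that under $H_a$ the statistic $T$ diverges at a polynomial rate so that it must eventually exceed $c(\alpha)$. The only difference is that the paper delegates the divergence step to the proof of Theorem~4.2 in \cite{Chetverikov:2012fk}, whereas you spell it out directly with the localization/change-of-variables computation and U-process maximal inequalities; your version is therefore self-contained but not a genuinely different route.
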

This theorem shows that our test is consistent against any model in $H_a$ (with smooth $F_{X|W}(x|w)$) whose deviation from $H_0$ is not on the boundary, so that the deviation $D_wF_{X|W}(x|w)>0$ occurs for $x\in(\epsilon,1-\epsilon)$. It is also possible to extend our results to show that Theorems~\ref{thm: size control} and \ref{thm: consistency} hold with $\epsilon=0$ at the expense of additional technicalities. Further, using the same arguments as those in \cite{Chetverikov:2012fk}, it is possible to show that the test suggested here has minimax optimal rate of consistency against the alternatives belonging to certain H\"{o}lder classes for a reasonably large range of smoothness levels. We do not derive these results here for the sake of brevity of presentation.  

We conclude this section by proposing a simple test of our second monotonicity assumption, that is, monotonicity of the regression function $g$. The null and alternative hypotheses are
\begin{align*}
&H_0:\, g(x')\leq g(x'') \text{ for all } x',x''\in(0,1)\text{ with } x'\leq x''\\
&H_a:\, g(x')> g(x'') \text{ for some } x',x''\in(0,1)\text{ with } x'\leq x'',
\end{align*}
respectively. The discussion in Remark~\ref{rem: monotone reduced form} reveals that, under Assumptions~\ref{as: monotone iv} and \ref{as: density}, monotonicity of $g(x)$ implies monotonicity of $w\mapsto \Ep[Y|W=w]$. Therefore, under Assumptions~\ref{as: monotone iv} and \ref{as: density}, we can test $H_0$ by testing monotonicity of the conditional expectation $w\mapsto \Ep[Y|W=w]$ using existing tests such as \cite{Chetverikov:2012fk} and \cite{Lee:2014zl}, among others. This procedure tests an implication of $H_0$ instead of $H_0$ itself and therefore may have low power against some alternatives. On the other hand, it does not require solving the model for $g(x)$ and therefore avoids the ill-posedness of the problem.

\section{Simulations}\label{sec: simulations}

In this section, we study the finite-sample behavior of our constrained estimator that imposes monotonicity and compare its performance to that of the unconstrained estimator. We consider the NPIV model $Y=g(X) + \varepsilon$, $\Ep[\varepsilon | W]=0$, for two different regression functions, one that is strictly increasing and a weakly increasing one that is constant over part of its domain:
\begin{description}
	\item[Model 1:] $g(x) = \kappa\sin(\pi x -\pi/2)$
	\item[Model 2:] $g(x) = 10\kappa \left[ -(x-0.25)^2 \mathbf{1}\{x\in[0,0.25]\} + (x-0.75)^2 \mathbf{1}\{x\in[0.75,1] \} \right]$
\end{description}
where $\varepsilon=\kappa\sigma_{\varepsilon}\bar{\varepsilon}$ and $\bar{\varepsilon} = \eta \epsilon + \sqrt{1-\eta^2}\nu$. The regressor and instrument are generated by $X=\Phi(\xi)$ and $W=\Phi(\zeta)$, respectively, where $\Phi$ is the standard normal cdf and $\xi = \rho \zeta + \sqrt{1-\rho^2}\epsilon$. The errors are generated by $(\nu,\zeta,\epsilon)\sim N(0,I)$.

We vary the parameter $\kappa$ in $\{1,0.5,0.1\}$ to study how the constrained and unconstrained estimators' performance compares depending on the maximum slope of the regression function. $\eta$ governs the dependence of $X$ on the regression error $\varepsilon$ and $\rho$ the strength of the first stage. All results are based on $1,000$ MC samples and the normalized B-spline basis for $p(x)$ and $q(w)$ of degree $3$ and $4$, respectively.

Tables~\ref{tab:model1n500}--\ref{tab:re_model2} report the Monte Carlo approximations to the squared bias, variance, and mean squared error (``MSE'') of the two estimators, each averaged over a grid on the interval $[0,1]$. We also show the ratio of the constrained estimator's MSE divided by the unconstrained estimator's MSE. $k_X$ and $k_W$ denote, respectively, the number of knots used for the basis $p(x)$ and $q(w)$. The first two tables vary the number of knots, and the latter two the dependence parameters $\rho$ and $\eta$. Different sample sizes and different values for $\rho$, $\eta$, and $\sigma_{\varepsilon}$ yield qualitatively similar results. Figures~\ref{fig:estim_model1} and \ref{fig:estim_model2} show the two estimators for a particular combination of the simulation parameters. The dashed lines represent confidence bands, computed as two times the (pointwise) empirical standard deviation of the estimators across simulation samples. Both, the constrained and the unconstrained, estimators are computed by ignoring the bound $\|b\|\leq C_b$ in their respective definitions. \cite{Horowitz2012as} and \cite{horowitz2012} also ignore the constraint $\|b\|\leq C_b$ and state that it does not affect the qualitative results of their simulation experiment.

The MSE of the constrained estimator (and, interestingly, also of the unconstrained estimator) decreases as the regression function becomes flatter. This observation is consistent with the error bound in Theorem~\ref{thm: risk bounds} depending positively on the maximum slope of $g$.

Because of the joint normality of $(X,W)$, the simulation design is severely ill-posed and we expect high variability of both estimators. In all simulation scenarios, we do in fact observe a very large variance relative to bias. However, the magnitude of the variance differs significantly across the two estimators: in all scenarios, even in the design with a strictly increasing regression function, imposing the monotonicity constraint significantly reduces the variance of the NPIV estimator. The MSE of the constrained estimator is therefore much smaller than that of the unconstrained estimator, from about a factor of two smaller when $g$ is strictly increasing and the noise level is low ($\sigma_{\varepsilon}=0.1$), to around 20 times smaller when $g$ contains a flat part and the noise level is high ($\sigma_{\varepsilon}=0.7$). Generally, the gains in MSE from imposing monotonicity are larger the higher the noise level $\sigma_{\varepsilon}$ in the regression equation and the higher the first-stage correlation $\rho$.\footnote{Since Tables~\ref{tab:model1n500} and \ref{tab:model2n500} report results for the lower level of $\rho$, and Tables~\ref{tab:re_model1} and \ref{tab:re_model2} results for the lower noise level $\sigma_{\varepsilon}$, we consider the selection of results as, if at all, favoring the unconstrained estimator.}


\section{Gasoline Demand in the United States}
\label{sec: gasoline demand}

In this section, we revisit the problem of estimating demand functions for gasoline in the United States. Because of the dramatic changes in the oil price over the last few decades, understanding the elasticity of gasoline demand is fundamental to evaluating tax policies. Consider the following partially linear specification of the demand function:
$$Y = g(X,Z_1) + \gamma'Z_2 + \varepsilon,\qquad \Ep[\varepsilon | W,Z_1,Z_2]=0,$$
where $Y$ denotes annual log-gasoline consumption of a household, $X$ log-price of gasoline (average local price), $Z_1$ log-household income, $Z_2$ are control variables (such as population density, urbanization, and demographics), and $W$ distance to major oil platform. We allow for price $X$ to be endogenous, but assume that $(Z_1,Z_2)$ is exogenous. $W$ serves as an instrument for price by capturing transport cost and, therefore, shifting the cost of gasoline production. We use the same sample of size $4,812$ from the 2001 National Household Travel Survey and the same control variables $Z_2$ as \cite{blundell2012fg}. More details can be found in their paper.

Moving away from constant price and income elasticities is likely very important as individuals' responses to price changes vary greatly with price and income level. Since economic theory does not provide guidance on the functional form of $g$, finding an appropriate parametrization is difficult. \cite{hausman1995} and \cite{blundell2012fg}, for example, demonstrate the importance of employing flexible estimators of $g$ that do not suffer from misspecification bias due to arbitrary restrictions in the model. \cite{Blundell:2013fk} argue that prices at the local market level vary for several reasons and that they may reflect preferences of the consumers in the local market. Therefore, one would expect prices $X$ to depend on unobserved factors in $\varepsilon$ that determine consumption, rendering price an endogenous variable. Furthermore, the theory of the consumer requires downward-sloping compensated demand curves. Assuming a positive income derivative\footnote{\cite{blundell2012fg} estimate this income derivative and do, in fact, find it to be positive over the price range of interest.} $\partial g/\partial z_1$, the Slutsky condition implies that the uncompensated (Marshallian) demand curves are also downward-sloping, i.e. $g(\cdot,z_1)$ should be monotone for any $z_1$, as long as income effects do not completely offset price effects. Finally, we expect the cost shifter $W$ to monotonically increase cost of producing gasoline and thus satisfy our monotone IV condition. In conclusion, our constrained NPIV estimator appears to be an attractive estimator of demand functions in this setting.

We consider three benchmark estimators. First, we compute the unconstrained nonparametric (``uncon. NP'') series estimator of the regression of $Y$ on $X$ and $Z_1$, treating price as exogenous. As in \cite{blundell2012fg}, we accommodate the high-dimensional vector of additional, exogenous covariates $Z_2$ by (i) estimating $\gamma$ by \cite{198807}'s procedure, (ii) then removing these covariates from the outcome, and (iii) estimating $g$ by regressing the adjusted outcomes on $X$ and $Z_1$. The second benchmark estimator (``con. NP'') repeats the same steps (i)--(iii) except that it imposes monotonicity (in price) of $g$ in steps (i) and (iii). The third benchmark estimator is the unconstrained NPIV estimator (``uncon. NPIV'') that accounts for the covariates $Z_2$ in similar fashion as the first, unconstrained nonparametric estimator, except that (i) and (iii) employ NPIV estimators that impose additive separability and linearity in $Z_2$.

The fourth estimator we consider is the constrained NPIV estimator (``con. NPIV'') that we compare to the three benchmark estimators. We allow for the presence of the covariates $Z_2$ in the same fashion as the unconstrained NPIV estimator except that, in steps (i) and (iii), we impose monotonicity in price.

We report results for the following choice of bases. All estimators employ a quadratic B-spline basis with $3$ knots for price $X$ and a cubic B-spline with $10$ knots for the instrument $W$. Denote these two bases by $\mathbf{P}$ and $\mathbf{Q}$, using the same notation as in Section~\ref{sec: estimation}. In step (i), the NPIV estimators include the additional exogenous covariates $(Z_1,Z_2)$ in the respective bases for $X$ and $W$, so they use the estimator defined in Section~\ref{sec: estimation} except that the bases $\mathbf{P}$ and $\mathbf{Q}$ are replaced by $\tilde{\mathbf{P}} := [\mathbf{P}, \mathbf{P}\times \mathbf{Z}_1, \mathbf{Z}_2]$ and $\tilde{\mathbf{Q}} := [\mathbf{Q}, \mathbf{Q}\times (\mathbf{Z}_1, \mathbf{Z}_2)]$, respectively, where $\mathbf{Z}_k := (Z_{k,1},\ldots,Z_{k,n})'$, $k=1,2$, stacks the observations $i=1,\ldots,n$ and $\mathbf{P}\times \mathbf{Z}_1$ denotes the tensor product of the columns of the two matrices. Since, in the basis $\tilde{\mathbf{P}}$, we include interactions of $\mathbf{P}$ with $\mathbf{Z}_1$, but not with $\mathbf{Z}_2$, the resulting estimator allows for a nonlinear, nonseparable dependence of $Y$ on $X$ and $Z_1$, but imposes additive separability in $Z_2$. The conditional expectation of $Y$ given $W$, $Z_1$, and $Z_2$ does not have to be additively separable in $Z_2$, so that, in the basis $\tilde{\mathbf{Q}}$, we include interactions of $\mathbf{Q}$ with both $\mathbf{Z}_1$ and $\mathbf{Z}_2$.\footnote{Notice that $\mathbf{P}$ and $\mathbf{Q}$ include constant terms so it is not necessary to separately include $\mathbf{Z}_k$ in addition to its interactions with $\mathbf{P}$ and $\mathbf{Q}$, respectively.}

We estimated the demand functions for many different combinations of the order of B-spline for $W$, the number of knots in both bases, and even with various penalization terms (as discussed in Remark~\ref{rem: penalization}). While the shape of the unconstrained NPIV estimate varied slightly across these different choices of tuning parameters (mostly near the boundary of the support of $X$), the constrained NPIV estimator did not exhibit any visible changes at all.

Figure~\ref{fig:cdfXWhat} shows a nonparametric kernel estimate of the conditional distribution of the price $X$ given the instrument $W$. Overall the graph indicates an increasing relationship between the two variables as required by our stochastic dominance condition \eqref{eq: fosd}. We formally test this monotone IV assumption by applying our new test proposed in Section~\ref{sec: testing}. We find a test statistic value of $0.139$ and $95\%$-critical value of $1.720$.\footnote{The critical value is computed from $1,000$ bootstrap samples, using the bandwidth set $\mathcal{B}_n = \{2,1,0.5,0.25,0.125,0.0625\}$, and a kernel estimator for $\hat{F}_{X|W}$ with bandwidth $0.3$ which produces the estimate in Figure~\ref{fig:cdfXWhat}.} Therefore, we fail to reject the monotone IV assumption.

Figure~\ref{fig:emp with cov} shows the estimates of the demand function at three income levels, at the lower quartile ($\$ 42,500$), the median ($\$ 57,500$), and the upper quartile ($\$ 72,500$). The area shaded in grey represents the 90\% uniform confidence bands around the unconstrained NPIV estimator as proposed in \cite{Horowitz2012as}.\footnote{Critical values are computed from $1,000$ bootstrap samples and the bands are computed on a grid of $100$ equally-spaced points in the support of the data for $X$.} The black lines correspond to the estimators assuming exogeneity of price and the red lines to the NPIV estimators that allow for endogeneity of price. The dashed black line shows the kernel estimate of \cite{blundell2012fg} and the solid black line the corresponding series estimator that imposes monotonicity. The dashed and solid red lines similarly depict the unconstrained and constrained NPIV estimators, respectively.

All estimates show an overall decreasing pattern of the demand curves, but the two unconstrained estimators are both increasing over some parts of the price domain. We view these implausible increasing parts as finite-sample phenomena that arise because the unconstrained nonparametric estimators are too imprecise. The wide confidence bands of the unconstrained NPIV estimator are consistent with this view. \cite{hausman1995} and \cite{Horowitz2012as} find similar anomalies in their nonparametric estimates, assuming exogenous prices. Unlike the unconstrained estimates, our constrained NPIV estimates are downward-sloping everywhere and smoother. They lie within the $90\%$ uniform confidence bands of the unconstrained estimator so that the monotonicity constraint appears compatible with the data.

The two constrained estimates are very similar, indicating that endogeneity of prices may not be important in this problem, but they are both significantly flatter than the unconstrained estimates across all three income groups, which implies that households appear to be less sensitive to price changes than the unconstrained estimates suggest. The small maximum slope of the constrained NPIV estimator also suggests that the error bound in Theorem~\ref{thm: risk bounds} may be small and therefore we expect the constrained NPIV estimate to be precise for this data set.



\appendix

\newpage
\section{Proofs for Section  \ref{sec: MIVE}}

For any $h\in L^1[0,1]$, let $\|h\|_{1} := \int_{0}^{1} | h(x)|d x$, $\|h\|_{1,t} := \int_{x_1}^{x_2} | h(x)|d x$ and define the operator norm by $\|T\|_2:=\sup_{h\in L^2[0,1]:\,\|h\|_2>0}\|Th\|_2 / \|h\|_2$. Note that $\|T\|_2\leq \int_0^1\int_0^1f_{X,W}^2(x,w)d x d w$, and so under Assumption \ref{as: density}, $\|T\|_2\leq C_T$.

\begin{proof}[Proof of Theorem~\ref{thm: main result 1}]

	We first show that for any $h\in \mathcal{M}$,
	\begin{equation}\label{eq:bound2t_by_1}
		\|h\|_{2,t} \leq C_1 \|h\|_{1,t}
	\end{equation}
	for $C_1 := (\tilde{x}_2-{\tilde{x}_1})^{1/2}\; / \min\{\tilde{x}_1-x_1, x_2-\tilde{x}_2\}$. Indeed, by monotonicity of $h$,
	\begin{align*}
		\|h\|_{2,t} &= \left( \int_{\tilde{x}_1}^{\tilde{x}_2} h(x)^2dx \right)^{1/2} \leq \sqrt{\tilde{x}_2-{\tilde{x}_1}} \max\left\{ |h({\tilde{x}_1})|, |h(\tilde{x}_2)| \right\}\\
			&\leq \sqrt{\tilde{x}_2-{\tilde{x}_1}}\; \frac{\int_{x_1}^{x_2} |h(x)|dx}{ \min\left\{\tilde{x}_1-x_1, x_2-\tilde{x}_2\right\}}
	\end{align*}
	so that \eqref{eq:bound2t_by_1} follows. Therefore, for any increasing continuously differentiable $h\in \mathcal{M}$,
	$$ \|h\|_{2,t} \leq C_1 \|h\|_{1,t} \leq C_1 C_2 \|Th\|_1 \leq C_1 C_2 \|Th\|_2, $$
	where the first inequality follows from \eqref{eq:bound2t_by_1}, the second from Lemma~\ref{lem: crucial lemma} below (which is the main step in the proof of the theorem), and the third by Jensen's inequality. Hence, conclusion (\ref{eq: bounded inverse}) of Theorem \ref{thm: main result 1} holds for increasing continuously differentiable $h\in\mathcal{M}$ with $\bar{C}:=C_1C_2$ and $C_2$ as defined in Lemma~\ref{lem: crucial lemma}. 

Next, for any increasing function $h\in\mathcal{M}$, it follows from Lemma \ref{lem: monotone approximation} that one can find a sequence of increasing continuously differentiable functions $h_k\in\mathcal{M}$, $k\geq 1$, such that $\|h_k-h\|_2\to 0$ as $k\to \infty$. Therefore, by the triangle inequality,
\begin{align*}
\|h\|_{2,t}&\leq \|h_k\|_{2,t}+\|h_k-h\|_{2,t}\leq \bar{C}\|Th_k\|_2+\|h_k-h\|_{2,t}\\
&\leq \bar{C}\|Th\|_2+\bar{C}\|T(h_k-h)\|_2+\|h_k-h\|_{2,t}\\
&\leq \bar{C}\|Th\|_2+\bar{C}\|T\|_2\|(h_k-h)\|_2+\|h_k-h\|_{2,t}\\
&\leq \bar{C}\|Th\|_2+(\bar{C}\|T\|_2+1)\|(h_k-h)\|_2\\
&\leq \bar{C}\|Th\|_2+(\bar{C}C_T+1)\|h_k-h\|_2
\end{align*}
where the third line follows from the Cauchy-Schwarz inequality, the fourth from $\|h_k-h\|_{2,t}\leq \|h_k-h\|_2$, and the fifth from Assumption \ref{as: density}(i). Taking the limit as $k\to\infty$ of both the left-hand and the right-hand sides of this chain of inequalities yields conclusion (\ref{eq: bounded inverse}) of Theorem \ref{thm: main result 1} for all increasing $h\in\mathcal{M}$.

Finally, since for any decreasing $h\in\mathcal{M}$, we have that $-h\in\mathcal{M}$ is increasing, $\|-h\|_{2,t}=\|h\|_{2,t}$ and $\|Th\|_2=\|T(-h)\|_2$, conclusion (\ref{eq: bounded inverse}) of Theorem \ref{thm: main result 1} also holds for all decreasing $h\in\mathcal{M}$, and thus for all $h\in\mathcal{M}$. This completes the proof of the theorem.
\end{proof}

\begin{lemma}\label{lem: crucial lemma}
	Let Assumptions~\ref{as: monotone iv} and \ref{as: density} hold. Then for any increasing continuously differentiable $h\in L^1[0,1]$,
	$$ \|h\|_{1,t} \leq C_2 \|Th\|_1 $$
	where $C_2 := 1/c_p$ and $c_p:= c_wc_f/2 \min\{1-w_2,w_1\} \min\{ (C_F-1)/2,1 \}$.
\end{lemma}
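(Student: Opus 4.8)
The plan is to reduce the statement to a single lower bound on $\phi(w_1)^-+\phi(w_2)^+$, where I write $\phi(w):=\Ep[h(X)\mid W=w]$ and $a^+:=\max\{a,0\}$, $a^-:=\max\{-a,0\}$, and then to prove that bound by a short case split on the sign of $h$ at the endpoints $x_1,x_2$.

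\emph{Step 1 (reduction).} By the first-order stochastic dominance \eqref{eq: fosd}, the conditional laws $F_{X|W}(\cdot\mid w)$ are stochastically increasing in $w$, so for the increasing function $h$ the map $w\mapsto\phi(w)$ is increasing. Hence $\phi(w)\le\phi(w_1)$ for $w\le w_1$, which gives $\int_0^{w_1}|\phi|\ge w_1\phi(w_1)^-$, and symmetrically $\int_{w_2}^1|\phi|\ge(1-w_2)\phi(w_2)^+$. Using $(Th)(w)=\phi(w)f_W(w)$ and $f_W\ge c_w$ (Assumption~\ref{as: density}(iii)),
\[
\|Th\|_1=\int_0^1|\phi(w)|f_W(w)\,dw\;\ge\;c_w\min\{w_1,1-w_2\}\bigl(\phi(w_1)^-+\phi(w_2)^+\bigr),
\]
so it suffices to prove $\phi(w_1)^-+\phi(w_2)^+\ge\tfrac{c_f}{2}\min\{\tfrac{C_F-1}{2},1\}\,\|h\|_{1,t}$.

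\emph{Step 2 (a pointwise CDF gap and the main case).} First I would show that for a.e.\ $t\in(x_1,x_2)$, $F_{X|W}(t\mid w_1)-F_{X|W}(t\mid w_2)\ge(C_F-1)/(C_F+1)$: writing $\delta$ for this difference, condition \eqref{eq: lower cdf bound} (valid on $(0,x_2)$) gives $F_{X|W}(t\mid w_2)\le\delta/(C_F-1)$, while \eqref{eq: upper cdf bound} (valid on $(x_1,1)$) gives $1-F_{X|W}(t\mid w_1)\le\delta/(C_F-1)$; adding these two and noting that their left-hand sides sum to $1-\delta$ yields $(C_F-1)(1-\delta)\le2\delta$. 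Now suppose $h(x_1)\le0\le h(x_2)$. Then $\|h\|_{1,t}\le(x_2-x_1)\max\{-h(x_1),h(x_2)\}\le(x_2-x_1)\bigl(h(x_2)-h(x_1)\bigr)$, and since $x_2-x_1\le1$ and $c_f(x_2-x_1)\le\Pr(X\in[x_1,x_2]\mid W=w_1)\le1$, we obtain $h(x_2)-h(x_1)\ge c_f\|h\|_{1,t}$. Integrating by parts and using \eqref{eq: fosd} (so the integrand below is nonnegative),
\[
\phi(w_2)-\phi(w_1)=\int_0^1 Dh(t)\bigl(F_{X|W}(t\mid w_1)-F_{X|W}(t\mid w_2)\bigr)\,dt\;\ge\;\frac{C_F-1}{C_F+1}\int_{x_1}^{x_2}Dh(t)\,dt,
\]
which equals $\tfrac{C_F-1}{C_F+1}(h(x_2)-h(x_1))\ge\tfrac{C_F-1}{C_F+1}c_f\|h\|_{1,t}$. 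Since $\phi(w_1)^-+\phi(w_2)^+\ge\phi(w_2)-\phi(w_1)$ and $\tfrac{C_F-1}{C_F+1}\ge\tfrac12\min\{\tfrac{C_F-1}{2},1\}$, the required bound holds in this case.

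\emph{Step 3 (constant-sign case and conclusion).} Suppose $h\ge0$ on $[x_1,x_2]$; choose $x_0\in[0,x_1]$ with $h(x_0)=0$ (take $x_0=0$ if $h>0$ throughout) and write $\phi(w_j)=R_j-L_j$ with $R_j:=\Ep[h^+(X)\mid W=w_j]$ and $L_j:=\Ep[h^-(X)\mid W=w_j]$. Integrating by parts, $L_j=\int_0^{x_0}Dh(t)F_{X|W}(t\mid w_j)\,dt$, so \eqref{eq: lower cdf bound} on $(0,x_0)\subset(0,x_2)$ gives $L_1\ge C_F L_2$; \eqref{eq: fosd} gives $R_1\le R_2$ (as $h^+$ is increasing); and $R_2\ge c_f\int_{x_1}^{x_2}h=c_f\|h\|_{1,t}$ by Assumption~\ref{as: density}(ii). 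If $L_2\le R_2/2$ then $\phi(w_2)^+\ge R_2-L_2\ge\tfrac12 R_2\ge\tfrac{c_f}{2}\|h\|_{1,t}$; if instead $L_2>R_2/2$ then, using $\phi(w_1)^-\ge L_1-R_1\ge C_FL_2-R_2$ and $\phi(w_2)^+\ge R_2-L_2$,
\[
\phi(w_1)^-+\phi(w_2)^+\;\ge\;(C_FL_2-R_2)+(R_2-L_2)\;=\;(C_F-1)L_2\;>\;\tfrac{C_F-1}{2}R_2\;\ge\;\tfrac{(C_F-1)c_f}{2}\|h\|_{1,t}.
\]
The case $h\le0$ on $[x_1,x_2]$ is handled by the analogous argument, interchanging $w_1\leftrightarrow w_2$, the left and right tails, and using \eqref{eq: upper cdf bound} in place of \eqref{eq: lower cdf bound}. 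In every case $\phi(w_1)^-+\phi(w_2)^+\ge\tfrac{c_f}{2}\min\{\tfrac{C_F-1}{2},1\}\|h\|_{1,t}$, which combined with Step~1 yields $\|h\|_{1,t}\le C_2\|Th\|_1$ with $C_2=1/c_p$.

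\emph{Main obstacle.} The delicate point is the constant-sign case: the contribution to $\phi(w_1)$ or $\phi(w_2)$ from values of $X$ outside $[x_1,x_2]$—where neither $h$ nor the conditional density is controlled—could a priori dominate the ``good'' contribution, so one cannot simply read off the size of $\phi$ from $h|_{[x_1,x_2]}$. The non-redundancy conditions \eqref{eq: lower cdf bound}–\eqref{eq: upper cdf bound} resolve this by forcing such leakage to be even larger at the outer instrument value, so $\phi(w_1)$ and $\phi(w_2)$ cannot both be close to zero; making this quantitative needs the integration-by-parts identity $\Ep[h^-(X)\mid W=w]=\int_0^{x_0}Dh(t)F_{X|W}(t\mid w)\,dt$ plus the short case split above. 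The remaining effort is routine bookkeeping of the constants and dispatching degenerate configurations ($h$ of one sign on all of $[0,1]$, or $w_1=0$ or $w_2=1$, where the claim is trivial).
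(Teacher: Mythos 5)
Your proof is correct and follows essentially the same route as the paper's: the case split on whether $h$ changes sign inside $(x_1,x_2)$ versus has constant sign on $[x_1,x_2]$, the monotonicity of $w\mapsto\Ep[h(X)\mid W=w]$, the use of conditions \eqref{eq: lower cdf bound}--\eqref{eq: upper cdf bound} via integration by parts, and the final passage from values at $w_1,w_2$ to $\|Th\|_1$ through the tail integrals all mirror the paper's Cases I--III. Your pointwise CDF-gap bound $F_{X|W}(t|w_1)-F_{X|W}(t|w_2)\ge(C_F-1)/(C_F+1)$ on $(x_1,x_2)$ and the unified target $\phi(w_1)^-+\phi(w_2)^+$ are modest streamlinings of the paper's bookkeeping rather than a different argument.
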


\begin{proof} 
	Take any increasing continuously differentiable function $h\in L^1[0,1]$ such that $\|h\|_{1,t}=1$. Define $M(w):=\Ep[h(X)|W=w]$ for all $w\in[0,1]$ and note that
	$$
	\|Th\|_1 = \int_0^1 |M(w)f_W(w)|d w \geq c_W \int_{0}^{1} |M(w)|d w
	$$
	where the inequality follows from Assumption \ref{as: density}(iii). Therefore, the asserted claim follows if we can show that $\int_{0}^{1}|M(w)|d w$ is bounded away from zero by a constant that depends only on $\zeta$.

	First, note that $M(w)$ is increasing. This is because, by integration by parts,
	$$ M(w) = \int_0^1 h(x)f_{X|W}(x|w)dx = h(1) - \int_0^1 Dh(x) F_{X|W}(x|w)dx, $$
	so that condition \eqref{eq: fosd} of Assumption \ref{as: monotone iv} and $Dh(x)\geq 0$ for all $x$ imply that the function $M(w)$ is increasing. 

	Consider the case in which $h(x)\geq 0$ for all $x\in [0,1]$. Then $M(w)\geq 0$ for all $w\in[0,1]$. Therefore,
	\begin{align*}
		\int_{0}^{1} | M(w)|d w &\geq \int_{w_2}^{1} | M(w)|d w \geq (1-w_2) M(w_2)= (1-w_2) \int_0^1 h(x) f_{X|W}(x|w_2)d x\\
			&\geq (1-w_2) \int_{x_1}^{x_2} h(x) f_{X|W}(x|w_2)d x\geq (1-w_2) c_f \int_{x_1}^{x_2} h(x)d x \\
			&=(1-w_2)c_f\|h\|_{1,t}= (1-w_2)c_f > 0
	\end{align*}
by Assumption \ref{as: density}(ii).
	Similarly,
	$$
	\int_{0}^{1} | M(w)|d w\geq w_1c_f>0
	$$
	when $h(x)\leq 0$ for all $x\in [0,1]$. Therefore, it remains to consider the case in which there exists $x^*\in(0,1)$ such that $h(x)\leq 0$ for $x\leq x^*$ and $h(x)\geq 0$ for $x> x^*$. Since $h(x)$ is continuous, $h(x^*)=0$, and so integration by parts yields
	\begin{align}
		M(w) &= \int_0^{x^*} h(x) f_{X|W}(x|w)d x + \int_{x^*}^1 h(x) f_{X|W}(x|w)d x\nonumber\\
			&= -\int_0^{x^*} D h(x) F_{X|W}(x|w)d x + \int_{x^*}^1 D h(x) (1-F_{X|W}(x|w))d x. \label{eq: M decomposition}
	\end{align}
	For $k=1,2$, let $A_k := \int_{x^*}^{1} D h(x) (1-F_{X|W}(x|w_k))$ and $B_k:= \int_{0}^{x^*} D h(x) F_{X|W}(x|w_k)d x$, so that $M(w_k) = A_k - B_k$.

	Consider the following three cases separately, depending on where $x^*$ lies relative to $x_1$ and $x_2$.

	\paragraph{Case I ($x_1< x^* < x_2$):} First, we have
		\begin{align}
			A_1 + B_2 &= \int_{x^*}^{1} D h(x) (1-F_{X|W}(x|w_1))d x + \int_{0}^{x^*} D h(x) F_{X|W}(x|w_2)d x\nonumber\\
				&= \int_{x^*}^{1} h(x) f_{X|W}(x|w_1)d x - \int_{0}^{x^*} h(x) f_{X|W}(x|w_2)d x\nonumber\\
				&\geq \int_{x^*}^{x_2} h(x) f_{X|W}(x|w_1)d x - \int_{x_1}^{x^*} h(x) f_{X|W}(x|w_2)d x\nonumber\\
				&\geq c_1\int_{x^*}^{x_2} h(x) d x + c_f\int_{x_1}^{x^*} |h(x)|d x=c_f\int_{x_1}^{x_2}|h(x)|d x\nonumber\\
				&= c_f \|h\|_{1,t} = c_f > 0 \label{eq: A plus B bound}
		\end{align}
		where the fourth line follows from Assumption~\ref{as: density}(ii). Second, by \eqref{eq: fosd} and \eqref{eq: lower cdf bound} of Assumption~\ref{as: monotone iv},
		\begin{align*}
			M(w_1) &= \int_{x^*}^{1} D h(x) (1-F_{X|W}(x|w_1))d x -\int_{0}^{x^*} D h(x) F_{X|W}(x|w_1)d x\\
				&\leq \int_{x^*}^{1} D h(x) (1-F_{X|W}(x|w_2))d x - C_F \int_{0}^{x^*} D h(x) F_{X|W}(x|w_2)d x\\
				&= A_2 - C_F B_2
		\end{align*}
		so that, together with $M(w_2) = A_2-B_2$, we obtain
		\begin{equation}\label{eq: M difference in B}
			M(w_2)-M(w_1) \geq (C_F-1) B_2.
		\end{equation}
		Similarly, by \eqref{eq: fosd} and \eqref{eq: upper cdf bound} of Assumption \ref{as: monotone iv},
		\begin{align*}
			M(w_2) &= \int_{x^*}^{1} D h(x) (1-F_{X|W}(x|w_2))d x -\int_{0}^{x^*} D h(x) F_{X|W}(x|w_2)d x\\
				&\geq C_F \int_{x^*}^{1} D h(x) (1-F_{X|W}(x|w_1))d x - \int_{0}^{x^*} D h(x) F_{X|W}(x|w_1)d x\\
				&= C_FA_1 - B_1
		\end{align*}
		so that, together with $M(w_1) = A_1-B_1$, we obtain
		\begin{equation}\label{eq: M difference in A}
			M(w_2)-M(w_1) \geq (C_F-1) A_1.
		\end{equation}
		In conclusion, equations \eqref{eq: A plus B bound}, \eqref{eq: M difference in B}, and \eqref{eq: M difference in A} yield
		\begin{equation}
			M(w_2) - M(w_1) \geq (C_F-1) (A_1 + B_2)/2 \geq (C_F-1)c_f/2>0.
		\end{equation}
		Consider the case $M(w_1)\geq 0$ and $M(w_2)\geq 0$. Then $M(w_2) \geq M(w_2)-M(w_1)$ and thus
		\begin{align}
			\int_{0}^{1} |M(w)|d w &\geq \int_{w_2}^{1} |M(w)|d w \geq (1-w_2) M(w_2)\geq (1-w_2)(C_F-1)c_f/2>0.\label{eq: M norm bound}
		\end{align}
		Similarly,
		\begin{equation}\label{eq: negative case bound}
		\int_{0}^{1} |M(w)|d w\geq w_1(C_F-1)c_f/2>0
		\end{equation}
		when $M(w_1)\leq 0$ and $M(w_2)\leq 0$.
				
		Finally, consider the case $M(w_1)\leq 0$ and $M(w_2)\geq 0$. If $M(w_2)\geq |M(w_1)|$, then $M(w_2)\geq (M(w_2)-M(w_1))/2$ and the same argument as in \eqref{eq: M norm bound} shows that 
		$$
		\int_{0}^{1} |M(w)|d w \geq (1-w_2)(C_F-1)c_f/4.
		$$
		If $|M(w_1)|\geq M(w_2)$, then $|M(w_1)|\geq (M(w_2)-M(w_1))/2$ and we obtain
				$$
					\int_{0}^{1} |M(w)|d w \geq \int_{0}^{w_1} |M(w)|d w \geq w_1(C_F-1)c_f/4>0.\label{eq: M norm bound 2}
				$$
		This completes the proof of Case I.

	\paragraph{Case II ($x_2 \leq x^* $):} Suppose $M(w_1)\geq -c_f/2$. As in Case I, we have $M(w_2) \geq C_FA_1 -B_1$. Together with $M(w_1) = A_1 - B_1$, this inequality yields
		\begin{align*}
			M(w_2)-M(w_1) &= M(w_2) - C_F M(w_1) + C_F M(w_1) - M(w_1)\\
				&\geq (C_F-1) B_1 + (C_F-1) M(w_1)\\
				&= (C_F-1) \left( \int_0^{x^*} D h(x) F_{X|W}(x|w_1)d x + M(w_1)\right)\\
				&= (C_F-1) \left(  \int_0^{x^*} |h(x)| f_{X|W}(x|w_1)d x + M(w_1)\right)\\
				&\geq (C_F-1) \left( \int_{x_1}^{x_2} |h(x)| f_{X|W}(x|w_1)d x-\frac{c_f}{2}\right)\\
				&\geq (C_F-1)\left(c_f \int_{x_1}^{x_2} |h(x)|d x -\frac{c_f}{2}\right) = \frac{(C_F-1)c_f}{2} > 0
		\end{align*}
		With this inequality we proceed as in Case I to show that $\int_{0}^{1} |M(w)|d w$ is bounded from below by a positive constant that depends only on $\zeta$. On the other hand, when $M(w_1)\leq -c_f/2$ we bound $\int_{0}^{1} |M(w)|d w$ as in \eqref{eq: negative case bound}, and the proof of Case II is complete.

	\paragraph{Case III ($x^* \leq x_1$):} Similarly as in Case II, suppose first that $M(w_2)\leq c_f/2$. As in Case I we have $M(w_1) \leq A_2 - C_F B_2$ so that together with $M(w_2) = A_2 - B_2$,
		\begin{align*}
			M(w_2)-M(w_1) &= M(w_2)-C_FM(w_2) + C_FM(w_2)-M(w_1)\\
				&\geq (1-C_F) M(w_2) + (C_F-1) A_2 \\
				&= (C_F-1) \left(\int_{x^*}^1 D h(x) (1-F_{X|W}(x|w_2))d x - M(w_2)\right)\\
				&= (C_F-1) \left(\int_{x^*}^1 h(x) f_{X|W}(x|w_2)d x - M(w_2)\right)\\
				&\geq (C_F-1) \left(\int_{x_1}^{x_2} h(x) f_{X|W}(x|w_2)d x - M(w_2)\right)\\
				&\geq (C_F-1) \left(c_f \int_{x_1}^{x_2} h(x) d x - \frac{c_f}{2}\right) = \frac{(C_F-1) c_f}{2}>0
		\end{align*}
		and we proceed as in Case I to bound $\int_{0}^{1} |M(w)|d w$ from below by a positive constant that depends only on $\zeta$. On the other hand, when $M(w_2)> c_f/2$, we bound $\int_{0}^{1} |M(w)|d w$ as in \eqref{eq: M norm bound}, and the proof of Case III is complete. The lemma is proven.
\end{proof}

\begin{proof}[Proof of Corollary~\ref{co:bound_tau}]
	Note that since $\tau(a')\leq \tau(a'')$ whenever $a'\leq a''$, the claim for $a\leq 0$, follows from $\tau(a)\leq \tau(0)\leq \bar{C}$, where the second inequality holds by Theorem \ref{thm: main result 1}. Therefore, assume that $a>0$. Fix any $\alpha\in(0,1)$. Take any function $h\in \mathcal{H}(a)$ such that $\|h\|_{2,t}=1$. Set $h'(x)=ax$ for all $x\in[0,1]$. Note that the function $x\mapsto h(x) + ax$ is increasing and so belongs to the class $\mathcal{M}$. Also, $\|h'\|_{2,t}\leq \|h'\|_2\leq a/\sqrt{3}$. Thus, the bound \eqref{eq: main corollary general} in Lemma~\ref{lem: general bound on tau} below applies whenever $(1+\bar{C}\|T\|_2)a/\sqrt{3}\leq \alpha$. Therefore, for all $a$ satisfying the inequality
	$$
	a\leq \frac{\sqrt{3}\alpha}{1+\bar{C}\|T\|_2},
	$$
	we have $\tau(a)\leq \bar{C}/(1-\alpha)$. This completes the proof of the corollary.
\end{proof}

\begin{lemma}\label{lem: general bound on tau}
	Let Assumptions~\ref{as: monotone iv} and \ref{as: density} be satisfied. Consider any function $h\in L^2[0,1]$. If there exist $h'\in L^2[0,1]$ and $\alpha\in(0,1)$ such that $h+h'\in\mathcal{M}$ and $\|h'\|_{2,t}+\bar{C}\|T\|_2\|h'\|_2\leq \alpha\|h\|_{2,t}$, then
			\begin{equation}\label{eq: main corollary general}
			\|h\|_{2,t}\leq \frac{\bar{C}}{1-\alpha}\|Th\|_2
			\end{equation}
			for the constant $\bar{C}$ defined in Theorem \ref{thm: main result 1}.
\end{lemma}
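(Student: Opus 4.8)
The plan is to reduce everything to Theorem~\ref{thm: main result 1} applied to the monotone function $h+h'$, and then clean up with the triangle inequality and the operator-norm bound. First I would note that by hypothesis $h+h'\in\mathcal{M}$, so Theorem~\ref{thm: main result 1} gives $\|h+h'\|_{2,t}\leq \bar{C}\|T(h+h')\|_2$. The point of introducing $h'$ is precisely that $h$ itself need not be monotone, but a small (in the weighted sense made precise by the hypothesis) perturbation of it is.

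Next I would peel off $h'$ on both sides. On the left, $\|h\|_{2,t}\leq \|h+h'\|_{2,t}+\|h'\|_{2,t}$. On the right, linearity of $T$ and the triangle inequality in $L^2$ give $\|T(h+h')\|_2\leq \|Th\|_2+\|Th'\|_2$, and then $\|Th'\|_2\leq \|T\|_2\|h'\|_2$ by definition of the operator norm. Combining these three bounds with the inequality from Theorem~\ref{thm: main result 1} yields
$$
\|h\|_{2,t}\leq \bar{C}\|Th\|_2 + \bigl(\bar{C}\|T\|_2\|h'\|_2 + \|h'\|_{2,t}\bigr).
$$

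Finally I would invoke the assumed smallness condition $\|h'\|_{2,t}+\bar{C}\|T\|_2\|h'\|_2\leq \alpha\|h\|_{2,t}$ to replace the parenthesized term by $\alpha\|h\|_{2,t}$, giving $\|h\|_{2,t}\leq \bar{C}\|Th\|_2+\alpha\|h\|_{2,t}$. Since $\alpha\in(0,1)$, the term $\alpha\|h\|_{2,t}$ can be moved to the left-hand side and the factor $1-\alpha>0$ divided out, producing $\|h\|_{2,t}\leq \bar{C}\|Th\|_2/(1-\alpha)$, which is \eqref{eq: main corollary general}. There is no real obstacle here: the only things to be careful about are that the hypothesis guarantees $h+h'\in\mathcal{M}$ so that Theorem~\ref{thm: main result 1} is applicable, and that $\alpha<1$ so that dividing by $1-\alpha$ is legitimate; the constant $\bar{C}$ is exactly the one furnished by Theorem~\ref{thm: main result 1}, and $\|T\|_2\leq C_T<\infty$ under Assumption~\ref{as: density}(i), so all quantities appearing are finite.
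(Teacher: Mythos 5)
Your proof is correct and is essentially the same argument as the paper's, just presented more directly: the paper first rescales $h+h'$ by $\|h\|_{2,t}-\|h'\|_{2,t}$ before applying Theorem~\ref{thm: main result 1} and unwinding with reverse triangle inequalities, whereas you apply Theorem~\ref{thm: main result 1} to $h+h'$ directly and use the forward triangle inequalities, but the algebra is equivalent and both hinge on the same three ingredients (Theorem~\ref{thm: main result 1} for $h+h'\in\mathcal{M}$, triangle/operator-norm bounds, and the smallness hypothesis).
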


\begin{proof}
	Define
	$$
	\tilde{h}(x):=\frac{h(x)+h'(x)}{\|h\|_{2,t}-\|h'\|_{2,t}},\quad x\in[0,1].
	$$
	By assumption, $\|h'\|_{2,t}<\|h\|_{2,t}$, and so the triangle inequality yields
	$$
	\|\tilde{h}\|_{2,t}\geq \frac{\|h\|_{2,t}-\|h'\|_{2,t}}{\|h\|_{2,t}-\|h'\|_{2,t}}=1.
	$$
	Therefore, since $\tilde{h}\in\mathcal{M}$, Theorem \ref{thm: main result 1} gives 
	$$
	\|T\tilde{h}\|_2\geq \|\tilde{h}\|_{2,t}/\bar{C}\geq 1/\bar{C}.
	$$
	Hence, applying the triangle inequality once again yields
	\begin{align*}
		\|Th\|_2&\geq (\|h\|_{2,t}-\|h'\|_{2,t})\|T\tilde{h}\|_2-\|Th'\|_2\geq (\|h\|_{2,t}-\|h'\|_{2,t})\|T\tilde{h}\|_2-\|T\|_2\|h'\|_2\\
			&\geq \frac{\|h\|_{2,t}-\|h'\|_{2,t}}{\bar{C}}-\|T\|_2\|h'\|_2
			=\frac{\|h\|_{2,t}}{\bar{C}}\left(1-\frac{\|h'\|_{2,t}+\bar{C}\|T\|_2\|h'\|_2}{\|h\|_{2,t}}\right)
	\end{align*}
	Since the expression in the last parentheses is bounded from below by $1-\alpha$ by assumption, we obtain the inequality
	$$
	\|Th\|_2\geq \frac{1-\alpha}{\bar{C}}\|h\|_{2,t},
	$$
	which is equivalent to \eqref{eq: main corollary general}.
\end{proof}

\section{Proofs for Section \ref{sec: estimation}}

In this section, we use $C$ to denote a strictly positive constant, which value may change from place to place.
Also, we use $\Ep_n[\cdot]$ to denote the average over index $i=1,\dots,n$; for example, $\Ep_n[X_i]=n^{-1}\sum_{i=1}^n X_i$.

\label{sec: risk bound proofs}

\begin{proof}[Proof of Lemma \ref{lem: asymptotic problem}]
Observe that if $D\hat{g}^u(x)\geq 0$ for all $x\in[0,1]$, then $\hat{g}^c$ coincides with $\hat{g}^u$, so that to prove (\ref{eq: asymptotic coincidence}), it suffices to show that
\begin{equation}\label{eq: lemma 3 reduction 1}
\Pr\Big(D\hat{g}^u(x)\geq 0\text{ for all }x\in[0,1]\Big)\to 1\text{ as }n\to\infty.
\end{equation}
In turn, (\ref{eq: lemma 3 reduction 1}) follows if
\begin{equation}\label{eq: lemma 3 reduction 2}
\sup_{x\in[0,1]}|D\hat{g}^u(x) - D g(x)|=o_p(1)
\end{equation}
since $D g(x)\geq c_g$ for all $x\in[0,1]$ and some $c_g>0$.

To prove (\ref{eq: lemma 3 reduction 2}), define a function $\hat{m}\in L^2[0,1]$ by 
\begin{equation}
\hat{m}(w)=q(w)'\Ep_n[q(W_i)Y_i],\qquad w\in[0,1],\label{eq: m hat}
\end{equation}
and an operator $\hat{T}:L^2[0,1]\to L^2[0,1]$ by
$$
(\hat{T}h)(w)=q(w)'\Ep_n[q(W_i)p(X_i)']\Ep[p(U)h(U)],\quad w\in[0,1],\ h\in L^2[0,1].
$$
Throughout the proof, we assume that the events
\begin{align}
&\|\Ep_n[q(W_i)p(X_i)']-\Ep[q(W)p(X)']\|\leq C(\xi_n^2\log n/n)^{1/2},\label{eq: matrix bound 1}\\
&\|\Ep_n[q(W_i)q(W_i)']-\Ep[q(W)q(W)']\|\leq C(\xi_n^2\log n/n)^{1/2},\label{eq: matrix bound 2}\\
&\|\Ep_n[q(W_i) g_n (X_i)] - \Ep[q(W)g_n(X)]\|\leq C(J/(\alpha n))^{1/2},\label{eq: matrix bound 3}\\
&\|\hat{m}-m\|_2\leq C((J/(\alpha n))^{1/2}+\tau_n^{-1}J^{-s})\label{eq: m bound prob}
\end{align}
hold for some sufficiently large constant $0<C<\infty$. It follows from Markov's inequality and Lemmas \ref{lem: m estimation} and \ref{lem: matrix LLN} that all four events hold jointly with probability at least $1-\alpha-n^{-1}$ since the constant $C$ is large enough.

Next, we derive a bound on $\|\hat{g}^u-g_n\|_2$. By the definition of $\tau_n$,
\begin{align*}
\|\hat{g}^u - g_n\|_2 
&\leq \tau_n \|T(\hat{g}^u - g_n)\|_2\\
&\leq \tau_n \|T(\hat{g}^u - g)\|_2 + \tau_n\|T(g - g_n)\|_2\leq \tau_n \|T(\hat{g}^u - g)\|_2 + C_g K^{-s}
\end{align*}
where the second inequality follows from the triangle inequality, and the third inequality from Assumption \ref{as: g approximation}(iii). Next, since $m=T g$,
$$
\|T(\hat{g}^u-g)\|_2\leq \|(T-T_n)\hat{g}^u\|_2+\|(T_n-\hat{T})\hat{g}^u\|_2+\|\hat{T}\hat{g}^u-\hat{m}\|_2+\|\hat{m}-m\|_2
$$
by the triangle inequality. The bound on $\|\hat{m}-m\|_2$ is given in \eqref{eq: m bound prob}.
Also, since $\|\hat{g}^u\|_2\leq C_b$ by construction, 
$$
\|(T-T_n)\hat{g}^u\|_2\leq C_b C_a\tau_n^{-1}K^{-s}
$$
by Assumption \ref{as: T approximation}(ii). In addition, by the triangle inequality,
\begin{align*}
\|(T_n - \hat{T})\hat{g}^u\|_2
&\leq \|(T_n - \hat{T})(\hat{g}^u - g_n)\|_2 + \|(T_n - \hat{T})g_n\|_2\\
&\leq \|T_n - \hat{T}\|_2 \|\hat{g}^u - g_n\|_2 + \|(T_n - \hat{T})g_n\|_2.
\end{align*}
Moreover,
$$
\|T_n - \hat{T}\|_2 = \|\Ep_n[q(W_i)p(X_i)']-\Ep[q(W)p(X)']\|\leq C(\xi_n^2\log n/n)^{1/2}
$$
by (\ref{eq: matrix bound 1}), and
$$
 \|(T_n - \hat{T})g_n\|_2 = \|\Ep_n[q(W_i) g_n (X_i)] - \Ep[q(W)g_n(X)]\|\leq C(J/(\alpha n))^{1/2}
$$
by (\ref{eq: matrix bound 3}).

Further, by Assumption \ref{as: density}(iii), all eigenvalues of $\Ep[q(W)q(W)']$ are bounded from below by $c_w$ and from above by $C_w$, and so it follows from (\ref{eq: matrix bound 2}) that for large $n$, all eigenvalues of $Q_n:=\Ep_n[q(W_i)q(W_i)']$ are bounded below from zero and from above. Therefore,
\begin{align*}
\|\hat{T}\hat{g}^u-\hat{m}\|_2&=\|\Ep_n[q(W_i)(p(X_i)'\hat{\beta}^u-Y_i)]\|\\
&\leq C\|\Ep_n[(Y_i-p(X_i)'\hat{\beta}^u)q(W_i)']Q_n^{-1}\Ep_n[q(W_i)(Y_i-p(X_i)'\hat{\beta}^u)]\|^{1/2}\\
&\leq C\|\Ep_n[(Y_i-p(X_i)'\beta_n)q(W_i)']Q_n^{-1}\Ep_n[q(W_i)(Y_i-p(X_i)'\beta_n)]\|^{1/2}\\
&\leq C\|\Ep_n[q(W_i)(p(X_i)'\beta_n-Y_i)]\|
\end{align*}
by optimality of $\hat{\beta}^u$.
Moreover,
\begin{align*}
\|\Ep_n[q(W_i)(p(X_i)'\beta_n-Y_i)]\|
&\leq \|(\hat{T}-T_n)g_n\|_2+\|(T_n - T) g_n\|_2\\
&\quad+\|T(g_n-g)\|_2+\|m-\hat{m}\|_2
\end{align*}
by the triangle inequality. The terms $\|(\hat{T}-T_n)g_n\|_2$ and $\|m-\hat{m}\|_2$ have been bounded above. Also, by Assumptions \ref{as: T approximation}(ii) and \ref{as: g approximation}(iii),
$$
\|(T_n - T)g_n\|_2\leq C\tau_n^{-1}K^{-s},\qquad \|T(g-g_n)\|_2\leq C_g \tau_n^{-1}K^{-s}.
$$
Combining the inequalities above shows that the inequality
\begin{equation}\label{eq: main bound cond exp}
\|\hat{g}^u-g_n\|_2\leq C\Big(\tau_n(J/(\alpha n))^{1/2}+K^{-s} + \tau_n (\xi_n^2\log n/n)^{1/2}\|\hat{g} - g_n\|_2\Big)
\end{equation}
holds with probability at least $1-\alpha-n^{-c}$. Since $\tau_n^2\xi_n^2\log n/n\to 0$, it follows that with the same probability,
$$
\|\hat{\beta}^u - \beta_n\| = \|\hat{g}^u - g_n\|_2 \leq C\Big(\tau_n(J/(\alpha n))^{1/2}+K^{-s}\Big),
$$
and so by the triangle inequality,
\begin{align*}
|D\hat{g}^u(x) - D g(x)|
&\leq |D\hat{g}^u(x) - D g_n(x)| + |D g_n(x) - D g(x)|\\
&\leq C\sup_{x\in[0,1]}\|D p(x)\|(\tau_n(K/(\alpha n))^{1/2}+K^{-s}) + o(1)
\end{align*}
uniformly over $x\in[0,1]$ since $J\leq C_J K$ by Assumption \ref{as: J-K}.  Since by the conditions of the lemma, $\sup_{x\in[0,1]}\|D p(x)\|(\tau_n(K/n)^{1/2}+K^{-s})\to 0$, (\ref{eq: lemma 3 reduction 2}) follows by taking $\alpha=\alpha_n\to 0$ slowly enough. This completes the proof of the lemma.
\end{proof}



 		%


\begin{proof}[Proof of Theorem \ref{thm: risk bounds}]
Consider the event that inequalities (\ref{eq: matrix bound 1})-(\ref{eq: m bound prob}) hold for some sufficiently large constant $C$. As in the proof of Lemma \ref{lem: asymptotic problem}, this events occurs with probability at least $1-\alpha-n^{-1}$. Also, applying the same arguments as those in the proof of Lemma \ref{lem: asymptotic problem} with $\hat{g}^c$ replacing $\hat{g}^u$ and using the bound
$$
\|(T_n - \hat{T})\hat{g}^c\|_2\leq \|T_n - \hat{T}\|_2\|\hat{g}^c\|_2 \leq C_b \|T_n - \hat{T}\|_2
$$
instead of the bound for $\|(T_n - \hat{T})\hat{g}^u\|_2$ used in the proof of Lemma \ref{lem: asymptotic problem}, it follows that on this event,
\begin{equation}\label{eq: imp ineq thm 2}
\|T(\hat{g}^c - g_n)\|_2\leq C\Big((K/(\alpha n))^{1/2} + (\xi_n^2\log n/n)^{1/2} + \tau_n^{-1} K^{-s}\Big).
\end{equation}
Further,
$$
\|\hat{g}^c-g_n\|_{2,t}\leq \delta + \tau_{n,t}\Big(\frac{\|D g_n\|_{\infty}}{\delta}\Big)\|T(\hat{g}^c-g_n)\|_2
$$
since $\hat{g}^c$ is increasing (indeed, if $\|\hat{g}^c-g\|_{2,t}\leq \delta$, the bound is trivial; otherwise, apply the definition of $\tau_{n,t}$ to the function $(\hat{g}^c-g_n)/\|\hat{g}^c-g_n\|_{2,t}$ and use the inequality
$\tau_{n,t}(\|D g_n\|_{\infty}/\|\hat{g}^c-g_n\|_{2,t})\leq \tau_{n,t}(\|D g_n\|_\infty/\delta)$). 
Finally, by the triangle inequality,
$$
\|\hat{g}^c - g\|_{2,t} \leq \|\hat{g}^c - g_n\|_{2,t} + \|g_n - g\|_{2,t} \leq \|\hat{g}^c - g_n\|_{2,t} + C_g K^{-s}.
$$
Combining these inequalities gives the asserted claim (\ref{eq: main rate 2}). 

To prove (\ref{eq: main rate 3}), observe that combining (\ref{eq: imp ineq thm 2}) and Assumption \ref{as: g approximation}(iii) and applying the triangle inequality shows that with probability at least $1-\alpha - n^{-1}$,
$$
\|T(\hat{g}^c - g)\|_2\leq C\Big((K/(\alpha n))^{1/2} + (\xi_n^2\log n/n)^{1/2} + \tau_n^{-1} K^{-s}\Big),
$$
which, by the same argument as that used to prove (\ref{eq: main rate 2}), gives
\begin{equation}\label{eq: imp ineq thm 2 - 2}
\|\hat{g}^c-g\|_{2,t}\leq C \Big\{\delta + \tau\Big(\frac{\|D g\|_{\infty}}{\delta}\Big)\Big(\frac{K}{\alpha n}+\frac{\xi_n^2\log n}{n}\Big)^{1/2} + K^{-s}\Big\}.
\end{equation}
The asserted claim (\ref{eq: main rate 3}) now follows by applying (\ref{eq: main rate 2}) with $\delta=0$ and (\ref{eq: imp ineq thm 2 - 2}) with $\delta=\|D g\|_\infty/c_\tau$ and using Corollary \ref{co:bound_tau} to bound $\tau(c_\tau)$. This completes the proof of the theorem.
\end{proof}

\begin{lemma}\label{lem: m estimation}
Under conditions of Theorem \ref{thm: risk bounds}, $\|\hat{m}-m\|_2\leq C((J/(\alpha n))^{1/2}+\tau_n^{-1}J^{-s})$ with probability at least $1-\alpha$ where $\hat{m}$ is defined in \eqref{eq: m hat}.
\end{lemma}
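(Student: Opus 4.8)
The plan is to split $\hat m-m$ into a stochastic (variance) part and a deterministic (bias) part via the $L^2[0,1]$-projection onto the sieve space. Let $\Pi_J$ denote the orthogonal projection in $L^2[0,1]$ onto $\mathrm{span}\{q_1,\dots,q_J\}$. Since $\{q_k\}$ is orthonormal, $\Pi_J m=\sum_{k=1}^J\langle m,q_k\rangle q_k$, and the $k$-th coefficient is
$$
\langle m,q_k\rangle=\int_0^1\Ep[Y\mid W=w]f_W(w)q_k(w)\,dw=\Ep[Yq_k(W)],
$$
so that $\Pi_J m(w)=q(w)'\Ep[q(W)Y]$. Hence $\hat m-\Pi_J m=q(\cdot)'(\Ep_n[q(W_i)Y_i]-\Ep[q(W)Y])$, and orthonormality of $q$ gives $\|\hat m-\Pi_J m\|_2=\|\Ep_n[q(W_i)Y_i]-\Ep[q(W)Y]\|$, the Euclidean norm of a sample average of i.i.d.\ mean-zero vectors.

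For the stochastic part I would compute the second moment: by i.i.d.\ sampling,
$$
\Ep\big\|\Ep_n[q(W_i)Y_i]-\Ep[q(W)Y]\big\|^2\le\frac1n\Ep\big[\|q(W)\|^2Y^2\big]=\frac1n\Ep\big[\|q(W)\|^2\,\Ep[Y^2\mid W]\big].
$$
Writing $Y=g(X)+\varepsilon$, Assumption~\ref{as: moments} gives $\Ep[Y^2\mid W]\le 2\Ep[g(X)^2\mid W]+2\Ep[\varepsilon^2\mid W]\le 4C_B$, while $\Ep[\|q(W)\|^2]=\sum_{k=1}^J\Ep[q_k(W)^2]\le C_W\sum_{k=1}^J\|q_k\|_2^2=C_WJ$ by Assumption~\ref{as: density}(iii) together with orthonormality. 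Thus the second moment is $O(J/n)$, and Markov's inequality with threshold $(4C_BC_WJ/(\alpha n))^{1/2}$ yields $\|\hat m-\Pi_J m\|_2\le C(J/(\alpha n))^{1/2}$ on an event of probability at least $1-\alpha$.

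For the bias part I would invoke Assumption~\ref{as: m approximation}: the function $m_n=q(\cdot)'\gamma_n$ lies in $\mathrm{span}\{q_1,\dots,q_J\}$, and since $\Pi_J$ is the $L^2$-projection onto that span, $\|m-\Pi_J m\|_2\le\|m-m_n\|_2\le C_m\tau_n^{-1}J^{-s}$. Combining with the stochastic bound via the triangle inequality gives $\|\hat m-m\|_2\le C\big((J/(\alpha n))^{1/2}+\tau_n^{-1}J^{-s}\big)$ on the same event, which is the claim.

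This is the standard analysis of a series estimator of a conditional-mean-times-density, and none of the steps is genuinely hard; the only point requiring care is keeping the variance bound sharp, i.e.\ of order $J/n$ rather than $\xi_{J,q}^2J/n$ or worse. That sharpness comes precisely from using the $L^2$-orthonormality of $\{q_k\}$, so that the $L^2$-norm of the estimation error equals the Euclidean norm of the coefficient error and $\Ep[\|q(W)\|^2]\le C_WJ$, together with $f_W\le C_W$ from Assumption~\ref{as: density}(iii) and the conditional moment bounds of Assumption~\ref{as: moments}; no matrix concentration argument or bound on $\xi_{J,q}$ is needed here.
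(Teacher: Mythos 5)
Your proposal is correct and follows essentially the same route as the paper: both center the estimator at $q(w)'\Ep[q(W)Y]$ (your $\Pi_J m$, the paper's $\bar m_n$), bound the stochastic part by a second-moment calculation giving order $J/n$ plus Markov, and control the bias via Assumption~\ref{as: m approximation}. The only difference is cosmetic: you dispatch the bias term with the best-approximation property of the orthogonal projection, whereas the paper expands $\bar m_n$ explicitly and invokes Bessel's inequality to reach the same bound (up to a factor of $2$).
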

\begin{proof}
Using the triangle inequality and an elementary inequality $(a+b)^2\leq 2a^2+2b^2$ for all $a,b\geq 0$,
$$
\|\Ep_n[q(W_i)Y_i]-E[q(W)g(X)]\|^2\leq 2\|\Ep_n[q(W_i)\varepsilon_i]\|^2+2\|\Ep_n[q(W_i)g(X_i)]-\Ep[q(W)g(X)]\|^2.
$$
To bound the first term on the right-hand side of this inequality, we have
\begin{align*}
E\left[\|\Ep_n[q(W_i)\varepsilon_i]\|^2\right]
=n^{-1}\Ep[\|q(W)\varepsilon\|^2]\leq (C_B/n)\Ep[\|q(W)\|^2]\leq C J/n
\end{align*}
where the first and the second inequalities follow from Assumptions \ref{as: moments} and \ref{as: density}, respectively. Similarly,
\begin{align*}
\Ep\left[\|\Ep_n[q(W_i)g(X_i)]-\Ep[q(W)g(X)]\|^2\right]
&\leq n^{-1}\Ep[\|q(W)g(X)\|^2]\\
&\leq (C_B/n)\Ep[\|q(W)\|^2]\leq C J/n
\end{align*}
by Assumption \ref{as: moments}. Therefore, denoting $\bar{m}_n(w):=q(w)'\Ep[q(W)g(X)]$ for all $w\in[0,1]$, we obtain
$$
\Ep[\|\hat{m}-\bar{m}_n\|_2^2]\leq C J/n,
$$
and so by Markov's inequality, $\|\hat{m}-\bar{m}_n\|_2\leq C(J/(\alpha n))^{1/2}$ with probability at least $1-\alpha$.
Further, using $\gamma_n\in\mathbb{R}^J$ from Assumption \ref{as: m approximation}, so that $m_n(w)=q(w)'\gamma_n$ for all $w\in[0,1]$, and denoting $r_n(w):=m(w)-m_n(w)$ for all $w\in[0,1]$, we obtain
\begin{align*}
\bar{m}_n(w)
&=q(w)'\int_0^1\int_0^1q(t)g(x)f_{X,W}(x,t)d x d t\\
&=q(w)'\int_0^1q(t)m(t)d t=q(w)'\int_0^1q(t)(q(t)'\gamma_n+r_n(t))d t\\
&=q(w)'\gamma_n+q(w)'\int_0^1q(t)r_n(t)d t=m(w)-r_n(w)+q(w)'\int_0^1q(t)r_n(t)d t.
\end{align*}
Hence, by the triangle inequality,
$$
\|\bar{m}_n-m\|_2\leq \|r_n\|_2+\left\Vert\int_0^1q(t)r_n(t)d t\right\Vert\leq 2\|r_n\|_2\leq 2C_m\tau_n^{-1}J^{-s}
$$
by Bessel's inequality and Assumption \ref{as: m approximation}. Applying the triangle inequality one more time, we obtain
$$
\|\hat{m}-m\|_2\leq \|\hat{m}-\bar{m}_n\|+\|\bar{m}_n-m\|_2\leq C((J/(\alpha n))^{1/2}+\tau_n^{-1}J^{-s})
$$
with probability at least $1-\alpha$.
This completes the proof of the lemma.
\end{proof}

\begin{proof}[Proof of Corollary \ref{cor: fast rate under local-to-constant}]
The corollary follows immediately from Theorem \ref{thm: risk bounds}.
\end{proof}

	%
	%
	%
	%

\section{Proofs for Section \ref{sec: identification}}

Let $\mathcal{M}_\uparrow$ be the set of all functions in $\mathcal{M}$ that are increasing but not constant. Similarly, let $\mathcal{M}_\downarrow$ be the set of all functions in $\mathcal{M}$ that are decreasing but not constant, and let $\mathcal{M}_\rightarrow$ be the set of all constant functions in $\mathcal{M}$.

\begin{proof}[Proof of Theorem~\ref{lem: id of sign of Dg}]
Assume that $g$ is increasing but not constant, that is, $g\in\mathcal{M}_\uparrow$. Define $M(w) := \Ep[Y|W=w]$, $w\in[0,1]$. Below we show that $M\in\mathcal{M}_\uparrow$. To prove it, observe that, as in the proof of Lemma \ref{lem: crucial lemma}, integration by parts gives
$$
M(w)=g(1) - \int_0^1 D g(x)F_{X|W}(x|w)d x,
$$
and so Assumption \ref{as: monotone iv} implies that $M$ is increasing. Let us show that $M$ is not constant. To this end, note that
$$
M(w_2) - M(w_1) = \int_0^1 D g(x)(F_{X|W}(x|w_1) - F_{X|W}(x|w_2))d x.
$$
Since $g$ is not constant and is continuously differentiable, there exists $\bar{x}\in(0,1)$ such that $D g(\bar{x})>0$. Also, since $0\leq x_1<x_2\leq 1$ (the constants $x_1$ and $x_2$ appear in Assumption \ref{as: monotone iv}), we have $\bar{x}\in(0,x_2)$ or $\bar{x}\in (x_1,1)$. In the first case,
$$
M(w_2) - M(w_1) \geq  \int_0^{x_2} (C_F - 1)D g(x) F_{X|W}(x|w_2)d x > 0.
$$
In the second case,
$$
M(w_2) - M(w_1) \geq  \int_{x_1}^{1} (C_F - 1)D g(x) (1 - F_{X|W}(x|w_1))d x > 0.
$$
Thus, $M$ is not constant, and so $M\in\mathcal{M}_\uparrow$. Similarly, one can show that if $g\in\mathcal{M}_\downarrow$, then $M\in\mathcal{M}_\downarrow$, and if $g\in\mathcal{M}_\rightarrow$, then $M\in\mathcal{M}_\rightarrow$. However, the distribution of the triple $(Y,X,W)$ uniquely determines whether $M\in\mathcal{M}_\uparrow$, $\mathcal{M}_\downarrow$, or $\mathcal{M}_\rightarrow$, and so it also uniquely determines whether $g\in\mathcal{M}_\uparrow$, $\mathcal{M}_\downarrow$, or $\mathcal{M}_\rightarrow$  This completes the proof.
%
\end{proof}

\begin{proof}[Proof of Theorem~\ref{lem: id set bounds}]
Suppose $g'$ and $g''$ are observationally equivalent. Then $\|T(g'-g'')\|_2=0$. On the other hand, since $0\leq \|h\|_{2,t}+\bar{C}\|T\|_2\|h\|_2 < \|g'-g''\|_{2,t}$, there exists $\alpha\in(0,1)$ such that $\|h\|_{2,t}+\bar{C}\|T\|_2\|h\|_2 \leq \alpha\|g'-g''\|_{2,t}$. Therefore, by Lemma~\ref{lem: general bound on tau}, $\|T(g'-g'')\|_2 \geq \|g'-g''\|_{2,t}(1-\alpha)/\bar{C}>0$, which is a contradiction. This completes the proof of the theorem.
\end{proof}

\section{Proofs for Section \ref{sec: testing}}
\begin{proof}[Proof of Theorem \ref{thm: size control}]
In this proof, $c$ and $C$ are understood as sufficiently small and large constants, respectively, whose values may change at each appearance but can be chosen to depend only on $c_W,C_W,c_h,C_H,c_F,C_F,c_\epsilon,C_\epsilon$, and the kernel $K$.

To prove the asserted claims, we apply Corollary 3.1, Case (E.3), from CCK conditional on $\mathcal{W}_n=\{W_1,\dots,W_n\}$. Under $H_0$,
\begin{equation}\label{eq: trivial bound null}
T\leq \max_{(x,w,h)\in\mathcal{X}_n\times\mathcal{W}_n\times\mathcal{B}_n}\frac{\sum_{i=1}^nk_{i,h}(w)(1\{X_i\leq x\}-F_{X|W}(x|W_i))}{\left(\sum_{i=1}^nk_{i,h}(w)^2\right)^{1/2}}=:T_0
\end{equation}
with equality if the functions $w\mapsto F_{X|W}(x|w)$ are constant for all $x\in(0,1)$. Using the notation of CCK,
$$
T_0=\max_{1\leq j\leq p}\frac{1}{\sqrt{n}}\sum_{i=1}^n x_{i j}
$$
where $p=|\mathcal{X}_n\times\mathcal{W}_n\times\mathcal{B}_n|$, the number of elements in the set $\mathcal{X}_n\times\mathcal{W}_n\times\mathcal{B}_n$, $x_{i j}=z_{i j}\varepsilon_{i j}$ with $z_{i j}$ having the form $\sqrt{n}k_{i,h}(w)/(\sum_{i=1}^nk_{i,h}(w)^2)^{1/2}$, and $\varepsilon_{i j}$ having the form $1\{X_i\leq x\}-F_{X|W}(x|W_i)$ for some $(x,w,h)\in\mathcal{X}_n\times\mathcal{W}_n\times\mathcal{B}_n$. The dimension $p$ satisfies $\log p\leq C\log n$. Also, $n^{-1}\sum_{i=1}^n z_{i j}^2=1$. Further, since $0\leq 1\{X_i\leq x\}\leq 1$, we have $|\varepsilon_{i j}|\leq 1$, and so $\Ep[\exp(|\varepsilon_{i j}|/2)|\mathcal{W}_n]\leq 2$. In addition, $\Ep[\varepsilon_{i j}^2|\mathcal{W}_n]\geq c_\epsilon(1-C_\epsilon)>0$ by Assumption \ref{as: conditional density}. Thus, $T_0$ satisfies the conditions of Case (E.3) in CCK with a sequence of constants $B_n$ as long as $|z_{i j}|\leq B_n$ for all $j=1,\dots,p$. In turn, Proposition B.2 in \cite{Chetverikov:2012fk} shows that under Assumptions \ref{as: density}, \ref{as: kernel}, and \ref{as: bandwidth values}, with probability at least $1-C n^{-c}$, $z_{i j}\leq C/\sqrt{h_{\min}}=:B_n$ uniformly over all $j=1,\dots,p$ (Proposition B.2 in \cite{Chetverikov:2012fk} is stated with ``w.p.a.1'' replacing ``$1-Cn^{-c}$''; however, inspecting the proof of Proposition B.2 (and supporting Lemma H.1) shows that the result applies with ``$1-Cn^{-c}$'' instead of ``w.p.a.1''). Let $\mathcal{B}_{1,n}$ denote the event that $|z_{i j}|\leq C/\sqrt{h_{\min}}=B_n$ for all $j=1,\dots,p$. As we just established, $\Pr(\mathcal{B}_{1,n})\geq 1-Cn^{-c}$. Since $(\log n)^7/(nh_{\min})\leq C_hn^{-c_h}$ by Assumption \ref{as: bandwidth values}, we have that $B_n^2(\log n)^7/n\leq Cn^{-c}$, and so condition (i) of Corollary 3.1 in CCK is satisfied on the event $\mathcal{B}_{1,n}$.

Let $\mathcal{B}_{2,n}$ denote the event that
$$
\Pr\left(\max_{(x,w)\in\mathcal{X}_n\times\mathcal{W}_n}|\hat{F}_{X|W}(x|w)-F_{X|W}(x|w)|>C_Fn^{-c_F}|\{\mathcal{W}_n\}\right)\leq C_Fn^{-c_F}.
$$
By Assumption \ref{as: cond cdf estimator}, $\Pr(\mathcal{B}_{2,n})\geq 1-C_Fn^{-c_F}$. We apply Corollary 3.1 from CCK conditional on $\mathcal{W}_n$ on the event $\mathcal{B}_{1,n}\cap\mathcal{B}_{2,n}$.
For this, we need to show that on the event $\mathcal{B}_{2,n}$, $\zeta_{1,n}\sqrt{\log n}+\zeta_{2,n}\leq Cn^{-c}$ where $\zeta_{1,n}$ and $\zeta_{2,n}$ are positive sequences such that
\begin{equation}\label{eq: CCK verification main}
\Pr\left(\Pr_e(|T^b-T^b_0|>\zeta_{1,n})>\zeta_{2,n}|\mathcal{W}_n\right)<\zeta_{2,n}
\end{equation}
where
$$
T^b_0:=\max_{(x,w,h)\in\mathcal{X}_n\times\mathcal{W}_n\times\mathcal{B}_n}\frac{\sum_{i=1}^ne_i\left(k_{i,h}(w)(1\{X_i\leq x\}-F_{X|W}(x|W_i))\right)}{\left(\sum_{i=1}^nk_{i,h}(w)^2\right)^{1/2}}
$$
and where $\Pr_e(\cdot)$ denotes the probability distribution with respect to the distribution of $e_1,\dots,e_n$ and keeping everything else fixed. To find such sequences $\zeta_{1,n}$ and $\zeta_{2,n}$, note that $\zeta_{1,n}\sqrt{\log n}+\zeta_{2,n}\leq Cn^{-c}$ follows from $\zeta_{1,n}+\zeta_{2,n}\leq Cn^{-c}$ (with different constants $c,C>0$), so that it suffices to verify the latter condition. Also, 
$$
|T^b-T^b_0|\leq \max_{(x,w,h)\in\mathcal{X}_n\times\mathcal{W}_n\times\mathcal{B}_n}\left|\frac{\sum_{i=1}^n e_i k_{i,h}(w)(\hat{F}_{X|W}(x|W_i)-F_{X|W}(x|W_i))}{\left(\sum_{i=1}^nk_{i,h}(w)^2\right)^{1/2}}\right|.
$$
For fixed $W_1,\dots,W_n$ and $X_1,\dots,X_n$, the random variables under the modulus on the right-hand side of this inequality are normal with zero mean and variance bounded from above by $\max_{(x,w)\in\mathcal{X}_n\times\mathcal{W}_n}|\hat{F}_{X|W}(x|w)-F_{X|W}(x|w)|^2$. Therefore,
$$
\Pr_e\left(|T^b-T^b_0|>C\sqrt{\log n}\max_{(x,w)\in\mathcal{X}_n\times\mathcal{W}_n}\left|\hat{F}_{X|W}(x|w)-F_{X|W}(x|w)\right|\right)\leq Cn^{-c}.
$$
Hence, on the event that 
$$
\max_{(x,w)\in\mathcal{X}_n\times\mathcal{W}_n}\left|\hat{F}_{X|W}(x|w)-F_{X|W}(x|w)\right|\leq C_Fn^{-c_F},
$$
whose conditional probability given $\mathcal{W}_n$ on $\mathcal{B}_{2,n}$ is at least $1-C_Fn^{-c_F}$ by the definition of $\mathcal{B}_{2,n}$,
$$
\Pr_e\left(|T^b-T^b_0|>Cn^{-c}\right)\leq Cn^{-c}
$$
implying that (\ref{eq: CCK verification main}) holds for some $\zeta_{1,n}$ and $\zeta_{2,n}$ satisfying $\zeta_{1,n}+\zeta_{2,n}\leq Cn^{-c}$.

Thus, applying Corollary 3.1, Case (E.3), from CCK conditional on $\{W_1,\dots,W_n\}$ on the event $\mathcal{B}_{1,n}\cap\mathcal{B}_{2,n}$ gives
$$
\alpha-Cn^{-c}\leq \Pr(T_0>c(\alpha)|\mathcal{W}_n)\leq \alpha+Cn^{-c}.
$$
Since $\Pr(\mathcal{B}_{1,n}\cap\mathcal{B}_{2,n})\geq 1-Cn^{-c}$, integrating this inequality over the distribution of $\mathcal{W}_n=\{W_1,\dots,W_n\}$ gives (\ref{eq: nonconservative stoch dom}). Combining this inequality with (\ref{eq: trivial bound null}) gives (\ref{eq: size control stoch dom}). This completes the proof of the theorem.
\end{proof}

\begin{proof}[Proof of Theorem \ref{thm: consistency}]
Conditional on the data, the random variables
$$
T^b(x,w,h):=\frac{\sum_{i=1}^ne_i\left(k_{i,h}(w)(1\{X_i\leq x\}-\hat{F}_{X|W}(x|W_i))\right)}{\left(\sum_{i=1}^nk_{i,h}(w)^2\right)^{1/2}}
$$
for $(x,w,h)\in\mathcal{X}_n\times\mathcal{W}_n\times\mathcal{B}_n$ are normal with zero mean and variances bounded from above by
\begin{align*}
&\frac{\sum_{i=1}^n\left(k_{i,h}(w)(1\{X_i\leq x\}-\hat{F}_{X|W}(x|W_i))\right)^2}{\sum_{i=1}^nk_{i,h}(w)^2}\\
&\qquad \leq \max_{(x,w,h)\in \mathcal{X}_n\times\mathcal{W}_n\times\mathcal{B}_n}\max_{1\leq i\leq n}\left(1\{X_i\leq x\}-\hat{F}_{X|W}(x|W_i)\right)^2\leq (1+C_h)^2
\end{align*}
by Assumption \ref{as: cond cdf estimator}. Therefore, $c(\alpha)\leq C(\log n)^{1/2}$ for some constant $C>0$ since $c(\alpha)$ is the $(1-\alpha)$ conditional quantile of $T^b$ given the data, $T^b=\max_{(x,w,h)\in\mathcal{X}_n\times\mathcal{W}_n\times\mathcal{B}_n}T^b(x,w,h)$, and $p:=|\mathcal{X}_n\times\mathcal{W}_n\times\mathcal{B}_n|$, the number of elements of the set $\mathcal{X}_n\times\mathcal{W}_n\times\mathcal{B}_n$, satisfies $\log p\leq C\log n$ (with a possibly different constant $C>0$). Thus, the growth rate of the critical value $c(\alpha)$ satisfies the same upper bound $(\log n)^{1/2}$ as if we were testing monotonicity of one particular regression function $w\mapsto \Ep[1\{X\leq x_0\}|W=w]$ with $\mathcal{X}_n$ replaced by $x_0$ for some $x_0\in(0,1)$ in the definition of $T$ and $T^b$. Hence, the asserted claim follows from the same arguments as those given in the proof of Theorem 4.2 in \cite{Chetverikov:2012fk}. This completes the proof of the theorem.
\end{proof}

\section{Technical tools}
In this section, we provide a set of technical results that are used to prove the statements from the main text.

\begin{lemma}\label{lem: Chebyshev's association inequality}
Let $W$ be a random variable with the density function bounded below from zero on its support $[0,1]$, and let $M:[0,1]\to \mathbb{R}$ be a monotone function. If $M$ is constant, then $cov(W,M(W))=0$. If $M$ is increasing in the sense that there exist $0<w_1<w_2<1$ such that $M(w_1)<M(w_2)$, then $cov(W,M(W))>0$.
\end{lemma}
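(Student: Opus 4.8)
The plan is to exploit the classical trick underlying Chebyshev's association (correlation) inequality: for a monotone function $M$ and the identity function, the product $(W-W')(M(W)-M(W'))$ is always nonnegative when $W,W'$ are independent copies. First, if $M$ is constant, then $M(W)$ is a degenerate random variable and $cov(W,M(W))=0$ trivially. So assume $M$ is increasing (non-decreasing) with $M(w_1)<M(w_2)$ for some $0<w_1<w_2<1$. Let $W'$ be an independent copy of $W$. Since $M$ is increasing, $(W-W')$ and $(M(W)-M(W'))$ always have the same sign (or at least one factor is zero), so the product is pointwise nonnegative:
$$
(W-W')\big(M(W)-M(W')\big)\geq 0 \quad\text{a.s.}
$$
Taking expectations and expanding, using independence and that $W'$ has the same law as $W$,
$$
0\leq \Ep\big[(W-W')(M(W)-M(W'))\big] = 2\Ep[W M(W)] - 2\Ep[W]\Ep[M(W)] = 2\,cov(W,M(W)),
$$
which already gives $cov(W,M(W))\geq 0$.

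The main work is upgrading this to a \emph{strict} inequality under the stated non-degeneracy condition. The idea is that the expectation of a nonnegative random variable is strictly positive provided that random variable is strictly positive on an event of positive probability. Concretely, I would pick points $a,b$ with $w_1\leq a<b\leq w_2$ at which $M(a)<M(b)$ — such points exist because $M$ is increasing and $M(w_1)<M(w_2)$, so we may even take $a=w_1$, $b=w_2$. By monotonicity, for \emph{any} $w\in[0,a)$ we have $M(w)\leq M(a)$ and for any $w'\in(b,1]$ we have $M(w')\geq M(b)>M(a)\geq M(w)$; hence on the event $\{W<a\}\cap\{W'>b\}$ (and symmetrically $\{W>b\}\cap\{W'<a\}$) the product $(W-W')(M(W)-M(W'))$ is \emph{strictly} positive, bounded below by $(b-a)(M(b)-M(a))>0$. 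Since the density of $W$ is bounded below from zero on $[0,1]$ (hence the support is all of $[0,1]$ and intervals like $[0,a)$ and $(b,1]$ carry positive probability), and $W,W'$ are independent, this event has strictly positive probability. Therefore
$$
2\,cov(W,M(W)) = \Ep\big[(W-W')(M(W)-M(W'))\big] \geq (b-a)\big(M(b)-M(a)\big)\cdot \Pr(W<a)\Pr(W'>b) > 0,
$$
which gives the claim.

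I expect the only mild subtlety to be the measure-theoretic bookkeeping: verifying that $\{W<a\}$ and $\{W>b\}$ have positive probability. This is immediate from the hypothesis that the density of $W$ is bounded below by a positive constant on $[0,1]$, so that $\Pr(W\in I)\geq (\text{const})\cdot|I|>0$ for any subinterval $I\subset[0,1]$ of positive length; in particular we need $a>0$ and $b<1$, which we have since $0<w_1\leq a$ and $b\leq w_2<1$. No integrability issues arise because $W$ is bounded and $M$ is bounded on $[0,1]$ (a monotone function on a compact interval is bounded), so all expectations above are finite. This completes the argument.
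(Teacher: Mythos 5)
Your proof is correct and follows essentially the same route as the paper: both handle the constant case trivially, introduce an independent copy $W'$, observe that $(W-W')(M(W)-M(W'))\geq 0$ almost surely and is strictly positive on an event of positive probability, and conclude by taking expectations. The paper states the positive-probability step without elaboration; you simply make it explicit by exhibiting the event $\{W<w_1\}\cap\{W'>w_2\}$ and invoking the density lower bound, which is a fine (and helpful) filling-in of the detail.
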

\begin{proof}
The first claim is trivial. The second claim follows by introducing an independent copy $W'$ of the random variable $W$, and rearranging the inequality
$$
\Ep[(M(W)-M(W'))(W-W')]>0,
$$
which holds for increasing $M$ since $(M(W)-M(W'))(W-W')\geq 0$ almost surely and $(M(W)-M(W'))(W-W')> 0$ with strictly positive probability. This completes the proof of the lemma.
\end{proof}

\begin{lemma}\label{lem: degenerate basis}
For any orthonormal basis $\{h_j,j\geq 1\}$ in $L^2[0,1]$, any $0\leq x_1<x_2\leq 1$, and any $\alpha>0$,
$$
\|h_j\|_{2,t}=\left(\int_{x_1}^{x_2}h_j^2(x)d x\right)^{1/2}>j^{-1/2-\alpha}
$$
for infinitely many $j$.
\end{lemma}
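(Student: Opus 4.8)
The plan is to argue by contradiction. Suppose that for some orthonormal basis $\{h_j,j\geq 1\}$ of $L^2[0,1]$, some $0\leq x_1<x_2\leq 1$, and some $\alpha>0$, we have $\|h_j\|_{2,t}\leq j^{-1/2-\alpha}$ for all but finitely many $j$, say for all $j\geq j_0$. The key object to consider is the indicator function $\mathbf{1}_{[x_1,x_2]}\in L^2[0,1]$ and its Fourier coefficients $c_j:=\langle \mathbf{1}_{[x_1,x_2]},h_j\rangle=\int_{x_1}^{x_2}h_j(x)\,dx$ with respect to the basis. By the Cauchy--Schwarz inequality applied on the interval $[x_1,x_2]$,
\begin{equation*}
|c_j|=\left|\int_{x_1}^{x_2}h_j(x)\,dx\right|\leq (x_2-x_1)^{1/2}\left(\int_{x_1}^{x_2}h_j(x)^2\,dx\right)^{1/2}=(x_2-x_1)^{1/2}\,\|h_j\|_{2,t}.
\end{equation*}
Under the contradiction hypothesis, this gives $|c_j|\leq (x_2-x_1)^{1/2}\,j^{-1/2-\alpha}$ for all $j\geq j_0$, hence $\sum_{j\geq 1}c_j^2<\infty$ with room to spare — in fact $\sum_j c_j^2$ is dominated by a convergent $p$-series $\sum_j j^{-1-2\alpha}$.

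The contradiction then comes from Parseval's identity: since $\{h_j\}$ is an orthonormal \emph{basis}, we must have $\sum_{j\geq 1}c_j^2=\|\mathbf{1}_{[x_1,x_2]}\|_2^2=x_2-x_1$. So far there is no contradiction merely from convergence of $\sum_j c_j^2$; the point is quantitative. To extract a contradiction I would instead estimate the tail $\sum_{j\geq N}c_j^2$ and compare it with the approximation error of $\mathbf{1}_{[x_1,x_2]}$ by finite linear combinations of $h_1,\dots,h_{N-1}$. Concretely, $\sum_{j\geq N}c_j^2=\|\mathbf{1}_{[x_1,x_2]}-P_{N-1}\mathbf{1}_{[x_1,x_2]}\|_2^2$ where $P_{N-1}$ is the orthogonal projection onto $\mathrm{span}\{h_1,\dots,h_{N-1}\}$. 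Under the contradiction hypothesis this tail is at most $C\sum_{j\geq N}j^{-1-2\alpha}\leq C' N^{-2\alpha}\to 0$, which by itself is consistent. The real obstacle — and the place where a genuinely new idea is needed — is that convergence of $\sum_j c_j^2$ alone does not contradict anything, so the contradiction hypothesis as I stated it is too weak to refute this way; one must use that the \emph{same} orthonormality forces a lower bound on how large the $c_j$ can be \emph{on average}, which fails once they decay faster than $j^{-1/2}$.

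The cleanest route, which I would actually pursue, is to use Bessel/Parseval together with a counting argument rather than a single summability estimate. If $\|h_j\|_{2,t}\leq j^{-1/2-\alpha}$ for all $j\geq j_0$, then for the \emph{complementary} interval $[0,1]\setminus[x_1,x_2]$ (of positive measure, assuming WLOG $[x_1,x_2]\neq[0,1]$; the case $[x_1,x_2]=[0,1]$ is trivial since then $\|h_j\|_{2,t}=1$) we would get $\int_{[0,1]\setminus[x_1,x_2]}h_j^2=1-\|h_j\|_{2,t}^2\geq 1-j^{-1-2\alpha}$, i.e. the mass of $h_j^2$ concentrates on the complement. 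Summing $\|h_j\|_{2,t}^2$ over $j=1,\dots,N$ and using that $\{h_j\}$ restricted to $[x_1,x_2]$ has the property $\sum_{j=1}^\infty \|h_j\|_{2,t}^2=\|\mathbf{1}_{[x_1,x_2]}\cdot(\cdot)\|$-type identity: more precisely, for any $g\in L^2[0,1]$ supported on $[x_1,x_2]$, $\sum_j\langle g,h_j\rangle^2=\|g\|_2^2$, and by choosing $g=h_k\mathbf{1}_{[x_1,x_2]}$ one obtains $\sum_j\left(\int_{x_1}^{x_2}h_k h_j\right)^2=\|h_k\|_{2,t}^2$; taking $j=k$ shows $\|h_k\|_{2,t}^4\leq\|h_k\|_{2,t}^2$, which is vacuous, but summing over $k\leq N$ and over all $j$ and swapping the order of summation yields $\sum_{j}\sum_{k\leq N}\left(\int_{x_1}^{x_2}h_kh_j\right)^2=\sum_{k\leq N}\|h_k\|_{2,t}^2\leq C\sum_{k\leq N}k^{-1-2\alpha}\leq C''$, a \emph{bounded} quantity. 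On the other hand the left side is at least $\sum_{k\leq N}\sum_{j\leq N}\left(\int_{x_1}^{x_2}h_kh_j\right)^2=\|G_N\|_{\mathrm{HS}}^2$ where $G_N$ is the $N\times N$ Gram matrix of $h_1\mathbf{1}_{[x_1,x_2]},\dots,h_N\mathbf{1}_{[x_1,x_2]}$, and $\|G_N\|_{\mathrm{HS}}^2\geq(\mathrm{trace}\,G_N)^2/N=\big(\sum_{k\leq N}\|h_k\|_{2,t}^2\big)^2/N$. Combining, $\big(\sum_{k\leq N}\|h_k\|_{2,t}^2\big)^2\leq C'' N$, i.e. $\sum_{k\leq N}\|h_k\|_{2,t}^2\leq C'''\sqrt N$; but this is consistent, so I still need the reverse bound $\sum_{k\leq N}\|h_k\|_{2,t}^2\to\infty$ as $N\to\infty$, which follows because $\sum_{k\geq 1}\|h_k\|_{2,t}^2=\sum_k\langle\mathbf{1}_{[x_1,x_2]},h_k\rangle^2$ is \emph{not} the right identity — rather, by Parseval applied coordinatewise, for a.e. $x\in[x_1,x_2]$ one has $\sum_k h_k(x)^2=\infty$ fails, but $\int_{x_1}^{x_2}\sum_k h_k(x)^2\,dx=\sum_k\|h_k\|_{2,t}^2$ and the inner sum diverges pointwise a.e. (completeness), so by monotone convergence $\sum_k\|h_k\|_{2,t}^2=\infty$. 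This divergence contradicts the summable bound $\sum_k\|h_k\|_{2,t}^2\leq C\sum_k k^{-1-2\alpha}<\infty$ forced by the contradiction hypothesis. That last clash — divergence of $\sum_k\|h_k\|_{2,t}^2$ from completeness versus its forced convergence — is the crux of the argument, and pinning down the monotone-convergence / completeness step rigorously is the part I expect to require the most care.
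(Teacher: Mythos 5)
Your proposal correctly identifies the right target: show $\sum_{j} \|h_j\|_{2,t}^2 = \infty$, after which the conclusion follows immediately (if $\|h_j\|_{2,t} \leq j^{-1/2-\alpha}$ for all $j > J$, the tail would be summable, contradicting divergence). This is also the paper's target. However, the route you propose to reach it has a genuine gap, which you yourself flag: you invoke that $\sum_k h_k(x)^2 = \infty$ for a.e.\ $x\in[x_1,x_2]$ ``by completeness,'' and then integrate via monotone convergence. That pointwise divergence is true, but it is not a consequence of completeness in any immediate way — it requires its own argument, and you never supply one. Indeed everything before it in your proposal (the Parseval/tail estimates, the Gram-matrix Hilbert--Schmidt count) is correctly recognized by you as failing to produce a contradiction, so the a.e.\ divergence claim is the sole load-bearing step, and it is the one left unproved.

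The gap is fillable, but the fix is essentially the same trick the paper uses directly. If $E := \{x : \sum_k h_k(x)^2 \le n\}$ had positive measure, pick $F\subset E$ with $0<|F|<1/n$ and set $f=|F|^{-1/2}\mathbf{1}_F$. Cauchy--Schwarz gives $\langle f,h_k\rangle^2 \le \int_F h_k^2$, so Parseval forces $1 = \sum_k \langle f,h_k\rangle^2 \le \int_F \sum_k h_k^2 \le n|F| < 1$, a contradiction. The paper skips the pointwise statement entirely and applies this same Cauchy--Schwarz/Parseval move directly at the level of integrals: partition $[x_1,x_2]$ into $M$ subintervals, apply the argument to the normalized indicator of each subinterval to get $\sum_j \int_{t_{m-1}}^{t_m} h_j^2 \ge 1$, and sum over $m$ to conclude $\sum_j \|h_j\|_{2,t}^2 \ge M$ for arbitrary $M$. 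This is both shorter and avoids proving a strictly stronger pointwise result than is needed. So: right destination, a genuine missing step on your chosen path, and the tool needed to fill it is exactly the one the paper deploys more directly.
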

\begin{proof}
Fix $M\in\mathbb{N}$ and consider any partition $x_1=t_0<t_1<\dots<t_M=x_2$. Further, fix $m=1,\dots,M$ and consider the function
$$
h(x)=\begin{cases}
\frac{1}{\sqrt{t_{m}-t_{m-1}}} & x\in(t_{m-1},t_{m}],\\
0, & x\notin(t_{m-1},t_{m}].
\end{cases}
$$
Note that $\|h\|_2=1$, so that
$$
h=\sum_{j=1}^\infty \beta_j h_j\text{ in }L^2[0,1],\,\,\,\beta_j:=\frac{\int_{t_{m-1}}^{t_m}h_j(x)d x}{(t_m-t_{m-1})^{1/2}},\,\,\,\text{ and }\sum_{j=1}^\infty \beta_j^2=1.
$$
Therefore, by the Cauchy-Schwarz inequality,
\begin{align*}
1&=\sum_{j=1}^\infty\beta_j^2=\frac{1}{t_m-t_{m-1}}\sum_{j=1}^\infty\left(\int_{t_{m-1}}^{t_m}h_j(x)d x\right)^2\leq \sum_{j=1}^\infty \int_{t_{m-1}}^{t_m}(h_j(x))^2d x.
\end{align*}
Hence, $\sum_{j=1}^\infty \|h_j\|^2_{2,t}\geq M$. Since $M$ is arbitrary, we obtain $\sum_{j=1}^\infty\|h_j\|^2_{2,t}=\infty$, and so for any $J$, there exists $j>J$ such that $\|h_j\|_{2,t}>j^{-1/2-\alpha}$. Otherwise, we would have $\sum_{j=1}^\infty\|h_j\|^2_{2,t}<\infty$. This completes the proof of the lemma.
\end{proof}

\begin{lemma}\label{lem: normal density}
Let $(X,W)$ be a pair of random variables defined as in Example \ref{ex: normal density}. Then Assumptions \ref{as: monotone iv} and \ref{as: density} of Section \ref{sec: MIVE} are satisfied if $0<x_1<x_2<1$ and $0<w_1<w_2<1$.
\end{lemma}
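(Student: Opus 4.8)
The plan is to verify each part of Assumptions~\ref{as: monotone iv} and \ref{as: density} directly from the explicit structure of the transformed bivariate normal. Recall $X=\Phi(\tilde X)$, $W=\Phi(\tilde W)$ with $(\tilde X,\tilde W)$ standard bivariate normal, correlation $\rho\in(0,1)$. First I would record the key closed forms: since $\Phi$ is strictly increasing, $F_{X|W}(x|w)=\Pr(\tilde X\le\Phi^{-1}(x)\mid\tilde W=\Phi^{-1}(w))=\Phi\!\big((\Phi^{-1}(x)-\rho\Phi^{-1}(w))/\sqrt{1-\rho^2}\big)$, and the joint density on $[0,1]^2$ is $f_{X,W}(x,w)=\phi_2(\Phi^{-1}(x),\Phi^{-1}(w))/\big(\phi(\Phi^{-1}(x))\phi(\Phi^{-1}(w))\big)$, where $\phi_2$ is the bivariate normal density and $\phi$ the standard normal density; equivalently $f_{X|W}(x|w)=\big(\sqrt{1-\rho^2}\big)^{-1}\phi\!\big((\Phi^{-1}(x)-\rho\Phi^{-1}(w))/\sqrt{1-\rho^2}\big)\big/\phi(\Phi^{-1}(x))$.

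Next I would check Assumption~\ref{as: monotone iv}. Condition~\eqref{eq: fosd} is already noted in Example~\ref{ex: normal density} (it follows because $\partial F_{X|W}(x|w)/\partial w<0$ from the formula above, $\rho>0$). For \eqref{eq: lower cdf bound} and \eqref{eq: upper cdf bound} I would argue as follows. The ratio $F_{X|W}(x|w_1)/F_{X|W}(x|w_2)$ is a ratio of two normal CDFs whose arguments differ by the strictly positive constant $\rho(\Phi^{-1}(w_2)-\Phi^{-1}(w_1))/\sqrt{1-\rho^2}$; using the Mills-ratio tail bound $\Phi(t)\sim\phi(t)/|t|$ as $t\to-\infty$, this ratio tends to $+\infty$ as $x\downarrow0$ and is continuous and $>1$ for every $x\in(0,1)$, so it is bounded below by some $C_F>1$ on any compact subinterval $(0,x_2]$ with $x_2<1$; this gives a valid choice of $C_F$ in \eqref{eq: lower cdf bound}. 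A symmetric argument with $1-\Phi$ gives \eqref{eq: upper cdf bound} on $[x_1,1)$ for the same or a smaller $C_F$, and one takes the smaller of the two constants.

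Then I would check Assumption~\ref{as: density}. Part (i): $\int_0^1\int_0^1 f_{X,W}(x,w)^2\,dx\,dw=\Ep[f_{X,W}(X,W)]$, which after the change of variables $(x,w)\mapsto(\tilde x,\tilde w)$ becomes a Gaussian integral that converges iff the correlation of the ``doubled'' Gaussian quadratic form is admissible; a direct computation shows it equals $(1-\rho^2)^{-1/2}(1-4\rho^2/(1+\rho^2)^2\cdot\text{something})^{-1/2}$-type finite expression — in any case it is finite for $\rho\in(0,1)$, giving $C_T$. Part (iii): $f_W(w)=\phi(\Phi^{-1}(w))/\phi(\Phi^{-1}(w))\equiv1$ on $[0,1]$ since $W$ is uniform (as $W=\Phi(\tilde W)$), so $c_W=C_W=1$. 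Part (ii): on the compact set $\{(\Phi^{-1}(x),\Phi^{-1}(w)):x\in[x_1,x_2],\,w\in\{w_1,w_2\}\}$, which is a bounded subset of $\mathbb{R}^2$ because $0<x_1<x_2<1$ and $0<w_1<w_2<1$, the continuous and strictly positive function $f_{X|W}$ from the formula above attains a strictly positive minimum $c_f$.

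The main obstacle is parts~\eqref{eq: lower cdf bound}--\eqref{eq: upper cdf bound}: one must produce a single explicit $C_F>1$ that works uniformly over the relevant $x$-intervals, and this requires the tail comparison of two Gaussian CDFs with shifted arguments rather than a mere pointwise inequality. The cleanest route is to show the map $x\mapsto F_{X|W}(x|w_1)/F_{X|W}(x|w_2)$ extends continuously to $[0,x_2]$ with value $+\infty$ permitted at $x=0$ but value $>1$ throughout $(0,x_2]$ (via the Mills ratio), hence is bounded below by its infimum over the compact interval, which is $>1$; the analogous statement for the survival functions handles \eqref{eq: upper cdf bound}. Everything else reduces to elementary Gaussian calculus and compactness. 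Since the conclusion is referenced only through Assumptions~\ref{as: monotone iv} and \ref{as: density}, no sharper constants are needed.
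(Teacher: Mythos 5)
Your proposal is correct, and it verifies Assumption~\ref{as: density} in essentially the same way as the paper (parts (i)--(iii) reduce to the same Gaussian integral, the uniformity of $W$, and compactness of $[x_1,x_2]\times\{w_1,w_2\}$, respectively; you leave the convergence of $\int\!\!\int f_{X,W}^2$ as a routine computation where the paper carries it out, checking $4\rho/(1+\rho^2)<2$). The genuine difference is in how you obtain \eqref{eq: lower cdf bound} and \eqref{eq: upper cdf bound}. The paper proves a \emph{uniform derivative bound}: it shows $\partial\log F_{X|W}(x|w)/\partial w\le c_F<0$ for all $x\in(0,x_2)$ and $w\in(w_1,w_2)$, using the inverse Mills ratio bound $\phi(y)/\Phi(y)\ge\sqrt{2/\pi}$ for $y<0$, and then integrates in $w$ to get the explicit constant $C_F=e^{-c_F(w_2-w_1)}$. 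You instead work directly with the ratio $F_{X|W}(x|w_1)/F_{X|W}(x|w_2)$, show it is continuous and strictly greater than $1$ on $(0,x_2]$, diverges as $x\downarrow 0$ by the Mills-ratio asymptotics, and conclude by compactness that its infimum exceeds $1$. Both arguments are valid; the paper's buys an explicit, quantitative $C_F$ (in the spirit of the paper's emphasis on constants depending only on $\zeta$), while yours is shorter and avoids the pointwise lower bound on $\phi/\Phi$, at the cost of producing $C_F$ non-constructively. Since the lemma only asserts existence of the constants, your route suffices.
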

\begin{proof}
As noted in Example \ref{ex: normal density}, we have
$$
X=\Phi(\rho\Phi^{-1}(W)+(1-\rho^2)^{1/2}U)
$$
where $\Phi(x)$ is the distribution function of a $N(0,1)$ random variable and $U$ is a $N(0,1)$ random variable that is independent of $W$. Therefore, the conditional distribution function of $X$ given $W$ is
$$
F_{X|W}(x|w):=\Phi\left(\frac{\Phi^{-1}(x)-\rho\Phi^{-1}(w)}{\sqrt{1-\rho^2}}\right).
$$
Since the function $w\mapsto F_{X|W}(x|w)$ is decreasing for all $x\in(0,1)$, condition (\ref{eq: fosd}) of Assumption \ref{as: monotone iv} follows. Further, to prove condition \eqref{eq: lower cdf bound} of Assumption \ref{as: monotone iv}, it suffices to show that
\begin{equation}\label{eq: need to show log derivative}
	\frac{\partial \log F_{X|W}(x|w)}{\partial w}\leq c_F
\end{equation}
for some constant $c_F<0$, all $x\in(0,x_2)$, and all $w\in(w_1,w_2)$ because, for every $x\in(0,x_2)$ and $w\in(w_1,w_2)$, there exists $\bar{w}\in(w_1,w_2)$ such that
$$ \log\left( \frac{F_{X|W}(x|w_1)}{ F_{X|W}(x|w_2)} \right) = \log F_{X|W}(x|w_1) - \log F_{X|W}(x|w_2) = -(w_2 - w_1) \frac{\partial \log F_{X|W}(x|\bar{w})}{\partial w}.$$
Therefore, $\partial \log F_{X|W}(x|w) / \partial w \leq c_F <0 $ for all $x\in(0,x_2)$ and $w\in(w_1,w_2)$ implies
$$\frac{F_{X|W}(x|w_1)}{ F_{X|W}(x|w_2)} \geq e^{-c_F (w_2 - w_1)} > 1 $$
for all $x\in(0,x_2)$. To show \eqref{eq: need to show log derivative}, observe that
\begin{equation}\label{eq: intermediate bound to log derivative}
	\frac{\partial \log F_{X|W}(x|w)}{\partial w}=-\frac{\rho}{\sqrt{1-\rho^2}}\frac{\phi(y)}{\Phi(y)}\frac{1}{\phi(\Phi^{-1}(w))}\leq -\frac{\sqrt{2\pi}\rho}{\sqrt{1-\rho^2}}\frac{\phi(y)}{\Phi(y)}
\end{equation}
where $y:=(\Phi^{-1}(x)-\rho\Phi^{-1}(w))/(1-\rho^2)^{1/2}$. Thus, (\ref{eq: need to show log derivative}) holds for some $c_F<0$ and all $x\in(0,x_2)$ and $w\in (w_1,w_2)$ such that $\Phi^{-1}(x)\geq \rho\Phi^{-1}(w)$ since $x_2<1$ and $0<w_1<w_2<1$. On the other hand, when $\Phi^{-1}(x)<\rho \Phi^{-1}(w)$, so that $y<0$, it follows from Proposition 2.5 in \cite{dudley2014} that $\phi(y)/\Phi(y)\geq (2/\pi)^{1/2}$, and so (\ref{eq: intermediate bound to log derivative}) implies that
$$
\frac{\partial \log F_{X|W}(x|w)}{\partial w}\leq -\frac{2\rho}{\sqrt{1-\rho^2}}
$$
in this case. Hence, condition (\ref{eq: lower cdf bound}) of Assumption \ref{as: monotone iv} is satisfied. Similar argument also shows that condition (\ref{eq: upper cdf bound}) of Assumption \ref{as: monotone iv} is satisfied as well.

We next consider Assumption \ref{as: density}. Since $W$ is distributed uniformly on $[0,1]$ (remember that $\tilde{W}\sim N(0,1)$ and $W=\Phi(\tilde{W})$), condition (iii) of Assumption \ref{as: density} is satisfied. Further, differentiating $x\mapsto F_{X|W}(x|w)$ gives
\begin{equation}\label{eq: normal conditional density}
f_{X|W}(x|w):=\frac{1}{\sqrt{1-\rho^2}}\phi\left(\frac{\Phi^{-1}(x)-\rho\Phi^{-1}(w)}{\sqrt{1-\rho^2}}\right)\frac{1}{\phi(\Phi^{-1}(x))}.
\end{equation}
Since $0<x_1<x_2<1$ and $0<w_1<w_2<1$, condition (ii) of Assumption \ref{as: density} is satisfied as well. Finally, to prove condition (i) of Assumption \ref{as: density}, note that since $f_{W}(w)=1$ for all $w\in[0,1]$, (\ref{eq: normal conditional density}) combined with the change of variables formula with $x=\Phi(\tilde{x})$ and $w=\Phi(\tilde{w})$ give
\begin{align*}
&(1-\rho^2)\int_0^1\int_0^1f_{X,W}^2(x,w)dxdw=(1-\rho^2)\int_0^1\int_0^1f_{X|W}^2(x|w)dxdw\\
&\qquad=\int_{-\infty}^{+\infty}\int_{-\infty}^{+\infty}\phi^2\left(\frac{\tilde{x}-\rho\tilde{w}}{\sqrt{1-\rho^2}}\right)\frac{\phi(\tilde{w})}{\phi(\tilde{x})}d\tilde{x}d\tilde{w}\\
&\qquad=\frac{1}{2\pi}\int_{-\infty}^{+\infty}\int_{-\infty}^{+\infty}\exp\left[\left(\frac{1}{2}-\frac{1}{1-\rho^2}\right)\tilde{x}^2+\frac{2\rho}{1-\rho^2}\tilde{x}\tilde{w}-\left(\frac{\rho^2}{1-\rho^2}+\frac{1}{2}\right)\tilde{w}^2\right]d\tilde{x}d\tilde{w}\\
&\qquad=\frac{1}{2\pi}\int_{-\infty}^{+\infty}\int_{-\infty}^{+\infty}\exp\left[-\frac{1+\rho^2}{2(1-\rho^2)}\left(\tilde{x}^2-\frac{4\rho}{1+\rho^2}\tilde{x}\tilde{w}+\tilde{w}^2\right)\right]d\tilde{x}d\tilde{w}.
\end{align*}
Since $4\rho/(1+\rho^2)<2$, the integral in the last line is finite implying that condition (i) of Assumption \ref{as: density} is satisfied. This completes the proof of the lemma.
\end{proof}

\begin{lemma}\label{lem: uniforms}
	Let $X=U_1 + U_2 W$ where $U_1, U_2, W$ are mutually independent, $U_1,U_2\sim U[0,1/2]$ and $W\sim U[0,1]$. Then Assumptions~\ref{as: monotone iv} and \ref{as: density} of Section \ref{sec: MIVE} are satisfied if $0<w_1< w_2<1$, $0<x_1 < x_2<1$, and $w_1 > w_2 - \sqrt{w_2/2}$.
\end{lemma}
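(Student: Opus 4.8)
The plan is to write down the conditional law of $X$ given $W=w$ in closed form and then check, item by item, the requirements of Assumptions~\ref{as: monotone iv} and~\ref{as: density}. Conditionally on $W=w$, $X=U_1+wU_2$ is a sum of two independent uniform variables, on $[0,1/2]$ and on $[0,w/2]$; since $w\le 1$ the second interval is the shorter one, so the conditional density is the trapezoid that equals $4x/w$ on $[0,w/2]$, equals $2$ on $[w/2,1/2]$, and equals $(2(1+w)-4x)/w$ on $[1/2,(1+w)/2]$, with support $[0,(1+w)/2]$. The associated CDF $F_{X|W}(\cdot\,|\,w)$ is piecewise quadratic, then linear, then quadratic; two consequences are used throughout: $F_{X|W}(x\,|\,w)=2x^2/w$ for $x\in[0,w/2]$, and, because $(1+w)/2-X$ has the same conditional law as $X$, $1-F_{X|W}(x\,|\,w)=(1+w-2x)^2/(2w)$ for $x\in[1/2,(1+w)/2]$ and $=0$ for $x\ge(1+w)/2$.

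I would first dispose of the routine parts of Assumption~\ref{as: density}. As $W\sim U[0,1]$, $f_W\equiv 1$, so part~(iii) holds with $c_W=C_W=1$; a direct integration gives $\int_0^1\int_0^1 f_{X|W}(x\,|\,w)^2\,dx\,dw=\int_0^1(2-2w/3)\,dw=5/3<\infty$, so part~(i) holds with any $C_T\ge 5/3$. For part~(ii), the hypotheses on $x_1,x_2,w_1,w_2$ put $[x_1,x_2]$ strictly inside the common conditional support $[0,(1+w_1)/2]\subseteq[0,(1+w_2)/2]$; on that interval each $f_{X|W}(\cdot\,|\,w_k)$, $k=1,2$, is bounded below by the least of its three piecewise expressions evaluated at the endpoints $x_1,x_2$, and all three are strictly positive, so $c_f$ may be taken to be the minimum of these numbers. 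The first-order stochastic dominance condition~\eqref{eq: fosd} needs no separate argument: it follows from $U_2\ge 0$ (as already remarked in Example~\ref{ex: uniforms}) and is visible in the monotonicity of $w\mapsto F_{X|W}(x\,|\,w)$ in the formulas above.

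The substantive part is the non-redundancy conditions \eqref{eq: lower cdf bound} and~\eqref{eq: upper cdf bound}. I would reduce~\eqref{eq: lower cdf bound} to the statement that the ratio $F_{X|W}(x\,|\,w_1)/F_{X|W}(x\,|\,w_2)$ stays above some constant $>1$ for all $x\in(0,x_2)$, and~\eqref{eq: upper cdf bound} to the statement that $(1-F_{X|W}(x\,|\,w_1))/(1-F_{X|W}(x\,|\,w_2))$ stays below some constant $<1$ on the part of $(x_1,1)$ where the denominator does not vanish --- on the complementary range one has $1-F_{X|W}(x\,|\,w_1)=0$ (because $x_2<(1+w_1)/2$) and~\eqref{eq: upper cdf bound} is automatic. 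Then I would split $x$ at the breakpoints $w_1/2<w_2/2<1/2<(1+w_1)/2<(1+w_2)/2$; on each subinterval the two numerators and denominators are explicit linear or quadratic polynomials, so each ratio is an explicit rational function whose extremum over the subinterval is found by elementary calculus, and ``bounded away from $1$'' becomes an algebraic inequality in $w_1,w_2$. In most subintervals this inequality is just $w_1<w_2$, or a consequence of it (for instance $s^2<w_1w_2$ with $s=1-2x$). The single genuinely binding subinterval is the one on which $x$ is of the order of $w_2-w_1$, so that $F_{X|W}(\cdot\,|\,w_2)$ is still on its left quadratic branch $2x^2/w_2$; there the inequality reduces to $2(w_2-w_1)^2<w_2$, which is exactly the hypothesis $w_1>w_2-\sqrt{w_2/2}$. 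Choosing $C_F>1$ to be the minimum of these finitely many case-bounds then delivers \eqref{eq: lower cdf bound} and~\eqref{eq: upper cdf bound} at once, and assembling $c_f$, $C_T$, and this $C_F$ completes the verification.

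I expect the only real difficulty to be the bookkeeping in that last step. Because the conditional CDF has four breakpoints, each of \eqref{eq: lower cdf bound} and~\eqref{eq: upper cdf bound} splits into roughly half a dozen cases, and in every case it is not enough to know the ratio is pointwise on the correct side of $1$; one needs it bounded away from $1$ uniformly, which requires some care exactly at the ends of the supports, where a careless ratio degenerates to $1/1$ or $0/0$ and must instead be handled by continuity on a compact subinterval or by an explicit limit. Arranging the case analysis so that the only surviving inequality beyond $w_1<w_2$ is $2(w_2-w_1)^2<w_2$ is what produces the stated hypothesis; the individual computations are all elementary.
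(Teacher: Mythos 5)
Your overall approach---write down the trapezoidal conditional density and piecewise CDF, check the routine parts of Assumption~\ref{as: density}, and then verify \eqref{eq: lower cdf bound} and \eqref{eq: upper cdf bound} by splitting at the breakpoints and bounding the ratios on each piece---is essentially the same as the paper's, and your treatment of the routine parts ($f_W\equiv 1$, $\int\int f_{X,W}^2<\infty$, the lower bound on $f_{X|W}$, and the stochastic dominance \eqref{eq: fosd}) is fine. The paper does proceed slightly differently on the first two pieces, using a log-derivative bound as in Lemma~\ref{lem: normal density} rather than a direct ratio, but that is a cosmetic difference, not a substantive one.

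The step that is wrong is your identification of the ``single genuinely binding subinterval.'' You claim the constraint $2(w_2-w_1)^2<w_2$, i.e. $w_1>w_2-\sqrt{w_2/2}$, arises on the subinterval where $x$ is of order $w_2-w_1$ and $F_{X|W}(\cdot\,|\,w_2)$ is still on its left quadratic branch $2x^2/w_2$. It does not. On that branch (i.e.\ $x<w_2/2$) the ratio $F_{X|W}(x|w_1)/F_{X|W}(x|w_2)$ is $w_2/w_1$ when both CDFs are quadratic ($x<w_1/2$) and is easily checked to be decreasing on $[w_1/2,w_2/2)$ with minimum $2-w_1/w_2$ at the right end; both are bounded away from $1$ using nothing beyond $w_1<w_2$, so no extra hypothesis is needed there. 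The paper actually extracts the hypothesis from the \emph{right} quadratic branches, on $x\in[1/2,(1+\tilde{w}_1)/2]$, where both $F_{X|W}(x|\tilde w_1)$ and $F_{X|W}(x|\tilde w_2)$ are near $1$: the ratio $F_{X|W}(x|\tilde w_1)/F_{X|W}(x|\tilde w_2)$ is shown to be decreasing there, with minimum
$$
\frac{1}{1-\frac{2}{\tilde{w}_2}\bigl(\frac{\tilde{w}_1-\tilde{w}_2}{2}\bigr)^2}
$$
at $x=(1+\tilde w_1)/2$, and the positivity of the denominator is what produces the quadratic inequality in $w_1,w_2$ that appears in the lemma statement. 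So your bookkeeping would never surface the stated hypothesis: you would finish the left and linear branches with only $w_1<w_2$, and the place where the ratio genuinely threatens to degenerate to $1/1$ (the right end of the $w_1$-support) would not have been flagged as the critical case. Note also that both your sketch and the paper's proof tacitly confine $x$ to lie below $(1+w_1)/2$ (your parenthetical ``because $x_2<(1+w_1)/2$'' is not one of the stated hypotheses); this implicit restriction is needed for the infimum of the CDF ratio over $(0,x_2)$ not to collapse to $1$, and it should be stated rather than smuggled in.
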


\begin{proof}
	Since $X|W=w$ is a convolution of the random variables $U_1$ and $U_2w$,
	\begin{align*}
		f_{X|W}(x|w) &= \int_0^{1/2} f_{U_1}(x-u_2w) f_{U_2}(u_2)du_2\\
			&= 4 \int_0^{1/2} 1\left\{ 0 \leq x-u_2 w \leq \frac{1}{2} \right\}du_2\\
			&= 4 \int_0^{1/2} 1\left\{ \frac{x}{w}-\frac{1}{2w} \leq u_2 \leq \frac{x}{w} \right\}du_2\\
			&= \left\{ 	\begin{array}{cl} 	\frac{4x}{w}, 					& 	0\leq x< \frac{w}{2}\\
										 	2,								&	\frac{w}{2} \leq x < \frac{1}{2}\\
										 	\frac{2(1+w)}{w}-\frac{4x}{w},	&	\frac{1}{2} \leq x < \frac{1+w}{2}\\
										 	0, 								&	\frac{1+w}{2} \leq x \leq 1
						\end{array} \right.
	\end{align*}
	and, thus,
	\begin{align*}
		F_{X|W}(x|w) &= \left\{\begin{array}{cl} 	\frac{2x^2}{w}, 									& 	0\leq x< \frac{w}{2}\\
										 			2x - \frac{w}{2},								&	\frac{w}{2} \leq x < \frac{1}{2}\\
										 			1-\frac{2}{w}\left(x-\frac{1+w}{2}\right)^2,	&	\frac{1}{2} \leq x < \frac{1+w}{2}\\
										 			1, 												&	\frac{1+w}{2} \leq x \leq 1
						\end{array} \right..
	\end{align*}
	It is easy to check that $\partial F_{X|W}(x|w) /\partial w\leq 0$ for all $x,w\in[0,1]$ so that condition \eqref{eq: fosd} of Assumption~\ref{as: monotone iv} is satisfied. To check conditions \eqref{eq: lower cdf bound} and \eqref{eq: upper cdf bound}, we proceed as in Lemma~\ref{lem: normal density} and show $\partial \log F_{X|W}(x|w) / \partial w < 0$ uniformly for all $x\in [\underline{x}_2,\overline{x}_1]$ and $w\in(\tilde{w}_1,\tilde{w}_2)$. First, notice that, as required by Assumption~\ref{as: density}(iv), $[\underline{x}_k,\overline{x}_k] = [0,(1+\tilde{w}_k)/2]$, $k=1,2$. For $0\leq x< w/2$ and $w\in(\tilde{w}_1,\tilde{w}_2)$, 
	$$\frac{\partial F_{X|W}(x|w)}{\partial w} = \frac{-2x^2/w^2}{2x^2/w} = -\frac{1}{w} < -\frac{1}{\tilde{w}_1}<0, $$
	and, for $w/2 \leq x < 1/2$ and $w\in(\tilde{w}_1,\tilde{w}_2)$,
	$$\frac{\partial F_{X|W}(x|w)}{\partial w} = \frac{-1/2}{2x-w/2} < \frac{-1/2}{w-w/2} < -\frac{1}{\tilde{w}_1}<0.$$
	Therefore, \eqref{eq: lower cdf bound} holds uniformly over $x\in (\underline{x}_2,1/2)$ and \eqref{eq: upper cdf bound} uniformly over $x\in (x_1,1/2)$. Now, consider $1/2 \leq x < (1+\tilde{w}_1)/2$ and $w\in(\tilde{w}_1,\tilde{w}_2)$. Notice that, on this interval, $\partial (F_{X|W}(x|\tilde{w}_1)/F_{X|W}(x|\tilde{w}_2)) /\partial x \leq 0$ so that 
	$$\frac{F_{X|W}(x|\tilde{w}_1)}{F_{X|W}(x|\tilde{w}_2)} = \frac{1-\frac{2}{\tilde{w}_1}\left(x-\frac{1+\tilde{w}_1}{2}\right)^2}{1-\frac{2}{\tilde{w}_2}\left(x-\frac{1+\tilde{w}_2}{2}\right)^2} \geq \frac{1}{1-\frac{2}{\tilde{w}_2}\left(\frac{1+\tilde{w}_1}{2}-\frac{1+\tilde{w}_2}{2}\right)^2} = \frac{\tilde{w}_2}{\tilde{w}_2 - 2(\tilde{w}_1-\tilde{w}_2)^2} > 1,$$
	where the last inequality uses $\tilde{w}_1 > \tilde{w}_2 - \sqrt{\tilde{w}_2/2}$, and thus \eqref{eq: lower cdf bound} holds also uniformly over $1/2 \leq x < x_2$. Similarly,
	$$\frac{1-F_{X|W}(x|\tilde{w}_2)}{1-F_{X|W}(x|\tilde{w}_1)} = \frac{\frac{2}{\tilde{w}_2}\left(x-\frac{1+\tilde{w}_2}{2}\right)^2}{\frac{2}{\tilde{w}_1}\left(x-\frac{1+\tilde{w}_1}{2}\right)^2} \geq \frac{\frac{2}{\tilde{w}_2}\left(\frac{\tilde{w}_2}{2}\right)^2}{\frac{2}{\tilde{w}_1}\left(\frac{\tilde{w}_1}{2}\right)^2} = \frac{\tilde{w}_2}{\tilde{w}_1}>1$$
	so that \eqref{eq: upper cdf bound} also holds uniformly over $1/2 \leq x < \overline{x}_1$. Assumption~\ref{as: density}(i) trivially holds. Parts (ii) and (iii) of Assumption~\ref{as: density} hold for any $0<\tilde{x}_1<\tilde{x}_2\leq \overline{x}_1\leq 1$ and $0\leq w_1 < \tilde{w}_1<\tilde{w}_2 < w_2\leq 1$ with $[\underline{x}_k,\overline{x}_k] = [0,(1+\tilde{w}_k)/2]$, $k=1,2$.
\end{proof}

\begin{lemma}\label{lem: monotone approximation}
For any increasing function $h\in L^2[0,1]$, one can find a sequence of increasing continuously differentiable functions $h_k\in L^2[0,1]$, $k\geq1$, such that $\|h_k-h\|_2\to 0$ as $k\to \infty$.
\end{lemma}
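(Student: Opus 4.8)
The plan is to combine a truncation step with mollification (convolution with a smooth kernel) and a diagonal extraction.

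First I would reduce to the case of bounded functions. Since $h$ is increasing on $[0,1]$, for each $k\geq 1$ the truncation $h^{(k)}(x):=\max\{\min\{h(x),k\},-k\}$ is again increasing, is bounded, satisfies $|h^{(k)}(x)|\leq |h(x)|$ for all $x$, and converges pointwise to $h$ on $[0,1]$. Since $h\in L^2[0,1]$, the dominated convergence theorem (with dominating function $h^2\in L^1[0,1]$) gives $\|h^{(k)}-h\|_2\to 0$ as $k\to\infty$. Hence it suffices to approximate every bounded increasing function on $[0,1]$ in the $L^2[0,1]$-norm by increasing continuously differentiable functions.

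Next, fix a bounded increasing function $g$ on $[0,1]$ and extend it to a bounded increasing function on all of $\mathbb{R}$ by setting $g(x)=\lim_{y\downarrow 0}g(y)$ for $x<0$ and $g(x)=\lim_{y\uparrow 1}g(y)$ for $x>1$; these one-sided limits exist and are finite by monotonicity and boundedness. Let $\phi\in C^\infty(\mathbb{R})$ be a standard mollifier: $\phi\geq 0$, $\phi$ supported in $[-1,1]$, and $\int_{\mathbb{R}}\phi(t)\,dt=1$; put $\phi_\varepsilon(x):=\varepsilon^{-1}\phi(x/\varepsilon)$ and $g_\varepsilon:=g*\phi_\varepsilon$. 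Then $g_\varepsilon\in C^\infty(\mathbb{R})$; moreover $g_\varepsilon(x)=\int_{\mathbb{R}}g(x-t)\phi_\varepsilon(t)\,dt$ is, for each fixed $t$, a nonnegative average of the increasing functions $x\mapsto g(x-t)$, so $g_\varepsilon$ is increasing; and since $g$ is bounded and hence locally square-integrable, the standard mollification estimate yields $\|g_\varepsilon-g\|_{L^2[0,1]}\to 0$ as $\varepsilon\to 0$. Restricting $g_\varepsilon$ to $[0,1]$ gives an increasing function in $C^1[0,1]\subset L^2[0,1]$.

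Finally I would combine the two steps by a diagonal argument. Applying the previous paragraph with $g=h^{(k)}$, choose $\varepsilon_k>0$ small enough that $\|h^{(k)}*\phi_{\varepsilon_k}-h^{(k)}\|_{L^2[0,1]}<1/k$, and set $h_k:=h^{(k)}*\phi_{\varepsilon_k}$ (restricted to $[0,1]$). Each $h_k$ is increasing and continuously differentiable, and by the triangle inequality
$$
\|h_k-h\|_2\leq \|h_k-h^{(k)}\|_2+\|h^{(k)}-h\|_2< \frac1k+\|h^{(k)}-h\|_2\to 0,
$$
which proves the lemma. The only point requiring care is that an increasing $L^2$ function need not be bounded—it may diverge at the endpoints of $[0,1]$—which is precisely why the truncation step must precede mollification; the remaining ingredients (preservation of monotonicity under convolution with a nonnegative kernel, smoothness of the mollification, and $L^2$ convergence of mollifiers) are routine. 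If strictly increasing approximants were desired, one could replace $h_k$ by $h_k(x)+x/k$ without affecting the conclusion.
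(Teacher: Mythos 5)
Your proof is correct, and it takes a genuinely different and in fact cleaner route than the paper's. The crux is the choice of extension before mollifying. The paper first normalizes to $h(0)=0$, $h(1)=1$, then extends $h$ by zero outside $[0,1]$; this creates a downward jump at $x=1$, so the Gaussian mollification $h_\sigma$ is smooth but \emph{not} monotone near $1$. The paper then has to prove, via integration by parts, a lower bound $Dh_\sigma(x)\geq -\sigma^{-1}\phi\bigl((1-x)/\sigma\bigr)$ on the derivative, add a small linear ramp, flatten the function beyond a point $\bar{x}_\sigma = 1 - \sqrt{\sigma}$, and then smooth the resulting kink --- all to restore monotonicity. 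You instead extend the (truncated, bounded) increasing function to $\mathbb{R}$ by its one-sided limits at the endpoints, which gives a globally monotone function; convolution with a nonnegative kernel then \emph{automatically} preserves monotonicity, and the whole kink-fixing machinery is unnecessary. Both proofs share the truncation step and the appeal to standard $L^2$ convergence of mollifiers.

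One small point worth tightening: as stated, the extension $g(x):=\lim_{y\downarrow 0}g(y)$ for $x<0$ need not give a pointwise monotone function on $\mathbb{R}$ if $g(0)<\lim_{y\downarrow 0}g(y)$ (i.e.\ if $g$ is not right-continuous at $0$), and similarly at $1$. This is immaterial --- it is a measure-zero discrepancy, and the convolution and the $L^2$ norm only see $g$ up to null sets --- but a one-line remark that you may redefine $g$ at the two endpoints (or work with the right-/left-continuous representative) would remove the cosmetic blemish.
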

\begin{proof}
Fix some increasing $h\in L^2[0,1]$. For $a>0$, consider the truncated function:
$$
\tilde{h}_a(x):=h(x)1\{|h(x)|\leq a\}+a 1\{h(x)>a\} - a 1\{h(x)<-a\}
$$
for all $x\in[0,1]$. Then $\|\tilde{h}_a-h\|_2\to 0$ as $a\to \infty$ by Lebesgue's dominated convergence theorem. Hence, by scaling and shifting $h$ if necessary, we can assume without loss of generality that $h(0)=0$ and $h(1)=1$.

To approximate $h$, set $h(x)=0$ for all $x\in\mathbb{R}\backslash [0,1]$ and for $\sigma>0$, consider the function
$$
h_\sigma(x):=\frac{1}{\sigma}\int_0^1 h(y)\phi\left(\frac{y-x}{\sigma}\right)d y = \frac{1}{\sigma}\int_{-\infty}^{\infty} h(y)\phi\left(\frac{y-x}{\sigma}\right)d y
$$
for $y\in\mathbb{R}$ where $\phi$ is the density function of a $N(0,1)$ random variable. Theorem 6.3.14 in \cite{S99} shows that
\begin{align*}
\|h_\sigma-h\|_2
&=\left(\int_0^1 (h_\sigma(x)-h(x))^2 d x\right)^{1/2}\\
&\leq \left(\int_{-\infty}^{\infty} (h_\sigma(x)-h(x))^2 d x\right)^{1/2}\to 0
\end{align*}
as $\sigma\to 0$. The function $h_\sigma$ is continuously differentiable but it is not necessarily increasing, and so we need to further approximate it by an increasing continuously differentiable function. However, integration by parts yields for all $x\in[0,1]$,
\begin{align*}
D h_\sigma(x)
&= - \frac{1}{\sigma^2}\int_0^1h(y)D\phi\left(\frac{y-x}{\sigma}\right)d y\\
&= - \frac{1}{\sigma}\left(h(1)\phi\left(\frac{1-x}{\sigma}\right)-h(0)\phi\left(\frac{-x}{\sigma}\right)-\int_0^1 \phi\left(\frac{y-x}{\sigma}\right)d h(y)\right)\\
&\geq -\frac{1}{\sigma}\phi\left(\frac{1-x}{\sigma}\right)
\end{align*}
since $h(0)=0$, $h(1)=1$, and $\int_0^1\phi((y-x)\sigma)d h(y)\geq 0$ by $h$ being increasing. 
Therefore, the function
$$
h_{\sigma,\bar{x}}(x)=\begin{cases}
h_{\sigma}(x)+(x/\sigma)\phi((1-\bar{x})/\sigma), & \text{for }x\in[0,\overline{x}]\\
h_\sigma(\bar{x})+(\bar{x}/\sigma)\phi((1-\bar{x})/\sigma), & \text{for }x\in(\overline{x},1]
\end{cases}
$$
defined for all $x\in[0,1]$ and some $\bar{x}\in(0,1)$ is increasing and continuously differentiable for all $x\in(0,1)\backslash\bar{x}$, where it has a kink. Also, setting $\bar{x}=\bar{x}_\sigma=1-\sqrt{\sigma}$ and observing that $0\leq h_\sigma(x)\leq 1$ for all $x\in[0,1]$, we obtain
\begin{align*}
\|h_{\sigma,\overline{x}_\sigma}-h_\sigma\|_2\leq &
\frac{1}{\sigma}\phi\left(\frac{1}{\sqrt{\sigma}}\right)\left(\int_0^{1-\sqrt{\sigma}}d x\right)^{1/2}+\left(1+\frac{1}{\sigma}\phi\left(\frac{1}{\sqrt{\sigma}}\right)\right)\left(\int_{1-\sqrt{\sigma}}^1 d x\right)^{1/2}\to 0
\end{align*}
as $\sigma\to 0$ because $\sigma^{-1}\phi(\sigma^{-1/2})\to 0$. Smoothing the kink of $h_{\sigma,\bar{x}_\sigma}$ and using the triangle inequality, we obtain the asserted claim. This completes the proof of the lemma.
\end{proof}

\begin{lemma}\label{lem: matrix LLN}
Let $(p_1',q_1')',\dots,(p_n',q_n')'$ be a sequence of i.i.d. random vectors where $p_i$'s are vectors in $\mathbb{R}^K$ and $q_i$'s are vectors in $\mathbb{R}^J$. Assume that $\|p_1\|\leq \xi_{n}$, $\|q_1\|\leq \xi_{n}$, $\|\Ep[p_1 p_1']\|\leq C_p$, and $\|\Ep[q_1 q_1']\|\leq C_q$ where $\xi_n\geq 1$. Then for all $t\geq 0$,
$$
\Pr\left(\|\Ep_n[p_i q_i']-\Ep[p_1 q_1']\|\geq t\right)\leq \exp\left(\log(K+J)-\frac{A n t^2}{\xi_n^2(1+t)}\right)
$$
where $A>0$ is a constant depending only on $C_p$ and $C_q$.
\end{lemma}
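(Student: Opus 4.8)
The plan is to recognize the quantity of interest as a normalized sum of i.i.d.\ mean-zero random $K\times J$ matrices and apply the rectangular matrix Bernstein inequality (see, e.g., Tropp's survey on matrix concentration). Writing $\mu:=\Ep[p_1q_1']$ and $X_i:=n^{-1}(p_iq_i'-\mu)$, we have $\Ep_n[p_iq_i']-\Ep[p_1q_1']=\sum_{i=1}^n X_i$, the $X_i$ are independent with $\Ep[X_i]=0$, and it remains only to supply (i) a uniform norm bound $L$ with $\|X_i\|\le L$ almost surely and (ii) a variance proxy $\sigma^2\geq\max\bigl\{\|\sum_{i=1}^n\Ep[X_iX_i']\|,\ \|\sum_{i=1}^n\Ep[X_i'X_i]\|\bigr\}$.

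For the norm bound, $\|p_iq_i'\|=\|p_i\|\,\|q_i\|\leq\xi_n^2$ and $\|\mu\|\leq\Ep\|p_1q_1'\|\leq\xi_n^2$, so one may take $L:=2\xi_n^2/n$. For the variance, a direct expansion gives $\Ep[X_iX_i']=n^{-2}\bigl(\Ep[\|q_i\|^2p_ip_i']-\mu\mu'\bigr)$, since the two cross terms each equal $\mu\mu'$ by $\Ep[p_iq_i']=\mu$ and hence collapse. Because $0\preceq\|q_i\|^2p_ip_i'\preceq\xi_n^2p_ip_i'$ pointwise and $\mu\mu'\succeq0$, we obtain $0\preceq\Ep[\|q_i\|^2p_ip_i']-\mu\mu'\preceq\xi_n^2\Ep[p_ip_i']$, whence $\|\Ep[X_iX_i']\|\leq n^{-2}\xi_n^2\|\Ep[p_1p_1']\|\leq n^{-2}\xi_n^2C_p$; summing over $i$ yields $\|\sum_{i=1}^n\Ep[X_iX_i']\|\leq n^{-1}\xi_n^2C_p$. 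The symmetric computation for $X_i'X_i$ uses $\Ep[X_i'X_i]=n^{-2}\bigl(\Ep[\|p_i\|^2q_iq_i']-\mu'\mu\bigr)$ and $\|p_i\|^2\leq\xi_n^2$ to give $\|\sum_{i=1}^n\Ep[X_i'X_i]\|\leq n^{-1}\xi_n^2C_q$. Thus we may take $\sigma^2:=n^{-1}\xi_n^2\max\{C_p,C_q\}$.

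Plugging $L$ and $\sigma^2$ into the matrix Bernstein tail bound
$$
\Pr\Bigl(\bigl\|\textstyle\sum_{i=1}^nX_i\bigr\|\geq t\Bigr)\leq(K+J)\exp\Bigl(-\frac{t^2/2}{\sigma^2+Lt/3}\Bigr),
$$
and using $\sigma^2+Lt/3\leq n^{-1}\xi_n^2\bigl(\max\{C_p,C_q\}+t\bigr)\leq n^{-1}\xi_n^2\max\{C_p,C_q,1\}(1+t)$, we conclude
$$
\Pr\Bigl(\bigl\|\Ep_n[p_iq_i']-\Ep[p_1q_1']\bigr\|\geq t\Bigr)\leq(K+J)\exp\Bigl(-\frac{Ant^2}{\xi_n^2(1+t)}\Bigr)
$$
with $A:=\bigl(2\max\{C_p,C_q,1\}\bigr)^{-1}$, which depends only on $C_p$ and $C_q$; rewriting the prefactor $(K+J)$ as $\exp(\log(K+J))$ gives the stated inequality. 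No genuine obstacle is expected here: the only step requiring care is the variance calculation, specifically keeping track of \emph{both} one-sided second moments $\Ep[X_iX_i']$ and $\Ep[X_i'X_i]$ (as required for the rectangular version of the inequality) and checking that the resulting bounds involve the data only through $C_p$ and $C_q$; everything else is routine bookkeeping of absolute constants.
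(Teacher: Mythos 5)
Your proposal is correct and follows essentially the same route as the paper: both apply the rectangular matrix Bernstein inequality (Corollary 6.2.1 in Tropp) to the centered summands, bound the summands' norms by a multiple of $\xi_n^2$, and bound the two one-sided second moments by $\xi_n^2$ times $C_p$ or $C_q$ before absorbing constants into $A$. The only differences are cosmetic (you drop the $\mu\mu'$ term via the positive semidefinite ordering where the paper uses the triangle inequality, yielding slightly different but equivalent constants).
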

\begin{remark}
Closely related results have been used previously by \cite{Belloni:2014hl} and \cite{Chen:2013fk}.
\end{remark}
\begin{proof}
The proof follows from Corollary 6.2.1 in \cite{T12}. Below we perform some auxiliary calculations. For any $a\in\mathbb{R}^K$ and $b\in\mathbb{R}^J$,
\begin{align*}
a' \Ep[p_1q_1']b
&=\Ep[(a' p_1)(b' q_1)]\\
&\leq \left(\Ep[(a' p_1)^2]\Ep[(b' q_1)^2]\right)^{1/2}\leq \|a\|\|b\|(C_p C_q)^{1/2}
\end{align*} 
by H\"{o}lder's inequality. Therefore, $\|\Ep[p_1 q_1']\|\leq (C_p C_q)^{1/2}$. Further, denote $S_i:=p_i q_i'-\Ep[p_i q_i']$ for $i=1,\dots, n$. By the triangle inequality and calculations above,
\begin{align*}
\|S_1\|
&\leq \|p_1 q_1'\|+\|\Ep[p_1 q_1']\|\\
&\leq \xi_n^2+(C_p C_q)^{1/2}\leq \xi_n^2(1+(C_p C_q)^{1/2})=:R.
\end{align*}
Now, denote $Z_n:=\sum_{i=1}^n S_i$. Then
\begin{align*}
\|\Ep[Z_n Z_n']\|
&\leq n\|\Ep[S_1 S_1']\|\\
&\leq n\|\Ep[p_1q_1'q_1p_1']\|+n\|\Ep[p_1q_1']\Ep[q_1p_1']\|
\leq n\|\Ep[p_1q_1'q_1p_1']\|+n C_p C_q.
\end{align*}
For any $a\in\mathbb{R}^K$,
$$
a' \Ep[p_1q_1'q_1p_1']a\leq \xi_n^2\Ep[(a' p_1)^2]\leq \xi_n^2\|a\|^2C_p.
$$
Therefore, $\|\Ep[p_1q_1'q_1p_1']\|\leq \xi_n^2 C_p$, and so
$$
\|\Ep[Z_n Z_n']\|\leq n C_p(\xi_n^2+C_q)\leq n\xi_n^2(1+C_p)(1+C_q).
$$
Similarly, $\|\Ep[Z_n' Z_n]\|\leq n\xi_n^2(1+C_p)(1+C_q)$, and so
$$
\sigma^2:=\max(\|\Ep[Z_n Z_n']\|,\|\Ep[Z_n' Z_n]\|)\leq n\xi_n^2(1+C_p)(1+C_q).
$$
Hence, by Corollary 6.2.1 in \cite{T12},
\begin{align*}
\Pr\left(\|n^{-1}Z_n\|\geq t\right)
&\leq (K+J)\exp\left(-\frac{n^2t^2/2}{\sigma^2+2nRt/3}\right)\\
&\leq \exp\left(\log(K+J)-\frac{Ant^2}{\xi_n^2(1+t)}\right).
\end{align*}
This completes the proof of the lemma.
\end{proof}

\bibliographystyle{economet}
\bibliography{ref}

\clearpage
\begin{figure}[!tp]
	\centering
	\includegraphics[width=\textwidth]{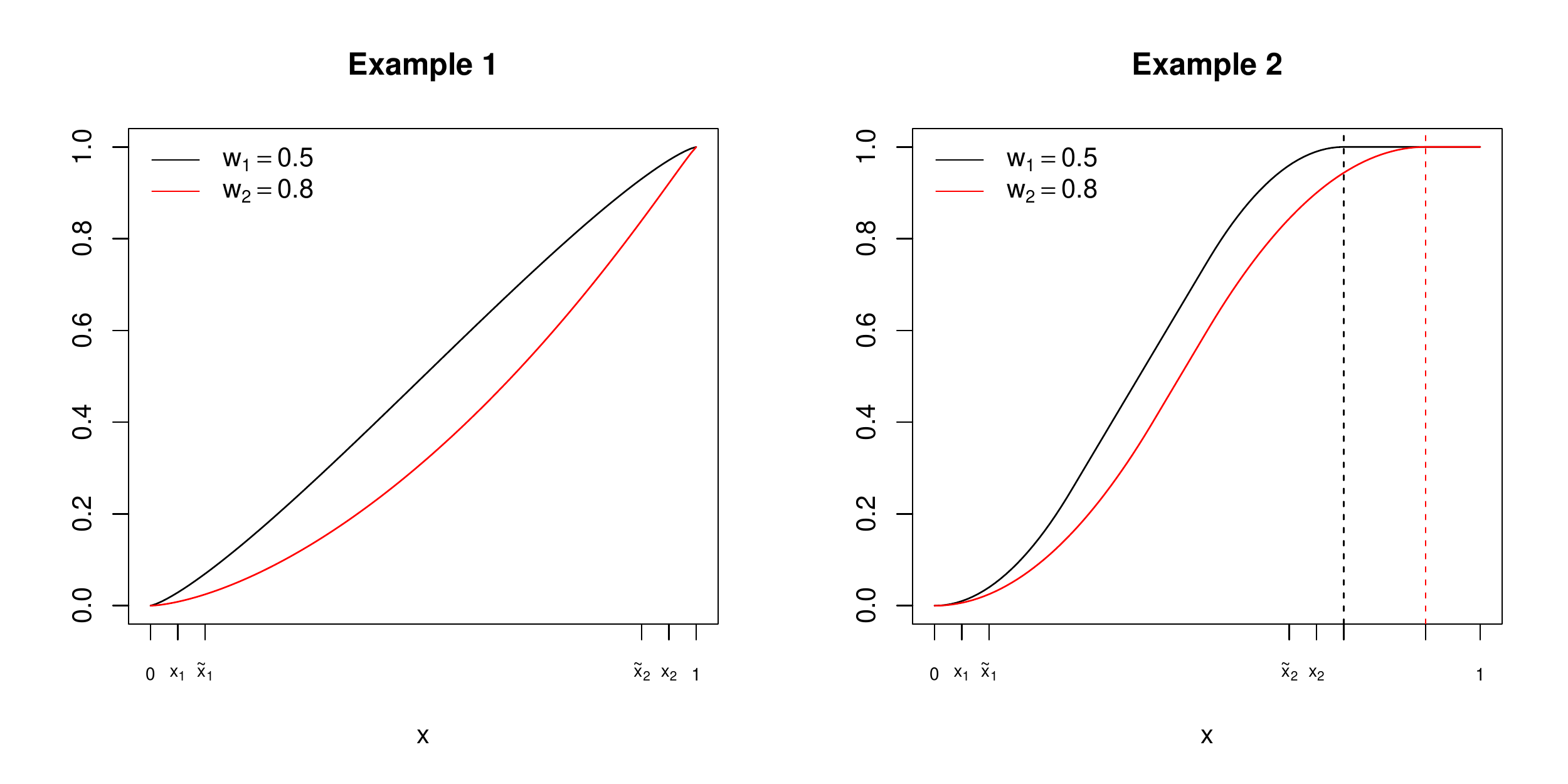}
	\caption{Plots of $F_{X|W}(x|w_1)$ and $F_{X|W}(x|w_2)$ in Examples~\ref{ex: normal density} and \ref{ex: uniforms}, respectively.}
	\label{fig:cdfXW}
\end{figure}

\begin{table}[htp]\scriptsize
\begin{center}\begin{tabular}{rrrlrrrrrrrrrr}
\hline\hline\\
& & & && \multicolumn{8}{c}{Model 1}\\\cline{6-13}
& & & && \multicolumn{2}{c}{$\kappa=1$} & & \multicolumn{2}{c}{$\kappa=0.5$} && \multicolumn{2}{c}{$\kappa=0.1$} \\\cline{6-7} \cline{9-10} \cline{12-13}
$\sigma_{\varepsilon}$ & $k_X$ & $k_W$ & && uncon. & con. && uncon. & con. && uncon. & con.  \\\hline

$0.1$ &  2  & 3 & bias sq.  & & 0.000                     & 0.001 &                           & 0.000 & 0.000                     &  & 0.000 & 0.000 \\
      &     &   & var       & & 0.021                     & 0.007 &                           & 0.005 & 0.002                     &  & 0.000 & 0.000 \\
      &     &   & MSE       & & 0.021                     & 0.009 &                           & 0.005 & 0.002                     &  & 0.000 & 0.000 \\
      &     &   & MSE ratio & & \multicolumn{2}{r}{0.406} &       & \multicolumn{2}{r}{0.409} &       & \multicolumn{2}{r}{0.347} \\

      \\ 		
      &  2  & 5 & bias sq.  & & 0.000                     & 0.001 &                           & 0.000 & 0.000                     &  & 0.000 & 0.000 \\
      &     &   & var       & & 0.009                     & 0.004 &                           & 0.002 & 0.001                     &  & 0.000 & 0.000 \\
      &     &   & MSE       & & 0.009                     & 0.005 &                           & 0.002 & 0.001                     &  & 0.000 & 0.000 \\
      &     &   & MSE ratio & & \multicolumn{2}{r}{0.529} &       & \multicolumn{2}{r}{0.510} &       & \multicolumn{2}{r}{0.542} \\

      \\ 		
      &  3  & 4 & bias sq.  & & 0.000                     & 0.001 &                           & 0.000 & 0.000                     &  & 0.000 & 0.000 \\
      &     &   & var       & & 0.026                     & 0.009 &                           & 0.005 & 0.002                     &  & 0.000 & 0.000 \\
      &     &   & MSE       & & 0.026                     & 0.009 &                           & 0.005 & 0.002                     &  & 0.000 & 0.000 \\
      &     &   & MSE ratio & & \multicolumn{2}{r}{0.355} &       & \multicolumn{2}{r}{0.412} &       & \multicolumn{2}{r}{0.372} \\
		  
      \\ 		
      &  3  & 7 & bias sq.  & & 0.000                     & 0.000 &                           & 0.000 & 0.000                     &  & 0.000 & 0.000 \\
      &     &   & var       & & 0.013                     & 0.005 &                           & 0.003 & 0.001                     &  & 0.000 & 0.000 \\
      &     &   & MSE       & & 0.013                     & 0.005 &                           & 0.003 & 0.001                     &  & 0.000 & 0.000 \\
      &     &   & MSE ratio & & \multicolumn{2}{r}{0.405} &       & \multicolumn{2}{r}{0.486} &       & \multicolumn{2}{r}{0.605} \\
		  
      \\ 		
      &  5  & 8 & bias sq.  & & 0.000                     & 0.000 &                           & 0.000 & 0.000                     &  & 0.000 & 0.000 \\
      &     &   & var       & & 0.027                     & 0.007 &                           & 0.005 & 0.002                     &  & 0.000 & 0.000 \\
      &     &   & MSE       & & 0.027                     & 0.007 &                           & 0.005 & 0.002                     &  & 0.000 & 0.000 \\
      &     &   & MSE ratio & & \multicolumn{2}{r}{0.266} &       & \multicolumn{2}{r}{0.339} &       & \multicolumn{2}{r}{0.411} \\

      \\
$0.7$ &  2  & 3 & bias sq.  & & 0.001                     & 0.020 &                           & 0.000 & 0.005                     &  & 0.000 & 0.000 \\
      &     &   & var       & & 0.857                     & 0.097 &                           & 0.263 & 0.024                     &  & 0.012 & 0.001 \\
      &     &   & MSE       & & 0.857                     & 0.118 &                           & 0.263 & 0.029                     &  & 0.012 & 0.001 \\
      &     &   & MSE ratio & & \multicolumn{2}{r}{0.137} &       & \multicolumn{2}{r}{0.110} &       & \multicolumn{2}{r}{0.101} \\

      \\ 		
      &  2  & 5 & bias sq.  & & 0.001                     & 0.015 &                           & 0.000 & 0.004                     &  & 0.000 & 0.000 \\
      &     &   & var       & & 0.419                     & 0.080 &                           & 0.102 & 0.020                     &  & 0.004 & 0.001 \\
      &     &   & MSE       & & 0.420                     & 0.095 &                           & 0.102 & 0.024                     &  & 0.004 & 0.001 \\
      &     &   & MSE ratio & & \multicolumn{2}{r}{0.227} &       & \multicolumn{2}{r}{0.235} &       & \multicolumn{2}{r}{0.221} \\

      \\ 		
      &  3  & 4 & bias sq.  & & 0.001                     & 0.016 &                           & 0.000 & 0.004                     &  & 0.000 & 0.000 \\
      &     &   & var       & & 0.763                     & 0.104 &                           & 0.223 & 0.026                     &  & 0.010 & 0.001 \\
      &     &   & MSE       & & 0.763                     & 0.121 &                           & 0.223 & 0.030                     &  & 0.010 & 0.001 \\
      &     &   & MSE ratio & & \multicolumn{2}{r}{0.158} &       & \multicolumn{2}{r}{0.133} &       & \multicolumn{2}{r}{0.119} \\
		  
      \\ 		
      &  3  & 7 & bias sq.  & & 0.001                     & 0.011 &                           & 0.000 & 0.003                     &  & 0.000 & 0.000 \\
      &     &   & var       & & 0.350                     & 0.083 &                           & 0.104 & 0.020                     &  & 0.004 & 0.001 \\
      &     &   & MSE       & & 0.351                     & 0.094 &                           & 0.104 & 0.023                     &  & 0.004 & 0.001 \\
      &     &   & MSE ratio & & \multicolumn{2}{r}{0.267} &       & \multicolumn{2}{r}{0.218} &       & \multicolumn{2}{r}{0.229} \\
		  
      \\ 		
      &  5  & 8 & bias sq.  & & 0.001                     & 0.011 &                           & 0.000 & 0.003                     &  & 0.000 & 0.000 \\
      &     &   & var       & & 0.433                     & 0.094 &                           & 0.131 & 0.023                     &  & 0.006 & 0.001 \\
      &     &   & MSE       & & 0.434                     & 0.105 &                           & 0.131 & 0.025                     &  & 0.006 & 0.001 \\
      &     &   & MSE ratio & & \multicolumn{2}{r}{0.243} &       & \multicolumn{2}{r}{0.193} &       & \multicolumn{2}{r}{0.170} \\

\hline\hline
\end{tabular} \caption{Model 1: Performance of the unconstrained and constrained estimators for $N=500$, $\rho=0.3$, $\eta=0.3$.}
\label{tab:model1n500}
\end{center}
\end{table}

\begin{table}[htp]\scriptsize
\begin{center}\begin{tabular}{rrrlrrrrrrrrrr}
\hline\hline\\
& & & && \multicolumn{8}{c}{Model 2}\\\cline{6-13}
& & & && \multicolumn{2}{c}{$\kappa=1$} & & \multicolumn{2}{c}{$\kappa=0.5$} && \multicolumn{2}{c}{$\kappa=0.1$} \\\cline{6-7} \cline{9-10} \cline{12-13}
$\sigma_{\varepsilon}$ & $k_X$ & $k_W$ & && uncon. & con. && uncon. & con. && uncon. & con.  \\\hline

$0.1$ &  2  & 3 & bias sq.  & & 0.001                     & 0.002 &                           & 0.000 & 0.001                     &  & 0.000 & 0.000 \\
      &     &   & var       & & 0.024                     & 0.003 &                           & 0.007 & 0.001                     &  & 0.000 & 0.000 \\
      &     &   & MSE       & & 0.024                     & 0.006 &                           & 0.007 & 0.001                     &  & 0.000 & 0.000 \\
      &     &   & MSE ratio & & \multicolumn{2}{r}{0.229} &       & \multicolumn{2}{r}{0.201} &       & \multicolumn{2}{r}{0.222} \\

      \\ 		
      &  2  & 5 & bias sq.  & & 0.001                     & 0.002 &                           & 0.000 & 0.000                     &  & 0.000 & 0.000 \\
      &     &   & var       & & 0.010                     & 0.002 &                           & 0.002 & 0.001                     &  & 0.000 & 0.000 \\
      &     &   & MSE       & & 0.011                     & 0.004 &                           & 0.002 & 0.001                     &  & 0.000 & 0.000 \\
      &     &   & MSE ratio & & \multicolumn{2}{r}{0.405} &       & \multicolumn{2}{r}{0.475} &       & \multicolumn{2}{r}{0.446} \\

      \\ 		
      &  3  & 4 & bias sq.  & & 0.000                     & 0.001 &                           & 0.000 & 0.000                     &  & 0.000 & 0.000 \\
      &     &   & var       & & 0.022                     & 0.003 &                           & 0.006 & 0.001                     &  & 0.000 & 0.000 \\
      &     &   & MSE       & & 0.022                     & 0.004 &                           & 0.006 & 0.001                     &  & 0.000 & 0.000 \\
      &     &   & MSE ratio & & \multicolumn{2}{r}{0.192} &       & \multicolumn{2}{r}{0.176} &       & \multicolumn{2}{r}{0.157} \\
		  
      \\ 		
      &  3  & 7 & bias sq.  & & 0.000                     & 0.001 &                           & 0.000 & 0.000                     &  & 0.000 & 0.000 \\
      &     &   & var       & & 0.009                     & 0.002 &                           & 0.002 & 0.001                     &  & 0.000 & 0.000 \\
      &     &   & MSE       & & 0.009                     & 0.003 &                           & 0.002 & 0.001                     &  & 0.000 & 0.000 \\
      &     &   & MSE ratio & & \multicolumn{2}{r}{0.325} &       & \multicolumn{2}{r}{0.292} &       & \multicolumn{2}{r}{0.323} \\
		  
      \\ 		
      &  5  & 8 & bias sq.  & & 0.000                     & 0.001 &                           & 0.000 & 0.000                     &  & 0.000 & 0.000 \\
      &     &   & var       & & 0.014                     & 0.003 &                           & 0.003 & 0.001                     &  & 0.000 & 0.000 \\
      &     &   & MSE       & & 0.014                     & 0.004 &                           & 0.003 & 0.001                     &  & 0.000 & 0.000 \\
      &     &   & MSE ratio & & \multicolumn{2}{r}{0.269} &       & \multicolumn{2}{r}{0.268} &       & \multicolumn{2}{r}{0.217} \\

\\
$0.7$ &  2  & 3 & bias sq.  & & 0.002                     & 0.005 &                           & 0.001 & 0.001                     &  & 0.000 & 0.000 \\
      &     &   & var       & & 1.102                     & 0.032 &                           & 0.321 & 0.008                     &  & 0.012 & 0.000 \\
      &     &   & MSE       & & 1.104                     & 0.038 &                           & 0.321 & 0.009                     &  & 0.012 & 0.000 \\
      &     &   & MSE ratio & & \multicolumn{2}{r}{0.034} &       & \multicolumn{2}{r}{0.029} &       & \multicolumn{2}{r}{0.032} \\

      \\ 		
      &  2  & 5 & bias sq.  & & 0.001                     & 0.006 &                           & 0.000 & 0.002                     &  & 0.000 & 0.000 \\
      &     &   & var       & & 0.462                     & 0.031 &                           & 0.103 & 0.008                     &  & 0.004 & 0.000 \\
      &     &   & MSE       & & 0.463                     & 0.037 &                           & 0.104 & 0.009                     &  & 0.004 & 0.000 \\
      &     &   & MSE ratio & & \multicolumn{2}{r}{0.080} &       & \multicolumn{2}{r}{0.088} &       & \multicolumn{2}{r}{0.088} \\

      \\ 		
      &  3  & 4 & bias sq.  & & 0.001                     & 0.004 &                           & 0.000 & 0.001                     &  & 0.000 & 0.000 \\
      &     &   & var       & & 0.936                     & 0.036 &                           & 0.255 & 0.009                     &  & 0.012 & 0.000 \\
      &     &   & MSE       & & 0.936                     & 0.040 &                           & 0.255 & 0.010                     &  & 0.012 & 0.000 \\
      &     &   & MSE ratio & & \multicolumn{2}{r}{0.043} &       & \multicolumn{2}{r}{0.039} &       & \multicolumn{2}{r}{0.034} \\
		  
      \\ 		
      &  3  & 7 & bias sq.  & & 0.001                     & 0.005 &                           & 0.000 & 0.001                     &  & 0.000 & 0.000 \\
      &     &   & var       & & 0.387                     & 0.035 &                           & 0.110 & 0.009                     &  & 0.004 & 0.000 \\
      &     &   & MSE       & & 0.388                     & 0.040 &                           & 0.110 & 0.010                     &  & 0.004 & 0.000 \\
      &     &   & MSE ratio & & \multicolumn{2}{r}{0.103} &       & \multicolumn{2}{r}{0.089} &       & \multicolumn{2}{r}{0.092} \\
		  
      \\ 		
      &  5  & 8 & bias sq.  & & 0.002                     & 0.005 &                           & 0.000 & 0.001                     &  & 0.000 & 0.000 \\
      &     &   & var       & & 0.508                     & 0.041 &                           & 0.144 & 0.010                     &  & 0.007 & 0.000 \\
      &     &   & MSE       & & 0.510                     & 0.046 &                           & 0.144 & 0.011                     &  & 0.007 & 0.000 \\
      &     &   & MSE ratio & & \multicolumn{2}{r}{0.090} &       & \multicolumn{2}{r}{0.078} &       & \multicolumn{2}{r}{0.065} \\

\hline\hline
\end{tabular} \caption{Model 2: Performance of the unconstrained and constrained estimators for $N=500$, $\rho=0.3$, $\eta=0.3$.}
\label{tab:model2n500}
\end{center}
\end{table}

\begin{table}[htp]
\begin{center}\begin{tabular}{rrlrrrrrrrrrr}
\hline\hline\\
& & && \multicolumn{8}{c}{Model 1}\\\cline{5-12}
& & && \multicolumn{2}{c}{$\kappa=1$} & & \multicolumn{2}{c}{$\kappa=0.5$} && \multicolumn{2}{c}{$\kappa=0.1$} \\\cline{5-6} \cline{8-9} \cline{11-12}
$\rho$ & $\eta$ & && uncon. & con. && uncon. & con. && uncon. & con.  \\\hline

0.3 &  0.3 & bias sq.  & & 0.000                     & 0.001 &                           & 0.000 & 0.000                     &  & 0.000 & 0.000 \\
    &      & var       & & 0.026                     & 0.009 &                           & 0.005 & 0.002                     &  & 0.000 & 0.000 \\
    &      & MSE       & & 0.026                     & 0.009 &                           & 0.005 & 0.002                     &  & 0.000 & 0.000 \\
    &      & MSE ratio & & \multicolumn{2}{r}{0.355} &       & \multicolumn{2}{r}{0.412} &       & \multicolumn{2}{r}{0.372} \\

    \\ 		
0.3 &  0.7 & bias sq.  & & 0.000                     & 0.001 &                           & 0.000 & 0.000                     &  & 0.000 & 0.000 \\
    &      & var       & & 0.026                     & 0.008 &                           & 0.005 & 0.002                     &  & 0.000 & 0.000 \\
    &      & MSE       & & 0.026                     & 0.009 &                           & 0.005 & 0.002                     &  & 0.000 & 0.000 \\
    &      & MSE ratio & & \multicolumn{2}{r}{0.342} &       & \multicolumn{2}{r}{0.395} &       & \multicolumn{2}{r}{0.449} \\

    \\ 		
0.7 &  0.3 & bias sq.  & & 0.000                     & 0.001 &                           & 0.000 & 0.000                     &  & 0.000 & 0.000 \\
    &      & var       & & 0.025                     & 0.002 &                           & 0.003 & 0.001                     &  & 0.000 & 0.000 \\
    &      & MSE       & & 0.025                     & 0.003 &                           & 0.003 & 0.001                     &  & 0.000 & 0.000 \\
    &      & MSE ratio & & \multicolumn{2}{r}{0.125} &       & \multicolumn{2}{r}{0.248} &       & \multicolumn{2}{r}{0.266} \\
		  
    \\ 		
0.7 &  0.7 & bias sq.  & & 0.000                     & 0.001 &                           & 0.000 & 0.000                     &  & 0.000 & 0.000 \\
    &      & var       & & 0.023                     & 0.002 &                           & 0.004 & 0.001                     &  & 0.000 & 0.000 \\
    &      & MSE       & & 0.023                     & 0.003 &                           & 0.004 & 0.001                     &  & 0.000 & 0.000 \\
    &      & MSE ratio & & \multicolumn{2}{r}{0.136} &       & \multicolumn{2}{r}{0.212} &       & \multicolumn{2}{r}{0.259} \\
\hline\hline
\end{tabular} \caption{Model 1: Performance of the unconstrained and constrained estimators for $\sigma_{\varepsilon}=0.1$, $k_X=3$, $k_W=4$, $N=500$.}
\label{tab:re_model1}
\end{center}
\end{table}

\begin{table}[htp]
\begin{center}\begin{tabular}{rrlrrrrrrrrrr}
\hline\hline\\
& & && \multicolumn{8}{c}{Model 2}\\\cline{5-12}
& & && \multicolumn{2}{c}{$\kappa=1$} & & \multicolumn{2}{c}{$\kappa=0.5$} && \multicolumn{2}{c}{$\kappa=0.1$} \\\cline{5-6} \cline{8-9} \cline{11-12}
$\rho$ & $\eta$ & && uncon. & con. && uncon. & con. && uncon. & con.  \\\hline

0.3 & 0.3 & bias sq.  & & 0.000                     & 0.001 &                           & 0.000 & 0.000                          & & 0.000 & 0.000 \\
    &     & var       & & 0.022                     & 0.003 &                           & 0.006 & 0.001                          & & 0.000 & 0.000 \\
    &     & MSE       & & 0.022                     & 0.004 &                           & 0.006 & 0.001                          & & 0.000 & 0.000 \\
    &     & MSE ratio & & \multicolumn{2}{r}{0.192} &       & \multicolumn{2}{r}{0.176} &       & \multicolumn{2}{r}{0.157} \\

\\		   
0.3 & 0.7 & bias sq.  & & 0.000                     & 0.001 &                           & 0.000 & 0.000                          & & 0.000 & 0.000 \\
    &     & var       & & 0.020                     & 0.003 &                           & 0.006 & 0.001                          & & 0.000 & 0.000 \\
    &     & MSE       & & 0.020                     & 0.004 &                           & 0.006 & 0.001                          & & 0.000 & 0.000 \\
    &     & MSE ratio & & \multicolumn{2}{r}{0.209} &       & \multicolumn{2}{r}{0.163} &       & \multicolumn{2}{r}{0.160} \\

\\		   
0.7 & 0.3 & bias sq.  & & 0.000                     & 0.000 &                           & 0.000 & 0.000                          & & 0.000 & 0.000 \\
    &     & var       & & 0.013                     & 0.000 &                           & 0.002 & 0.000                          & & 0.000 & 0.000 \\
    &     & MSE       & & 0.013                     & 0.001 &                           & 0.002 & 0.000                          & & 0.000 & 0.000 \\
    &     & MSE ratio & & \multicolumn{2}{r}{0.040} &       & \multicolumn{2}{r}{0.063} &       & \multicolumn{2}{r}{0.047} \\
		  
\\		  
0.7 & 0.7 & bias sq.  & & 0.000                     & 0.000 &                           & 0.000 & 0.000                          & & 0.000 & 0.000   \\
    &     & var       & & 0.010                     & 0.000 &                           & 0.002 & 0.000                          & & 0.000 & 0.000   \\
    &     & MSE       & & 0.011                     & 0.001 &                           & 0.002 & 0.000                          & & 0.000 & 0.000   \\
    &     & MSE ratio & & \multicolumn{2}{r}{0.051} &       & \multicolumn{2}{r}{0.060} &       & \multicolumn{2}{r}{0.050}   \\
\hline\hline
\end{tabular} \caption{Model 2: Performance of the unconstrained and constrained estimators for $\sigma_{\varepsilon}=0.1$, $k_X=3$, $k_W=4$, $N=500$.}
\label{tab:re_model2}
\end{center}
\end{table}

\clearpage
\begin{figure}[!tp]
	\centering
	\includegraphics[width=\textwidth]{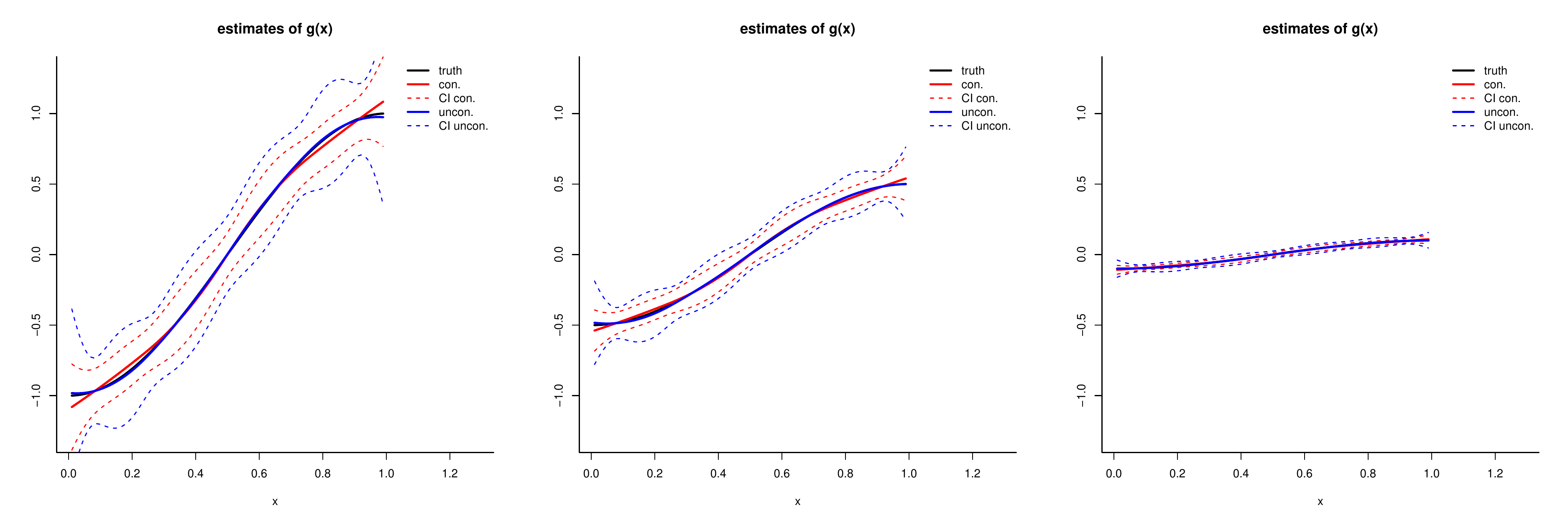}
	\caption{Model 1: unconstrained and constrained estimates of $g(x)$ for $N=500$, $\rho=0.3$, $\eta=0.3$, $\sigma_{\varepsilon}=0.1$, $k_X=3$, $k_W=4$.}
	\label{fig:estim_model1}
\end{figure}

\begin{figure}[!tp]
	\centering
	\includegraphics[width=\textwidth]{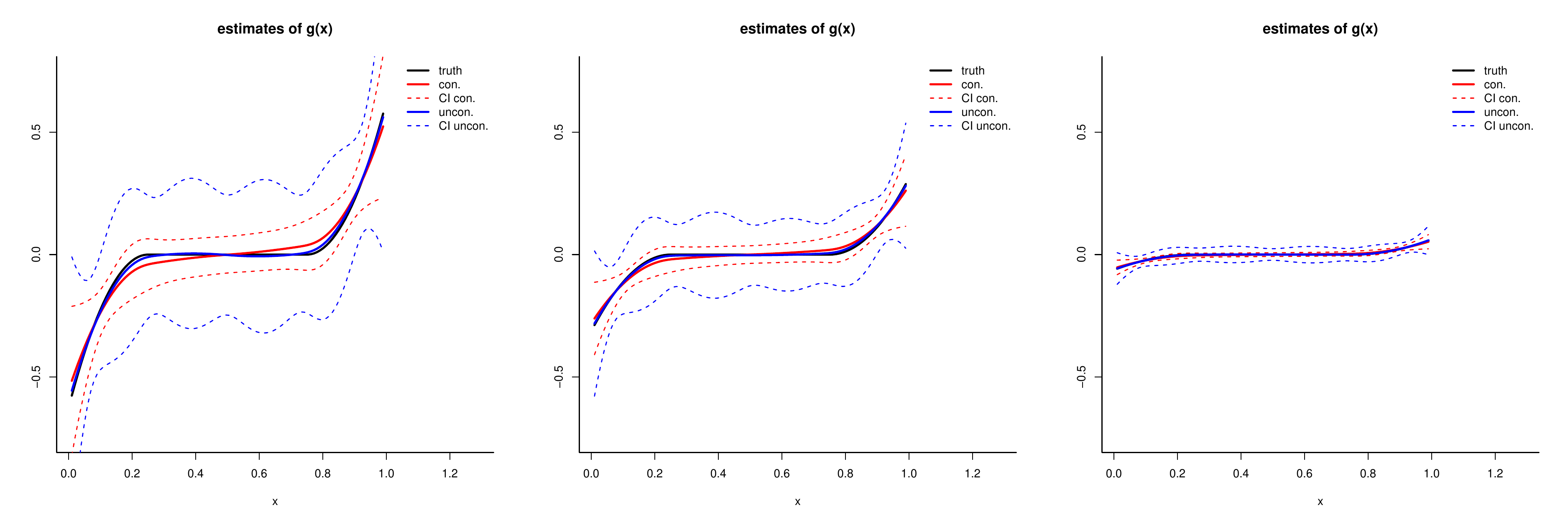}
	\caption{Model 2: unconstrained and constrained estimates of $g(x)$ for $N=500$, $\rho=0.3$, $\eta=0.3$, $\sigma_{\varepsilon}=0.1$, $k_X=3$, $k_W=4$.}
	\label{fig:estim_model2}
\end{figure}

\clearpage

\begin{figure}[!tp]
	\centering
	\includegraphics[width=0.8\textwidth]{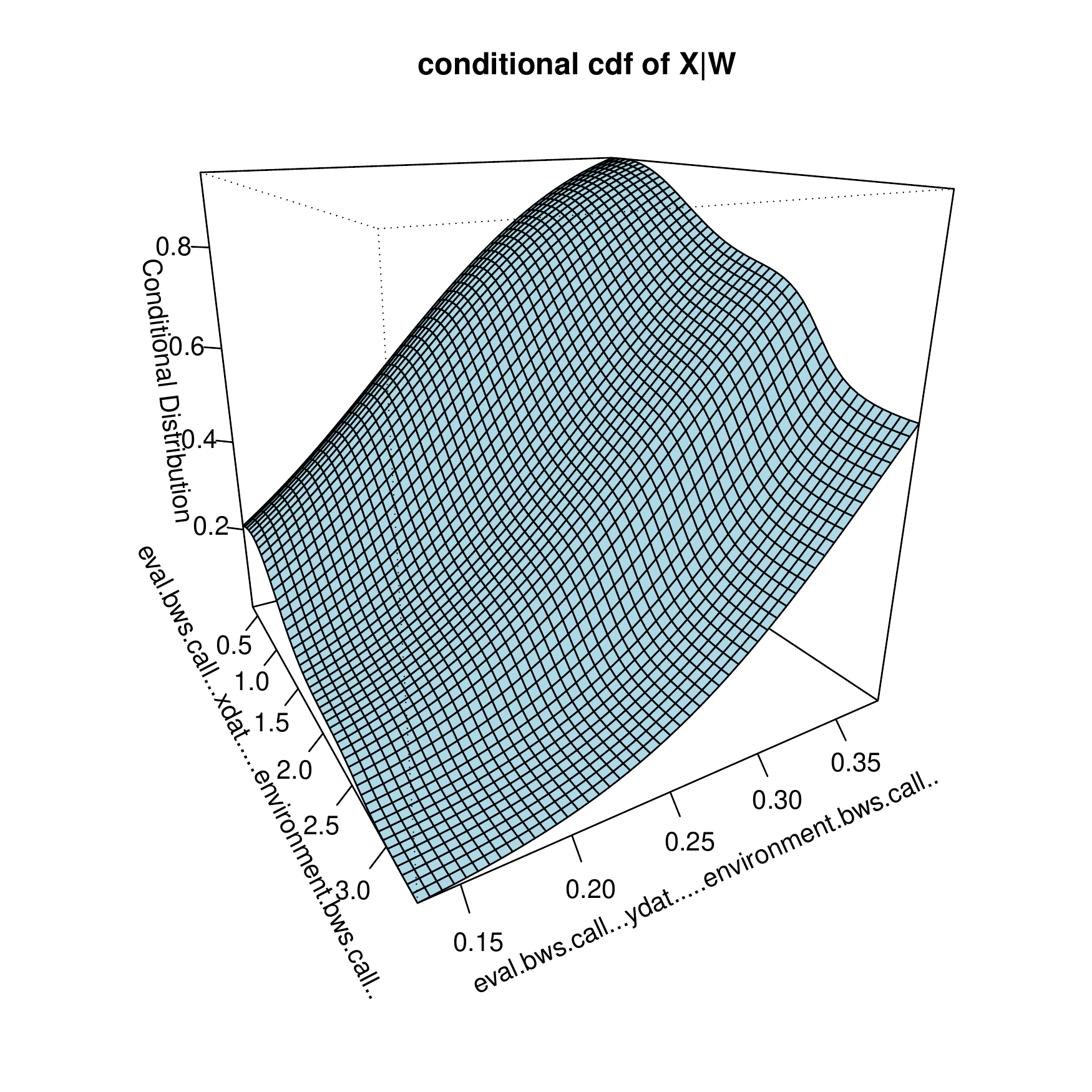}
	\caption{Nonparametric kernel estimate of the conditional cdf $F_{X|W}(x|w)$.}
	\label{fig:cdfXWhat}
\end{figure}


\begin{figure}[!tp]
	\centering
	\includegraphics[width=\textwidth]{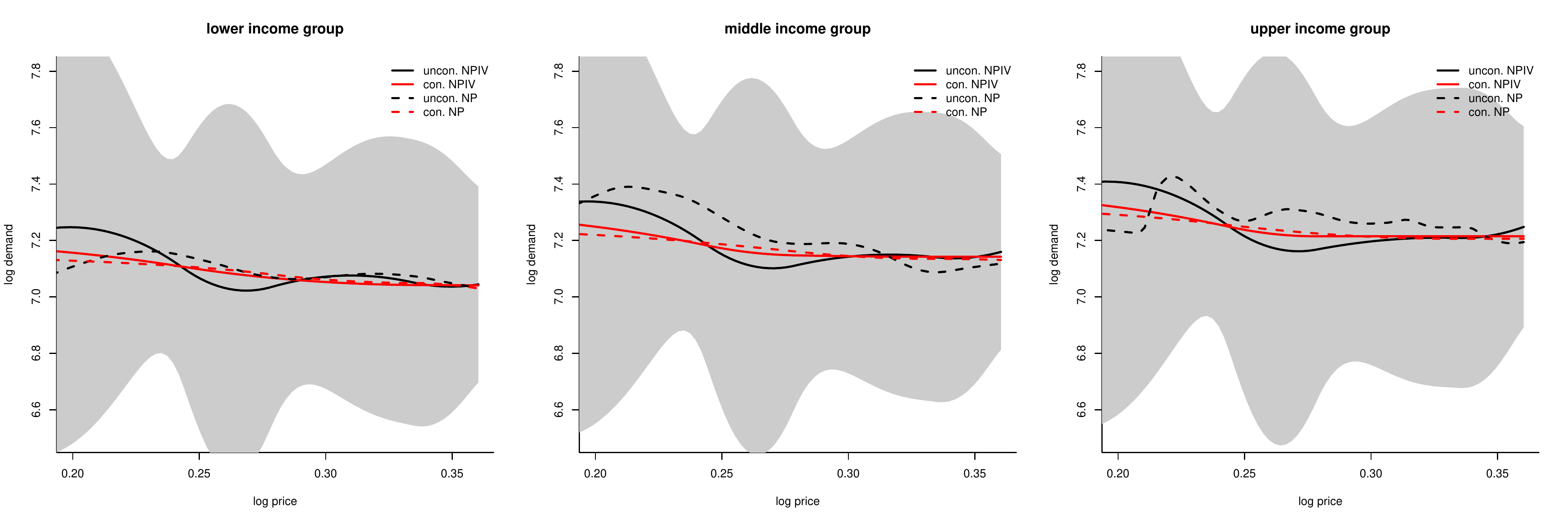}
	\caption{Estimates of $g(x,z_1)$ plotted as a function of price $x$ for $z_1$ fixed at three income levels.}
	\label{fig:emp with cov}
\end{figure}

\end{document}